\newcommand{\bigo}{{\mathrm{O}}}
\newcommand{\height}{{\mathrm{height}}}
\newcommand{\ymin}{{\mathrm{ymin}}}
\newcommand{\ymax}{{\mathrm{ymax}}}
\newcommand{\hmax}{{\mathrm{hmax}}}
\newcommand{\m}{{\mathrm{m}}}
\newcommand{\M}{{\mathrm{M}}}
\newcommand{\val}{{\mathrm{v}}}
\newcommand{\action}{{\mathrm{vtype}}}
\newcommand{\set}{{\mathrm{on}}}
\newcommand{\low}{{\mathrm{above}}}
\newcommand{\rai}{{\mathrm{below}}}
\newcommand{\dist}{{\mathrm{dist}}}
\newtheorem{theorem}{Theorem}%[section]   
\newtheorem{lemma}{Lemma}%[section]
\newtheorem{definition}{Definition}%[section]
\begin{document}

%\markboth{Victor Milenkovic, Elisha Sacks, and Steven Trac}{Planar Shape Manipulation Using Approximate Geometric Primitives}

%\catchline

\title{PLANAR SHAPE MANIPULATION USING APPROXIMATE GEOMETRIC PRIMITIVES}

\author{Victor Milenkovic\thanks{Department of Computer Science, University of Miami.
Coral Gables, FL 33124-4245, USA.
vjm@cs.miami.edu}
and
Elisha Sacks\thanks{Computer Science Department, Purdue University.
West Lafayette, IN 47907-2066, USA.
eps@cs.purdue.edu}
and
Steven Trac\thanks{Department of Computer Science, University of Miami.
Coral Gables, FL 33124-4245, USA.
strac@cs.miami.edu}}

\maketitle

\begin{abstract}
We present robust algorithms for set operations and Euclidean transformations of
curved shapes in the plane using approximate geometric primitives.  We use a
refinement algorithm to ensure consistency.  Its computational complexity is
$\bigo(n\log n+k)$ for an input of size $n$ with $k=\bigo(n^2)$ consistency
violations.  The output is as accurate as the geometric primitives.  We validate
our algorithms in floating point using sequences of six set operations and
Euclidean transforms on shapes bounded by curves of algebraic degree~1 to~6.  We
test generic and degenerate inputs.

%\keywords{robust computational geometry, plane subdivisions, set operations.}
\end{abstract}

\section{Introduction}

Set operations and Euclidean transformations are important computational
geometry tasks with many applications.  There are efficient algorithms for
planar shapes, which are more common and simpler than 3D shapes.  A shape is
modeled as a subdivision: a partition of the plane into faces, curves, and
points.  Fig.~\ref{f-region}a shows a subdivision with faces $f_0$ and $f_1$,
curves $e_1,\ldots,e_4$, and points indicated with dots.  A set operation is
performed by constructing the mutual refinement of the input subdivisions,
called the {\em overlay,} and returning the faces that satisfy the set operator
(Fig.~\ref{f-region}c).  A Euclidean transformation is performed by transforming
the points, curves, and faces of the input subdivision.

\begin{figure}[tbp]
\centering
\begin{tabular}{ccc}
\raisebox{0.1875in}{\includegraphics{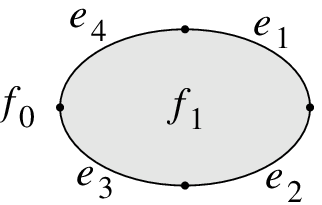}} & 
\includegraphics{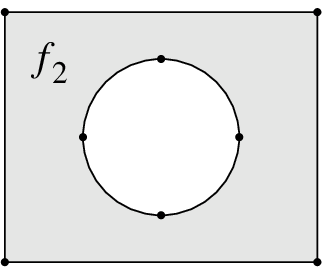} &
\includegraphics{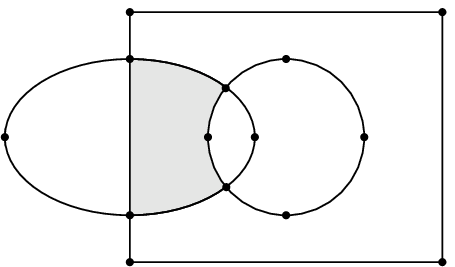}\\
(a) & (b) & (c)
\end{tabular}
\caption{Subdivisions (a, b) and their overlay (c) with $f_1\cap f_2$ shaded.}
\label{f-region}
\end{figure}

Shape manipulation algorithms use geometric primitives that are formulated in
the real-RAM model where real arithmetic is exact and has unit cost.  The {\em
  robustness problem\/} is how to implement these primitives in computer
arithmetic.  The mainstream strategy is to implement primitives exactly using
integer arithmetic and algebraic computation.  We prefer approximate primitives
in floating point because they are much faster and have constant bit complexity.
Although the approximation error is negligible, it can cause prior shape
manipulation algorithms to generate inconsistent or highly inaccurate outputs.
We present shape manipulation algorithms that use approximate primitives yet
generate consistent, accurate outputs.

\subsection{Prior work}

Yap\cite{yap04} describes exact computational geometry.  Exact computation
increases bit complexity, hence running time.  The CGAL library\cite{cgal}
provides floating point filtering techniques that somewhat reduce this cost.
The increase in bit complexity from input to output leads to unbounded
complexity in sequences of computations.  The only current solution is to
simplify the output.  Although simplification algorithms have been developed for
a few domains,\cite{m-spr97,hp-isr-02,eigenwillig07,fortune97} not including
implicit algebraic curves, a general strategy is unknown.

The alternative to exact primitives is approximate primitives.  The robustness
problem is to ensure consistency despite approximation error.

The \emph{controlled perturbation} robustness strategy is to verify that the
predicates in an algorithm are correct despite approximation error.
Verification fails when the predicate value is too close to zero.  Failure is
prevented probabilistically by randomly perturbing the input parameters.  If any
predicate still cannot be verified, the input is perturbed again. Controlled
perturbation has been applied to
arrangements,\cite{r-cpaps-99,hs-pssaamm-98,halperin04a} convex
hulls,\cite{funke05} and Delaunay triangulation.\cite{funke05} The perturbation
size is exponential in the algebraic degree of the primitives.  The degree-four
perturbation size exceeds the error bounds of engineering applications.  We
developed versions of controlled perturbation that solve this problem and used
them to compute Minkowski sums of polyhedra\cite{sacks-milenkovic11} and free
spaces of planar robots.\cite{sacks-milenkovic12} Controlled perturbation does
not work for shape manipulation because it does not handle algebraic curves and
does not address output simplification in sequences of computations.

Our robustness strategy is to eliminate inconsistency due to approximate
geometric primitives by enforcing consistency constraints on the data structures
that represent geometric objects.  We analyze the running time and the accuracy
in terms of the approximation error, $\delta$, and the number of constraint
violations, $k$, but neither parameter is provided to the algorithms.  An
algorithm is {\em inconsistency sensitive\/} when the extra running time for
inconsistency elimination is polynomial in $k$ and $n$, the extra error is
polynomial in $k$ and $n$ and is linear in $\delta$, and $k$ is polynomial in
$n$.  Inconsistency sensitivity captures the concept of an efficient approximate
algorithm: fast and accurate when there are few inconsistencies, and degrading
gracefully.

We developed an inconsistency sensitive arrangement algorithm for
algebraic-curve segments.\cite{sacks-milenkovic06} The $x$-coordinates where
segments cross and the segment $y$-order between crossings are computed with
$(1+kn)\epsilon$ accuracy for $n$ segments, where $\epsilon$ denotes the
error in the approximate geometric primitives of that algorithm.  A consistency
constraint violation is a cyclic $y$-order, due to an incorrect ordering of
segment crossing coordinates.  The algorithm computes a consistent $y$-order of
size $V=2n+N+\min(3kn,n^2/2)$ for $N$ crossings.  The running time is
$\bigo(V\log n)$ and the output is correct for a set of segments within
$(1+kn)\epsilon$ of the output segments.  We also developed two inconsistency
sensitive Minkowski sum algorithms.\cite{sacks-milenkovic10}

The arrangement algorithm can compute a curve $y$-order that is inconsistent
with the curve endpoints, as illustrated below.  In prior work, we developed a
simple solution for the special case where every curve endpoint has a distinct
$x$-coordinate.  We handled the general case heuristically by sweeping along a
random axis.  This strategy proved inadequate even for generic inputs.  In this
paper, we present a general, inconsistency sensitive solution.

\subsection{Inconsistency in the Overlay Algorithm}

Our overlay algorithm uses three geometric primitives: computing intersection
points of monotone curves, computing turning points of non-monotone curves, and
comparing coordinates of points.  The first primitive is required by any overlay
algorithm, while the other two are dictated by our use of a sweep algorithm.
The high level algorithm is as follows.  Split the curves at their turning
points.  Split the monotone curves at their intersection points to obtain
sub-curves.  Derive the $y$-order of each pair of sub-curves that overlap in $x$
by comparing the $y$ coordinates of their intersection points with a vertical
line at the middle of their interval of $x$-overlap.  The details of the sweep
and how it avoids creating cycles appears in previous
work\cite{sacks-milenkovic10} and in Sec.~\ref{s-set}.

The algorithm output, the $x$-order of the curve endpoints and the partial
$y$-order of the sub-curves, allows us to compute the subdivision structure
without additional approximate primitives.  In particular, we can use a standard
sweep algorithm to generate a vertical cell decomposition.  The curve endpoints
are the sweep events in order of increasing $x$.  The sweep list contains curves
in $y$-order whose left endpoint event has been processed and whose right
endpoint has not.  If a subdivision construction algorithm, e.g.
randomized incremental vertical cell decomposition, requires endpoint/curve
order, the $y$-order of endpoint $a$ of curve $ab$ with respect to curve $cd$,
with $c_x<a_x<d_x$, is deduced from the calculated $y$-order of $ab$ and $cd$.

We illustrate that an inconsistent $y$-order can arise in computing the overlay
of a monotone curve, $e_2$, with two monotone curves, $e_1$ and $e_3$, that meet
at a point, $b$ (Fig.~\ref{f-bad}a).  Due to numerical error, the intersection
primitive incorrectly reports that $e_2$ and $e_3$ do not intersect.
Intersections with vertical lines (arrows) order $e_1$ above $e_2$ and $e_2$
above $e_3$.  The incorrect $y$-order of $e_2$ and $e_3$ is correct for a small
deformation of the input (Fig.~\ref{f-bad}b).  But the algorithm output is
inconsistent because no deformation can make $e_2$ be below $e_1$, above $e_3$,
and disjoint from $b$.

\begin{figure}[tbp]
\centering
\begin{tabular}{ccc}
\includegraphics{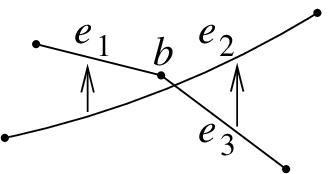} & \includegraphics{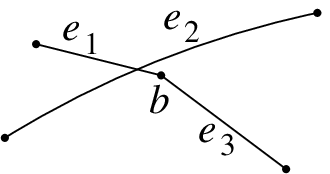} & \includegraphics{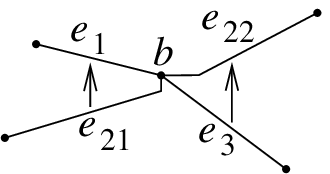}\\
(a) & (b) & (c)
\end{tabular}
\caption{Inconsistent $y$-order (a), $e_2$ deformation (b), and consistent
  refinement (c).}\label{f-bad}
\end{figure}

\subsection{Contribution and Organization}

Our main result is an inconsistency sensitive \emph{refinement algorithm} that
assigns a consistent $y$-order to a set of curves by splitting some curves at
endpoints of other curves (Fig.~\ref{f-bad}c).  The output $y$-order is the
induced refinement of the input $y$-order.  In our example, $e_2$ is split into
$e_{21}$ and $e_{22}$ at $b$, $e_{21}$ is below $e_1$, $e_{22}$ is above $e_3$,
and the other pairs are unordered because their domains are disjoint.  The
precondition is that each {\em pair} of input curves can be placed in the input
$y$-order using a deformation of at most $\delta$.  The postcondition is that a
{\em single} $\delta$-deformation can place {\em all} the output curves in the
output $y$-order.  The running time is $\bigo(n\log n + k)$ for $n$ curves with
$k=\bigo(n^2)$ consistency violations.  We define concepts in
Sec.~\ref{s-concepts}, present the refinement algorithm in Sec.~\ref{s-refine},
and analyze it in Secs.~\ref{s-correct}--\ref{s-error}.

As indicated above, overlay can introduce inconsistencies. Surprisingly, even
transformations can do so.  To accomplish set operations, we start with a
possibly inconsistent overlay, run the refinement algorithm, construct the
faces, and extract the relevant faces (Sec.~\ref{s-set}).  To transform a
subdivision, we transform its curves, split rotated curves at new turning
points, run refinement, and construct the faces (Sec.~\ref{s-transform}). In
this manner, the refinement algorithm enables us to accomplish consistent and
accurate shape manipulation using approximate geometric primitives.

We present an empirical validation that our algorithms are fast and accurate on
sequences of six set operations and Euclidean transformations on shapes bounded
by curves of algebraic degree~1 to~6 (Sec.~\ref{s-validate}).  Moreover, $k$ is
zero for generic input and is small for degenerate input.  We conclude with a
discussion (Sec.~\ref{s-conclude}).

\section{Concepts}\label{s-concepts}

In this section, we define approximate subdivisions and four ways they can be
inconsistent.  Fig.~\ref{f-bad}a is an example of an approximate subdivision and
the depicted inconsistency is an instance of one of these ways.  The refinement
algorithm eliminates these inconsistencies.  We prove that the output represents
a consistent shape (Sec.~\ref{s-correct}) and that its error is bounded by the
input error (Sec.~\ref{s-error}).  Each group of related concepts is defined
informally then formally.

We work in the $xy$ plane.  A point, $a$, has Cartesian coordinates $(a_x,a_y)$.
Curves are open, bounded, and piecewise algebraic.  They are assumed
$x$-monotone and $y$-monotone unless stated otherwise.  A curve, $e$, whose
endpoints are $a$ and $b$ with $a_x\leq b_x$ is denoted $e[a,b]$.  It is
vertical when $a_x=b_x$, horizontal when $a_y=b_y$, increasing when $a_y<b_y$,
and decreasing when $a_y>b_y$.

We represent a subdivision as a set of disjoint curves
(Def.~\ref{def:subdivision}).  The combinatorial structure of the subdivision is
derivable from the $x$-order of the curve endpoints and the $<_y$ partial order
on the curves (Def.~\ref{def:yorder}), as explained in the introduction.

\begin{definition}[subdivision]\label{def:subdivision}
A {\em subdivision\/} is a set of curves in which every member is disjoint from
the closure of every other member.
\end{definition}

\begin{definition}
Sets $s$ and $t$ {\em overlap in $x$\/} if there exist $p \in s$ and $q \in t$
with $p_x = q_x$.
\end{definition}

\begin{definition}[$<_y$]\label{def:yorder}
For two point sets, $s$ and $t$, that overlap in $x$, $s<_y t$ if $p_y < q_y$
for each $p \in s$ and $q \in t$ with $p_x = q_x$.  For a point, $a$, $a <_y s$
means $\{a\} <_y s$.
\end{definition}

We model a shape with an {\em approximate subdivision}: a set of curves with an
imposed acyclic partial $y$-order (Def.~\ref{def:appsub}) denoted $\prec_y$.  An
approximate subdivision is a subdivision when its curves are disjoint from the
closures of other curves and the $\prec_y$ order matches the $<_y$ order.

\begin{definition}[approximate subdivision]\label{def:appsub}
An {\em approximate subdivision\/} is a set of curves with a binary relation,
denoted $\prec_y$, whose transitive closure is irreflexive, such that
$e\prec_yf$ or $f\prec_ye$ iff $e$ and $f$ overlap in $x$.  The notation
$e\preceq_y f$ means $e\prec_y f$ or $e=f$, $e \succ_y f$ means $f \prec_y e$,
and $e \succeq_y f$ means $f \preceq_y e$.
\end{definition}

Fig.~\ref{f-rep}a shows an example in which the $\prec_y$ order for curves that
overlap in $x$ is index order, e.g. $e_3\prec_ye_5$ because $3<5$.  Although
$\prec_y$ agrees with $<_y$ here, they disagree elsewhere.  For example, $e_2$
intersects $e_5$, while $e_9\prec_ye_{10}$ yet $e_{10}<_ye_9$.

\begin{figure}[tbp]
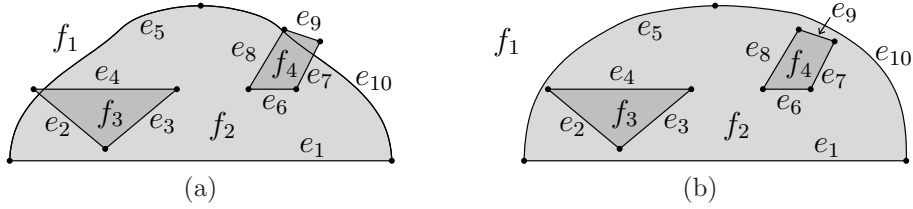

\centering
\begin{tabular}{c@{\hspace{0.5in}}c}
\input{rep2L.pstex_t} & \input{rep1L.pstex_t} \\
(a) & (b)
\end{tabular}
\caption{Approximate subdivision (a) and realization (b).}
\label{f-rep}
\end{figure}

An approximate subdivision (Fig.~\ref{f-rep}a) has a \emph{realization} when an
endpoint-preserving deformation (Fig.~\ref{f-rep}b) of the curves can make the
actual $<_y$ order agree with the imposed $\prec_y$ order
(Def.~\ref{def:realization}).  A realizable approximate subdivision represents
an infinite number of shapes, but their curves all have the same endpoints and
$<_y$ order, hence all these shapes have the same combinatorial structure.  An
unrealizable approximate subdivision does not represent any shape.

\begin{definition}[realization]\label{def:realization}
A {\em realization\/} of an approximate subdivision, $S$, is a function, $r$,
from $S$ onto a subdivision, $T$, such that for all $e\in S$, $e$ and $r(e)$
have the same endpoints and for all $e,f\in S$, $e\prec_yf$ implies
$r(e)<_yr(f)$.
\end{definition}

As indicated in the introduction, the vertical order of an endpoint, $p$, of a
curve, $f$, with respect to a curve $e$, can be deduced from the order of $e$
and $f$.  In this manner, we define the lower and upper sets of endpoints,
$L(e)$ and $U(e)$, with respect to $e$ (Def.~\ref{def:LU}), from the imposed
vertical order $\prec_y$.  We use these sets to define four consistency
constraints (Def.~\ref{def:consistent}).  Our central theorem is that an
approximate subdivision has a realization iff it satisfies these four
consistency constraints (Thm.~\ref{t-iff}).

\begin{definition}[$L(e)$ and $U(e)$]\label{def:LU}
For a curve, $e[a,b]\in S$, an approximate subdivision, $L(e)$ is the set of
endpoints, $p$, of curves, $f\preceq_y e$, with $a_x\leq p_x\leq b_x$; likewise,
$U(e)$ with $f\succeq_y e$.
\end{definition}

\begin{definition}[consistent]\label{def:consistent}
An approximate subdivision, $S$, is {\em consistent\/} if each curve, $e[a,b]\in
S$, satisfies four constraints.
\begin{enumerate}
\item For each $q\in U(e)$, there is no $p\in L(e)$ with $a_x<p_x=q_x<b_x$
  and $p_y\geq q_y$.
\item For each $f[c,d]\succ_y e$, $a_x=c_x$ implies $a_y\leq
  c_y$ and $b_x=d_x$ implies $b_y\leq d_y$.
\item a) For each $p \in L(e)$, there are no $c,d\in U(e)$ with
  $c_x<p_x<d_x$ and $c_y,d_y\leq p_y$. b) For each $q \in U(e)$, there are no
  $c,d\in L(e)$ with $c_x<q_x<d_x$ and $c_y,d_y\geq q_y$.
\item If $e$ is vertical, there is no endpoint, $p$, of curve $f$, with $p_x=a_x$ and
  $p_y\in(a_y,b_y)$, the open interval between $a_y$ and $b_y$.
\end{enumerate}
\end{definition}

Fig.~\ref{f-inconsistent} illustrates the constraints.  Part~(a) violates
constraint~1 if $e\prec_yg$, $f\prec_ye$, $p_x=q_x$, and $p_y\geq q_y$.
(Fig.~\ref{f-bad}a is a special case with $e=e_2$, $f=e_3$, $g=e_1$ and
$p=q=b$.)  Part~(b) violates constraint~2 if $e\prec_yf$, $a_x=c_x$, and
$a_y>c_y$.  Part~(c) violates constraint~3a if $f\prec_ye$, $e\prec_yg$,
$e\prec_yh$, $c_x<p_x<d_x$, and $p_y>c_y,d_y$.  Part~(d) violates constraint~4
if $a_x=p_x=b_x$ and $a_y<p_y<b_y$.  All four of these violations occur in
practice (Sec.~\ref{s-validate}).

\begin{figure}[tbp]
\centering \includegraphics[width=4.5in]{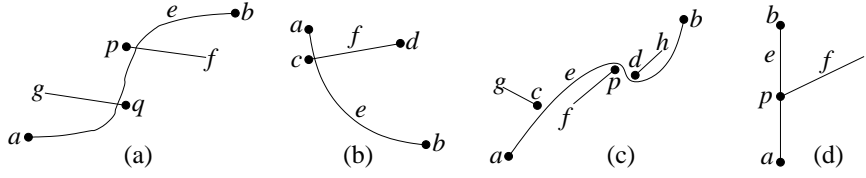}
\caption{Inconsistent approximate subdivisions.}\label{f-inconsistent}
\end{figure}

\setcounter{theorem}{2}
\begin{theorem}
An approximate subdivision is realizable iff it is consistent (p.~\pageref{t-iff}).
\end{theorem}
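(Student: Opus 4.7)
The plan is to prove both directions of the iff; the forward direction is a direct check against any realization, while the backward direction requires an explicit construction.

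For $(\Rightarrow)$, assume $r:S\to T$ is a realization and derive each of the four constraints by contradiction with the geometric properties of the subdivision $T$. For constraint~1, a violation gives $f\prec_y e\prec_y g$ with endpoints $p$ of $f$ and $q$ of $g$ at a common $x$ strictly inside $e$'s domain and $p_y\geq q_y$; the realized $r(e)$ has a definite value $y_e$ there, and $r(f)<_y r(e)<_y r(g)$ combined with endpoint preservation and disjointness-from-closures in $T$ forces $p_y<y_e<q_y$, a contradiction. Constraint~2 likewise forbids $r(e)$ from exiting above $r(f)$ at a shared left endpoint while satisfying $r(e)<_y r(f)$ in a right-neighborhood. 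Constraint~3 rules out a ``sandwich'' in which $r(e)$ must dip below $p_y$ on both sides of $p_x$ yet meet $p_y$ at $p_x$ without crossing $r(f)$. Constraint~4 follows because any endpoint strictly between the endpoints of a vertical $e$ must either lie on $r(e)$ or force the other curve's realization to cross $r(e)$, violating disjointness.

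For $(\Leftarrow)$, construct $r$ by choosing, at each vertical line $x=x_i$ through an endpoint, a $y$-value for every curve passing through or ending there, and then interpolating horizontally. Let $x_1<\cdots<x_m$ be the distinct endpoint $x$-coordinates and $E_i$ the set of curves whose domain meets the strip $(x_i,x_{i+1})$. Since any two curves overlapping in $x$ are $\prec_y$-comparable and the transitive closure is irreflexive, $\prec_y$ restricts to a strict total order on each $E_i$, giving rank-based target levels $\ell_i(e)$ in the interior of the strip. The construction requires, at each $x_i$, an assignment of $y$-values on the vertical line to every curve in $E_{i-1}\cup E_i$ meeting $x_i$ and to every endpoint located at $x_i$, such that (a) each curve's endpoint sits at its prescribed $p_y$, (b) the assignment respects $\prec_y$ (with endpoints compared to passing curves via $L(\cdot)$ and $U(\cdot)$), and (c) distinct assignments collide only at genuinely shared endpoints.

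The main obstacle is the feasibility of this vertical-line assignment. I plan to prove an interleaving lemma: constraint~1 prevents an upper-side endpoint from sitting at or below a lower-side endpoint in the interior of any curve $e$; constraint~2 handles the analogous boundary case when the shared $x$ is an extremum of $e$; constraint~3 forbids the ``squeeze'' configurations that would force a curve to locally reverse its rank between adjacent strips; constraint~4 eliminates the degenerate vertical case. The lemma will show that these four restrictions are exactly what is needed to interleave endpoints with passing curves consistently on each vertical line. Once the assignment exists, smoothly interpolating each $r(e)$ between its assigned values at consecutive vertical lines, with sufficient spacing, yields disjoint realized curves whose $<_y$ order matches $\prec_y$. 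I expect the interleaving lemma, rather than the interpolation step, to absorb most of the technical effort.
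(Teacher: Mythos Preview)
Your forward direction matches the paper's necessity lemma and is sound, though your gloss on constraint~2 is slightly garbled: the actual argument is simply continuity, since $r(e)<_y r(f)$ on the open curves forces $a_y\le c_y$ in the limit at the shared abscissa.

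Your backward direction takes a genuinely different route. The paper does not assign $y$-values on vertical lines and interpolate. Instead, for each non-horizontal, non-vertical $e[a,b]$ it defines an explicit region $R(e)$ by removing from the box $A(a,b)$ the quadrant $p^{+-}$ for every $p\in L(e)$ and the quadrant $q^{-+}$ for every $q\in U(e)$ (mirror quadrants for decreasing $e$). The four constraints are spent showing that the removed lower and upper sets meet only at $\{a,b\}$, so $R(e)$ is bounded above and below by monotone staircases joining $a$ to $b$, and that the family $\{R(e)\}$ is ``ordered'' in the sense of an abstract definition (ordered monotone assignment). A separate lemma then extracts a monotone path $r(e)\subset R(e)$ for each $e$, processing curves in a $\prec_y$-compatible order and subtracting the down-sets of already-chosen paths.

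Your plan has a genuine gap: the curves $r(e)$ in a realization must be $y$-monotone (curves in a subdivision are both $x$- and $y$-monotone under the paper's standing convention), and nothing in your construction enforces this. Your interleaving lemma, as you state it, secures only feasibility \emph{at} each vertical line $x=x_i$: that endpoints and passing curves can be stacked there in a $\prec_y$-compatible order. It does not show that the $y$-values assigned to a single curve $e$ across the successive lines it meets form a monotone sequence, and ``smoothly interpolating \ldots\ with sufficient spacing'' cannot manufacture monotonicity if the assigned values themselves go up and then down. Your one sentence invoking constraint~3 (``forbids the squeeze configurations that would force a curve to locally reverse its rank between adjacent strips'') conflates $\prec_y$-rank with $y$-value: rank shifts whenever other curves begin or end, which is harmless for monotonicity, whereas the real obstruction is a pair $p\in L(e)$, $q\in U(e)$ at different abscissae that together pin $r(e)$ into a non-monotone corridor. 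Ruling that out is exactly where constraints~1--3 do their work in the paper's region construction, and it is the heart of the sufficiency proof. Your proposal locates the effort in the per-line interleaving and relegates cross-line monotonicity to a passing remark, which inverts the actual difficulty.
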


A \emph{refinement} of an approximate subdivision is an approximate subdivision
whose curves are sub-curves of the original curves with the induced $\prec_y$
order (Def.~\ref{def:refappsub}).

\begin{definition}[refinement of curve]
A {\em refinement\/} of a curve, $e[a,b]$, is a set of curves, $C(e)$, that form
an $x$-monotone chain from $a$ to $b$.
\end{definition}

\begin{definition}[refinement of approximate subdivision]\label{def:refappsub}
A {\em refinement\/} of an approximate subdivision, $S$, is an approximate
subdivision, $S'$, with the same set of endpoints in which each curve, $e\in S$,
maps to a refinement, $C(e)\subset S'$, and $e\prec_yf$ in $S$ implies
$e'\prec_yf'$ in $S'$ for every $e'\in C(e)$ and $f'\in C(f)$ that overlap in
$x$.
\end{definition}

Fig.~\ref{f-refinement} shows consistent refinements of the inconsistent
approximate subdivisions from Fig.~\ref{f-inconsistent} with
$C(e)=\{e_1,e_2,e_3\}$ in parts~(a) and~(c) and $C(e)=\{e_1,e_2\}$ in parts~(b)
and~(d).  The refinements are realizations, except in part~(c) where $e_1<_yf$
but $f\prec_y e$, and $h<_ye_3$ but $e\prec_y h$.  However, a realization exists
by Thm.~\ref{t-iff}.

\begin{figure}[tbp]
\centering
\includegraphics[width=4.5in]{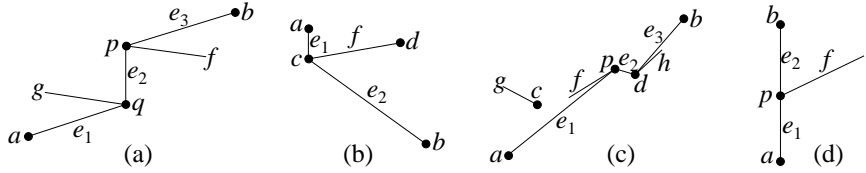}
\caption{Refinements of the approximate subdivisions in Fig.~\ref{f-inconsistent}.}
\label{f-refinement}
\end{figure}

\setcounter{theorem}{0}
\begin{theorem}
The refinement algorithm computes a consistent refinement (p.~\pageref{t-refinealg}).
\end{theorem}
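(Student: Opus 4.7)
The plan is to establish two things: that the algorithm's output $S'$ is a refinement of the input approximate subdivision $S$ in the sense of Def.~\ref{def:refappsub}, and that $S'$ is consistent in the sense of Def.~\ref{def:consistent}. The first is essentially structural and I expect it to follow directly from the algorithm's description: each split replaces a curve $e[a,b]$ by an $x$-monotone chain from $a$ to $b$, and the $\prec_y$ relations of $S'$ are the restrictions of those of $S$ to pairs of sub-curves that overlap in $x$. Irreflexivity of the transitive closure on $S'$ follows by contraction: any cycle among sub-curves would project back to a cycle among the original curves, contradicting Def.~\ref{def:appsub}.

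The bulk of the proof is consistency. I would treat the four constraints separately. Constraints~2 and~4 are local, concerning only endpoints that share an $x$-coordinate with an endpoint of another curve or that fall inside a vertical segment, respectively, so I expect these to follow from preprocessing steps whose correctness can be checked case by case. Constraints~1 and~3 are the substantive ones, and share the same character---an endpoint on the ``wrong side'' of the $L(e),U(e)$ separation. My plan for constraint~1 is to show that whenever the algorithm identifies a violating pair $p\in L(e)$, $q\in U(e)$ with $a_x<p_x=q_x<b_x$ and $p_y\ge q_y$, it splits $e$ at the common $x$-coordinate, introducing a new endpoint with $y$-coordinate strictly between $p_y$ and $q_y$; the two resulting sub-curves then inherit refined $L,U$ sets that no longer contain the offending pair. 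Constraint~3 is handled analogously by a split at the $x$-coordinate of the middle endpoint of a witnessing triple. I would then invoke a loop invariant together with a potential-function argument---most naturally, the total number of surviving constraint violations---to conclude that the algorithm terminates with all four constraints satisfied.

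The main obstacle will be showing that a split does not create new violations. The freshly introduced endpoint enters $L(\cdot)$ and $U(\cdot)$ for every curve with which it $x$-overlaps, and each such membership is a potential new constraint~1 or constraint~3 violation against curves above and below $e$. My plan is to prove an invariant to the effect that the $y$-coordinate of the split point can always be chosen in a strict sub-interval that conforms to the existing $\prec_y$ order with every such neighboring curve; the precondition that every pair of input curves can be placed in the imposed order by a $\delta$-deformation is what should guarantee that this sub-interval is nonempty. Tying the total number of splits to $\bigo(k)$ would additionally yield the running time claimed in the introduction, though only termination---not that sharp bound---is needed for the present theorem.
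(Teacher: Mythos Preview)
Your plan has a structural mismatch with the algorithm and with Def.~\ref{def:refappsub}. A refinement must have \emph{the same set of endpoints} as the input; the algorithm never introduces a new point but always splits a curve at an \emph{existing} endpoint of another curve. So your proposed resolution of constraint~1---``introducing a new endpoint with $y$-coordinate strictly between $p_y$ and $q_y$''---is not available, and the sub-interval argument you outline for avoiding new violations cannot be carried out: there is no $y$-coordinate to choose. The paper's Step~1 splits $e$ at $\ymax(G_{m-1})$, an already-present endpoint, and the resulting sub-curves acquire that point as a shared endpoint.

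Because split points are forced, the real work is what you correctly identify as the obstacle: a split can place an existing endpoint into new $L(\cdot)$ or $U(\cdot)$ sets and thereby create fresh violations of earlier constraints. The paper handles this not by a potential-function decrement but by ordering the four steps and proving that each later step \emph{preserves} the constraints already enforced. In particular, showing that Step~3 does not re-break constraint~1 requires the invariant of Lem.~\ref{l-step3} (that $\val(e)_y\le\val(f)_y$ whenever $e\prec_y f$), which lets one argue that whenever a split of $f$ at $v$ would add $v$ to $U(e)$ for some $e\prec_y f$, the sweep also splits $e$ at $v$. Your proposal does not supply a replacement for this mechanism.

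Finally, you invoke the $\delta$-deformation precondition to guarantee a nonempty choice interval. Theorem~\ref{t-refinealg} is purely combinatorial and holds for \emph{any} approximate subdivision; $\delta$-accuracy is not a hypothesis here and enters only later, in the error analysis (Thms.~\ref{t-refine} and~\ref{t-delta}). An argument that depends on it would prove a weaker statement than the one claimed.
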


The remaining concepts relate to error analysis.  We define an error metric on
approximate subdivisions that models errors in the pairwise $\prec_y$ orders due
to errors in the geometric primitives used to compute them, especially curve
intersection point computation.  An approximate subdivision is
\emph{$\delta$-accurate} when the $\prec_y$ order of each pair of curves equals
the $<_y$ order of a pair of $\delta$-close curves
(Def.~\ref{def:delta-accurate}).  Each pair of curves can be placed in $\prec_y$
order by a small deformation.

\begin{definition}
A {\em $\delta$-deformation\/} of a curve, $e$, is an $x$-monotone curve with
the same endpoints as $e$ whose Hausdorff distance from $e$ is less than
$\delta$.
\end{definition}

\begin{definition}[$\delta$-accurate]\label{def:delta-accurate}
An approximate subdivision is {\em $\delta$-accurate\/} if $e\prec_yf$ implies
that $e$ and $f$ have $\delta$-deformations, $\tilde{e}$ and $\tilde{f}$, with
$\tilde{e}<_y\tilde{f}$.
\end{definition}

Fig.~\ref{f-delta-accurate}a--b show that Fig.~\ref{f-inconsistent}a is
$\delta$-accurate for a small $\delta$, which matches our intuition that the
curves barely intersect.  The pairs $e,f$ and $e,g$ use different $\tilde{e}$
curves.  Fig.~\ref{f-delta-accurate}c shows that Fig.~\ref{f-inconsistent}b is
$\delta$-accurate.  Even though $c\in\tilde{e}$, $\tilde{e}<_y\tilde{f}$ because
curves are open.  Fig.~\ref{f-delta-accurate}d shows that
Fig.~\ref{f-inconsistent}c is $\delta$-accurate.  The curves are their own
$\delta$-perturbations, and $\tilde{e}$ is $x$-monotone but not monotone.
Fig.~\ref{f-inconsistent}d is $\delta$-accurate because $e$ and $f$ do not
overlap in $x$.

\begin{figure}[tbp]
\centering
\includegraphics[width=4.5in]{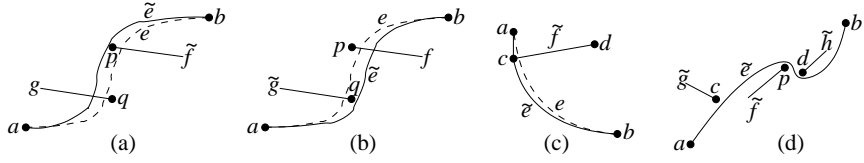}
\caption{Illustrations of $\delta$-accuracy.}\label{f-delta-accurate}
\end{figure}

A refinement is a \emph{$\delta$-splitting} when every split point is
$\delta$-close to the curves that it splits (Def.~\ref{def:accemb}).  Although
the curves in Fig.~\ref{f-refinement} are inaccurate, the refinement is a
$\delta$-splitting because $e$ is split at points that are $\delta$-close to it.

\begin{definition}[$\delta$-splitting]\label{def:accemb}
A refinement of an approximate subdivision, $S$, is a {\em $\delta$-splitting\/}
if for all $e\in S$ and $e'[a',b']\in C(e)$, $\dist(a',e),\dist(b',e)<\delta$,
where $\dist(a,e)$ denotes the Hausdorff distance from a point, $a$, to a curve,
$e$.
\end{definition}

\setcounter{lemma}{14}
\begin{lemma}
The refinement algorithm output is a $\delta$-splitting (p.~\pageref{l-refine}).
\end{lemma}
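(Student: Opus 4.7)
The plan is to examine the sub-curve endpoints $a', b'$ of each $e'\in C(e)$ and show that each one either coincides with an endpoint of $e$ (giving distance $0$) or is an endpoint of some other input curve $f\in S$ that the refinement algorithm was forced to insert into $e$'s chain, and to argue that such a forced insertion point must lie within $\delta$ of $e$. By Def.~\ref{def:refappsub} the only candidates for new split points are endpoints of other curves, so the task reduces to bounding $\dist(p,e)$ for each inserted $p$.

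To get the bound, I would appeal to the $\delta$-accuracy of the input together with the specific reason the algorithm chose to insert $p$. The algorithm inserts $p$ into $C(e)$ only in response to a situation that would otherwise violate one of the four constraints of Def.~\ref{def:consistent}, and each of those situations tightly couples the imposed $\prec_y$ order of $e$ with the geometric position of $p$. The template argument is: let $f$ be the curve of which $p$ is an endpoint, so that $e\prec_y f$ or $f\prec_y e$ is the relation that caused the insertion. By $\delta$-accuracy there exist $\delta$-deformations $\tilde e$, $\tilde f$ with $\tilde e<_y\tilde f$ (or $\tilde f<_y\tilde e$). Since $p$ is an endpoint of $f$, it is also an endpoint of $\tilde f$, and the $x$-monotone $\delta$-deformation $\tilde e$ passes within $\delta$ of $e$ at every $x$. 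Combining the $<_y$ inequality of the deformations with the $\prec_y$-driven reason $p$ was inserted forces $p_y$ to lie within $\delta$ of $e(p_x)$, yielding $\dist(p,e)<\delta$.

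Executing this requires a small case split, one case per constraint of Def.~\ref{def:consistent}: for constraint~1 the roles of $p\in L(e)$ and $q\in U(e)$ need to be separated, noting that the algorithm only splits $e$ at one of them; for constraint~2 the endpoint-coincidence scenario $a_x=c_x$ already makes $\dist(p,e)=0$; constraint~3 splits off into two symmetric sub-cases 3a and 3b, each following the same deformation argument; and constraint~4 concerns a vertical $e$ in which $p_x=a_x=b_x$ so $\dist(p,e)\leq |p_y-a_y|$ can again be bounded using the $\delta$-deformations of the curve $f$ whose endpoint is $p$.

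The main obstacle I anticipate is the bookkeeping in the constraint cases where the algorithm chooses between splitting $e$ and splitting the other curve $f$: the bound $\dist(p,e)<\delta$ has to be derived for whichever curve actually receives the split, and I expect to need a symmetric version of the deformation argument (swapping the roles of $e$ and $f$) together with the fact that $p_x$ lies strictly inside the $x$-range of whichever curve is being split. Once the matching-up of the algorithm's choice with the correct direction of the $\delta$-deformation argument is pinned down, the distance bound follows uniformly from $x$-monotonicity of the deformations and the Hausdorff-distance definition.
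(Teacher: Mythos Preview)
Your case-by-constraint plan matches the paper's structure, and for constraints~1 and~3 the deformation template is essentially what the paper does (packaged there as Lem.~\ref{l-poffset}: from $p\in L(e)$ one gets $p<_y e_{+\delta}$, and the violation supplies the companion inequality $p>_y e_{-\delta}$). The constraint~4 case is simpler than you suggest: since $e$ is vertical and the violating $p$ has $p_x=a_x=b_x$ with $p_y\in(a_y,b_y)$, the point $p$ lies on $e$ and $\dist(p,e)=0$ outright.

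The genuine gap is constraint~2. Your claim that ``the endpoint-coincidence scenario $a_x=c_x$ already makes $\dist(p,e)=0$'' confuses equal $x$-coordinates with equal points. In Step~2 the split point is the right endpoint of some \emph{other} incoming curve, not an endpoint of $e$, and its distance to $e$ is in general positive. Moreover, the deformation template cannot rescue this case: for a violation $e[v,a]\prec_y f[w,b]$ with $a_x=b_x$ and $a_y>b_y$, $\delta$-accuracy only forces one of $\tilde e$, $\tilde f$ to have a vertical segment at $x=a_x$, so only \emph{one} of $\dist(a,f)<\delta$ and $\dist(b,e)<\delta$ need hold (this is Lem.~\ref{l-safedist}), while the other can be arbitrarily large. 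That is precisely why the algorithm chooses the \emph{safe} endpoint rather than an arbitrary one. The proof must therefore trace the algorithm's $\m(\cdot)$ bookkeeping and argue that the point at which $e$ is actually split---namely $\max_{f\preceq_y e}\m(f)$, which is generally \emph{not} $\m(e)$---is bounded between $\m(e)$ and the analogous maximum-safe endpoint $\M(e)$, both of which are safe for $e$ and hence within $\delta$. Your plan does not engage with this mechanism, and without it the Step~2 case does not go through.
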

\setcounter{lemma}{0}

The output error of the refinement algorithm is due to the input error and the
refinement error.  If the input is $\delta$-accurate and the refinement is a
$\delta$-splitting, the output is $\delta$-accurate.  A stronger error bound is
that the output has a realization in which each curve is $\delta$-close to its
preimage curve, called a \emph{$\delta$-refinement} (Def.~\ref{def:d-emb}).  A
$\delta$-refinement has a single $\delta$-deformation that places all the curves
in $\prec_y$ order, whereas $\delta$-accuracy allows a separate
$\delta$-deformation for each pair of curves.  The realizations in
Fig.~\ref{f-delta-refinement} show that the refinements in
Fig.~\ref{f-refinement} are $\delta$-refinements.

\begin{definition}[$\delta$-refinement]\label{def:d-emb}
A {\em $\delta$-refinement\/} of an approximate subdivision, $S$, is a
refinement that has a realization such that the Hausdorff distance from $r(e')$
to $e$ is bounded by $\delta$ for all $e\in S$ and $e'\in C(e)$.
\end{definition}

\begin{figure}[tbp]
\centering
\includegraphics[width=4.5in]{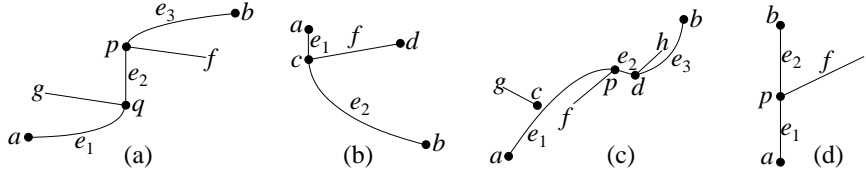}
\caption{Realizations of Fig.~\ref{f-refinement} that are $\delta$-close to the
  approximate subdivisions in
  Fig.~\ref{f-inconsistent}.}\label{f-delta-refinement}
\end{figure}

We conclude the error analysis by proving (Sec.~\ref{s-error}) that the stronger
error bound is implied by the weaker one and consistency, hence that the
refinement algorithm computes a $\delta$-refinement of its input.

\setcounter{theorem}{4}
\begin{theorem}
A consistent $\delta$-splitting of a $\delta$-accurate approximate subdivision
is a $\delta$-refinement (p.~\pageref{t-refine}).
\end{theorem}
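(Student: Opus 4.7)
The plan is to construct a realization of $S'$ in which each realized sub-curve stays within Hausdorff distance $\delta$ of its preimage in $S$; by Def.~\ref{def:d-emb}, this shows $S'$ is a $\delta$-refinement. Let $S$ be $\delta$-accurate and let $S'$ be a consistent $\delta$-splitting of $S$. For each $e[a,b]\in S$, list the split points of $C(e)$ in $x$-order as $a=p_0,p_1,\ldots,p_k=b$, so that $C(e)$ consists of sub-curves $e'_i$ with endpoints $p_{i-1},p_i$. By $\delta$-splitting, $\dist(p_i,e)<\delta$ for every internal split point.

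First I would construct, for each $e\in S$, a single $x$-monotone curve $\tilde{e}$ with the same endpoints as $e$ that passes through every $p_i$ and whose Hausdorff distance from $e$ is less than $\delta$. Such an interpolant exists because every $p_i$ lies in the open $\delta$-neighborhood of $e$: take $\tilde{e}$ to coincide with $e$ outside small $x$-intervals around each $p_i$ and to detour through $p_i$ inside those intervals. The candidate realization of $e'_i$ is the sub-curve of $\tilde{e}$ between $p_{i-1}$ and $p_i$, which is automatically $\delta$-close to $e$. For vertical $e$, constraint~4 confines the $p_i$ to the open segment on $x=a_x$, so the construction still yields an $x$-monotone (in fact vertical) $\tilde{e}$.

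The main obstacle, and the heart of the proof, is to choose the collection $\{\tilde{e}:e\in S\}$ simultaneously so that the induced $<_y$ order on their sub-curves matches $\prec_y$ in $S'$. The hypothesis of $\delta$-accuracy only supplies pairwise deformations realizing each $e\prec_y f$, and those deformations need not agree on a shared curve. I plan to glue them using the four consistency constraints of Def.~\ref{def:consistent}: constraint~1 rules out forced intersections at shared $x$-coordinates inside a $\delta$-tube, constraint~2 prevents incompatibility at shared endpoints, constraint~3 precludes the pinched configurations in which a single curve would be caught between incompatible vertical demands, and constraint~4 handles vertical curves. Concretely, I would adapt the realization construction behind Thm.~\ref{t-iff} by performing its order-enforcing vertical perturbation inside the $\delta$-tube around each $e$ rather than in the ambient plane; $\delta$-accuracy supplies exactly the slack that this in-tube perturbation requires at each $x$-coordinate, so the resulting curves are simultaneously $\delta$-close to their preimages and correctly ordered.

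Finally I would verify that the map $r$ sending $e'_i$ to its sub-curve of $\tilde{e}$ is a realization: the endpoint condition is immediate, the $<_y$ order matches $\prec_y$ by the simultaneous construction, and the disjointness from closures required by Def.~\ref{def:subdivision} follows because any shared closure point between two $r$-images would force a $\prec_y$ violation ruled out by consistency (applied to either the parent curves or their refined pieces). Since $\dist(r(e'_i),e)<\delta$ by construction, $r$ exhibits $S'$ as a $\delta$-refinement of $S$, completing the argument.
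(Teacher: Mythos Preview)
Your overall strategy matches the paper's: build a realization of $S'$ whose curves stay inside $\delta$-neighborhoods of their parent curves in $S$. But the organization differs and the decisive step is left as a promise.

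The paper does not first construct an interpolant $\tilde e$ through the split points and then try to adjust it. It works directly on the sub-curves $e'\in S'$. For each $e'$ it defines an explicit monotone region $R'_\delta(e')$ (Def.~\ref{R'delta}): start from the $R(e')$ used in the proof of Thm.~\ref{t-iff} and further remove $f_{-\delta}^-$ and $g_{+\delta}^+$ for the relevant parent curves $f,g$, where $e_{\pm\delta}$ are the $\delta$-offsets of Def.~\ref{def:offset}. Lem.~\ref{l-accemb} (which replaces Lem.~\ref{l-poffset}) shows this extra clipping is compatible with the $L(e')$, $U(e')$ structure of the consistent $S'$, so the $R'_\delta(e')$ are still monotone and ordered; Lem.~\ref{lem:ordered} then produces $r(e')\subset R'_\delta(e')$, which is $\delta$-close to the parent $e$ by construction. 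The case where $e'$ and $e$ have opposite monotonicity types (increasing versus decreasing) is handled by a separate short argument using the box $a'^{+-}\cap b'^{-+}$.

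Your sentence ``adapt the realization construction behind Thm.~\ref{t-iff} by performing its order-enforcing vertical perturbation inside the $\delta$-tube'' is where the entire proof lives, and you have not said what the adapted regions are or why they remain nonempty, monotone, and ordered after the $\delta$-tube restriction. That is exactly the content of the $\delta$-offset machinery (Lems.~\ref{l-distmon}--\ref{l-accemb} and Def.~\ref{R'delta}); without it, ``$\delta$-accuracy supplies exactly the slack'' is an assertion, not an argument. Your preliminary $\tilde e$ is a detour: it is not the realization you end up using and it establishes nothing needed later. Working sub-curve by sub-curve, as the paper does, also cleanly handles the case where a sub-curve $e'$ has monotonicity type opposite to its parent $e$, which your single-$\tilde e$ picture obscures.
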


\begin{theorem}
The refinement algorithm computes a $\delta$-refinement (p.~\pageref{t-delta}).
\end{theorem}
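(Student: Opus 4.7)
The plan is to observe that the final theorem is essentially a corollary obtained by composing three earlier results: Theorem~\ref{t-refinealg} (the refinement algorithm computes a consistent refinement), Lemma~\ref{l-refine} (the algorithm's output is a $\delta$-splitting), and Theorem~\ref{t-refine} (a consistent $\delta$-splitting of a $\delta$-accurate approximate subdivision is a $\delta$-refinement). All the substantive work has been factored into those three statements, so the proof itself should be a short chain of implications.

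More concretely, I would first fix the standing precondition that the algorithm's input is a $\delta$-accurate approximate subdivision; this is the hypothesis under which the error analysis in Sec.~\ref{s-error} is carried out, so it is legitimate to assume here. Next, I would invoke Theorem~\ref{t-refinealg} to conclude that the output $S'$ is a refinement of the input $S$ and that $S'$ is consistent. Then I would invoke Lemma~\ref{l-refine} to conclude that $S'$ is a $\delta$-splitting of $S$. At this point the output satisfies exactly the three hypotheses of Theorem~\ref{t-refine} (it is a refinement, consistent, and a $\delta$-splitting of a $\delta$-accurate approximate subdivision), so Theorem~\ref{t-refine} yields that $S'$ is a $\delta$-refinement of $S$, which is what was to be shown.

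There is essentially no obstacle inside this proof itself: the only thing to be careful about is that the notion of $\delta$-splitting in Lemma~\ref{l-refine} must use the same $\delta$ as the $\delta$-accuracy of the input, so the quantifier on $\delta$ threads through unchanged. The genuine mathematical difficulty is quarantined in the earlier statements, namely proving Theorem~\ref{t-refine} (the combinatorial-to-geometric upgrade from pairwise to simultaneous deformation) and Lemma~\ref{l-refine} (that the split points the algorithm introduces lie within $\delta$ of the curves being split). Once those are in hand, the present theorem follows by one line of reasoning and does not require further construction or case analysis.
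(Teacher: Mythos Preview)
Your proposal is correct and matches the paper's own proof essentially verbatim: the paper's proof is the single line ``Follows from Thm.~\ref{t-refinealg}, Thm.~\ref{t-refine}, and Lem.~\ref{l-refine}.'' Your explicit threading of the $\delta$-accuracy hypothesis through the three cited results is exactly the intended reasoning.
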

\setcounter{theorem}{0}

\section{Refinement Algorithm}\label{s-refine}

The refinement algorithm consists of four steps that enforce the four
constraints in Def.~\ref{def:consistent} by splitting curves at endpoints of
other curves.  Splitting $e[t,h]$ at $p$ replaces $e$ by a refinement,
$\{e_1[t,p],e_2[p,h]\}$, such that $e_i\prec_yf$ (or $e_i\succ_yf$) iff
$e\prec_yf$ (or $e\succ_yf$) and $e_i$ and $f$ overlap in $x$.  For simplicity,
one can choose $e_1[t,p]$ and $e_2[p,h]$ to be the line segments $tp$ and $ph$.
Even though this may be inaccurate, the ultimate realization of the refinement
will be close to the original curves (Thm.~\ref{t-refine}).

In the following description, a curve, $e[t,h]$, is classified with respect to
an endpoint coordinate, $x$, as follows: $t_x=x=h_x$, vertical; $t_x=x<h_x$,
outgoing; $t_x<x=h_x$, incoming; and $t_x<x<h_x$, passing.

\paragraph{Step 1}

We begin with an example using curves $f_1$ to $f_{10}$ and $\prec_y$ equal to
index order for curves that overlap in $x$ (Fig.~\ref{f-through}a).  Because
$f_7\prec_yf_9$, $w\in L(f_9)$, yet because $f_9\prec_yf_{10}$, $w\in U(f_9)$.
Hence, curve $f_9$ and endpoint $w$ violate constraint~1 if we set $e=f_9$,
$p=w$, and $q=w$.  The algorithm visits, in an order consistent with $\prec_y$,
the curves that have an endpoint whose $x$-coordinate equals $w_x$: $f_1$,
$f_2$, $f_3$, $f_4$, $f_5$, $f_7$, $f_8$, and $f_{10}$.  (Curve $f_4[u,v]$ is
vertical and $u_x=v_x=w_x$.)  It forms groups of these curves separated by
passing curves: $G_1=\{f_1,f_2,f_3,f_4,f_5\}$, passing curve $f_6$,
$G_2=\{f_7,f_8\}$, passing curve $f_9$, and $G_3=\{f_{10}\}$.  Each curve is
added to the latest group and the constraint is checked.  The check fails when
the highest endpoint, $p$, in the previous group and the lowest endpoint, $q$,
in the current group satisfy $p_y\geq q_y$.  The algorithm splits at $p$ each
passing curve $e$ that separate the previous group from the current group
(because $e$, $p$, and $q$ violate constraint~1), and it then merges these two
groups and the split curves into a single group.  In our example, the check
fails after $f_{10}$ is added to $G_3$, $f_9$ is split at $w$ into $f'_9$ and
$f''_9$, $G_3$ is merged into $G_2$, and the algorithm is done
(Fig.~\ref{f-through}b).

\begin{figure}[tbp]
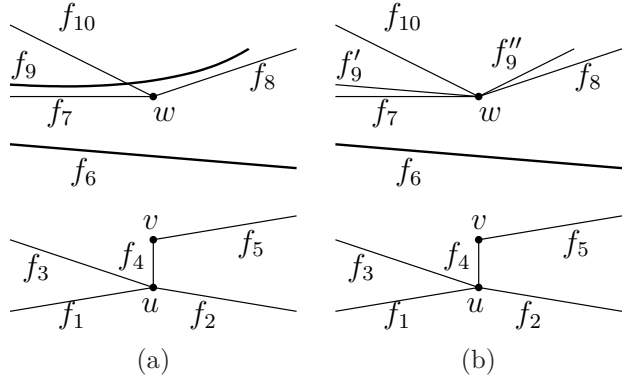

\centering
\begin{tabular}{cc}
\input{through1L.pstex_t} & \input{through2L.pstex_t}\\
(a) & (b)
\end{tabular}
\caption{Curves before (a) and after (b) constraint~1 enforcement.}
\label{f-through}
\end{figure}

The algorithm is a plane sweep with a vertical sweep line.  The sweep list, $L$,
consists of curves in $\prec_y$ order.  The events are the endpoint
$x$-coordinates in increasing order.  The incoming curves at $x$ are removed
from $L$, the non-passing curves are handled in $\prec_y$ order as described
below, and the outgoing curves are inserted into $L$.  We write $f\prec_y G$ or
$G\prec_y f$ when $f\prec_y g$ or $g\prec_y f$ for every $g$ in group $G$.  The
endpoints, $v$, of curves in $G$ with $v_x=x$ and with minimum and maximum
$y$-coordinates are denoted $\ymin(G)$ and $\ymax(G)$.  A curve, $e$, is handled
as follows.

\begin{flushleft}
1. If $m=0$ or there exists a passing $f$ with  $G_m\prec_y f\prec_ye$\\
\hspace{2.25em}    increment $m$ and set $G_m=\{e\}$\\
\hspace{0.9em}    else add $e$ to $G_m$.\\
2. While $m>1$ and $\ymin(G_m)_y\leq\ymax(G_{m-1})_y$:\\
\hspace{2.25em} a. For each passing curve, $G_{m-1}\prec_y g\prec_y G_m$:\\
\hspace{4.5em} i. Remove $g$ from $L$.\\ 
\hspace{4.5em} ii. Split $g$ into $g'$ and $g''$ at $\ymax(G_{m-1})$.\\
\hspace{4.5em} iii. Add $g'$ and $g''$ to $G_{m-1}$.\\
\hspace{2.25em} b. Insert the curves of $G_m$ into $G_{m-1}$ and decrement $m$.
\end{flushleft}
\mbox{}\vspace{-2.5em} % eps: debug

\paragraph{Step 2}

We begin with an example using curves $f_1$ to $f_4$ that are incoming at
$x={b_1}_x={b_2}_x={b_3}_x={b_4}_x$ and with $\prec_y$ equal to index order
(Fig.~\ref{f-incoming}a).  The pair $\left<f_2,f_3\right>$ violates constraint~2
with $e=f_2$, $f=f_3$, $b=b_2$ and $d=b_3$, and similarly
$\left<f_1,f_3\right>$.  We can handle $\left<f_2,f_3\right>$ by splitting $f_2$
at $b_3$ or by splitting $f_3$ at $b_2$.  The latter is more accurate because
$\dist(b_2,f_3)$ (shown with a dashed line) is less than $\dist(b_3,f_2)$.  We
say that $b_2$ is safe for $\left<f_2,f_3\right>$.  Likewise, $b_3$ is safe for
$\left<f_1,f_3\right>$ if $\dist(b_3,f_1)\leq\dist(b_1,f_3)$.  We remove the two
violations by splitting $f_1$ into $\{f'_1, b_3b_1\}$ and $f_3$ into
$\{f'_3,b_2b_3\}$ (Fig.~\ref{f-incoming}b).

\begin{figure}[tbp]
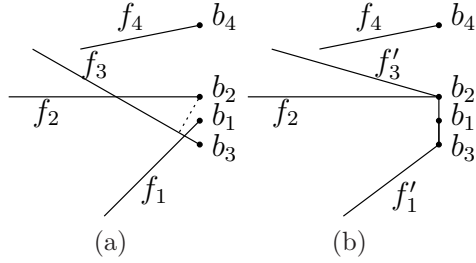

\centering
\begin{tabular}{cc}
\input{incoming1L.pstex_t} & \input{incoming2L.pstex_t}\\
(a) & (b)
\end{tabular}
\caption{Curves before (a) and after (b) constraint~2 enforcement.}
\label{f-incoming}
\end{figure}

The algorithm enforces constraint~2 via safe splits of curves that violate it.
It processes the incoming curves at each endpoint $x$-coordinate as follows;
outgoing curves are processed analogously.  Let $s$ be the list of incoming
curves in $\prec_y$ order.  A safe endpoint for $e[v,a]\in s$ is an endpoint,
$b$, of a curve, $f[w,b]\in s$, with $\dist(b,e)\leq\dist(a,f)$.  All distances
are measured with respect to the input curves: for a refinement curve $e'\in
C(e)$ (Def.~\ref{def:refappsub}), use $\dist(b,e)$ in place of $\dist(b,e')$.
For endpoints $a$ and $b$, let $\min(a,b)$ denote $a$ if $a_y<b_y$ and $b$
otherwise; likewise $\max(a,b)$.  Define $\m(e)$ as the minimum endpoint over
$g\succeq_ye$ that is safe for $e$.  In our example, $\m(f_1)=b_3,
\m(f_2)=b_2,\m(f_3)=b_3,\m(f_4)=b_4$.  The algorithm computes $\m(e)$ for every
$e\in s$ then splits each curve, $e$, at the maximum $\m(f)$ over $f\preceq_y
e$.

\begin{flushleft}
1.  For each curve, $e[v,a]$, in $s$:\\
\hspace{2.25em}a. Set $\m(e)=a$.\\
\hspace{2.25em}b. Append $e$ to $I$.\\
\hspace{2.25em}c. While $a_y<b_y$ for the predecessor, $f[w,b]$, of $e$ in $I$:\\
\hspace{4.75em}i.  Swap $e$ and $f$ in $I$.\\
\hspace{4.75em}ii. If $a$ is safe for $f$, set $\m(f)=\min(\m(f), a)$.\\
2. For each curve, $e$, in $s$\\
\hspace{2.25em} Split $e$ at the maximum of $\m(f)$ over $f\preceq_ye$ in $s$.
\end{flushleft}

Steps~1--2 of the algorithm compute $\m(e)$ by inserting each curve, $e[v,a]$,
into an initially empty list, $I$, ordered by endpoint $y$-coordinate.  First,
$e$ is appended to $I$ then it is swapped with each predecessor, $f[w,b]$, with
$b_y>a_y$ until it reaches the correct position in $I$.  In our example, $f_3$
is inserted into $I=(f_1,f_2)$ with two swaps after which $I=(f_3,f_1,f_2)$.
Each swap reveals a constraint violation, hence implies a possible update to
$\m(f_3)$.

\paragraph{Step 3}

We begin with an example with curves $g[t,h]$, $f_1[a_1,b_1]$ and $f_2[a_2,b_2]$
with $f_1\prec_yg$ and $g\prec_yf_2$ (Fig.~\ref{f-monotone}a).  Curve $e=g$ and
$p=a_2\in U(e)$ violate constraint~3b with $c=t$, and $d=b_1$.  The algorithm
constructs a monotone curve from $t$ to $h$ that is above every $p\in L(g)$ and
is below every $q\in U(g)$.  It visits these endpoints in increasing $x$ order
and tracks the rightmost one, $\val(g)$, where the curve is forced to increase
or decrease.  A constraint violation is detected when $\val(g)$ switches between
increasing and decreasing, and $g$ is split at the previous $\val(g)$.
Specifically, $\val(g)$ starts as $t$, is set to $a_2$, is unchanged at $b_2$
and $a_1$, is set to $b_1$, a violation is detected, and $g$ is split at $a_2$
(Fig.~\ref{f-monotone}b).

\begin{figure}[tbp]
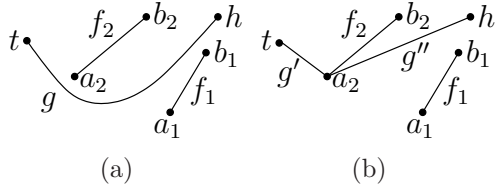

\centering
\begin{tabular}{cc}
\input{monotone1L.pstex_t} &
\input{monotone2L.pstex_t} \\
(a) & (b)
\end{tabular}
\caption{Curves before (a) and after (b) constraint~3 enforcement.}
\label{f-monotone}
\end{figure}

The algorithm is a sweep with $L$ as before.  Each $e\in L$ stores an endpoint,
$\val(e)$, and a flag, $\action(e)$, that indicates if $\val(e)$ is on, above,
or below $e$.  Each input curve, $f[t,h]$, has a tail event at $t_x$ and a head
event at $h_x$.  The events are handled in increasing $x$ order with ties broken
as follows: heads of non-vertical curves, tails of vertical curves, heads of
vertical curves, and tails of non-vertical curves.

\newpage % eps: debug
\begin{flushleft}
tail event for $f[t,h]$:\\ 
1. If $f$ is not vertical, insert it into $L$ with $\val(f)=t$ and 
   $\action(f)=\set$.\\
2. For each $e\in L$ with $e\prec_y f$ and $t_y\leq\val(e)_y$, call 
   $\mathrm{lower}(e,t)$.\\ 
3. For each $g\in L$ with $g\succ_y f$ and $t_y\geq\val(g)_y$, call 
   $\mathrm{raise}(g,t)$.\\[1em]

head event for $f[t,h]$:\\
1.  For each $e\in L$ with $e\preceq_y f$ and $h_y\leq\val(e)_y$, 
    call $\mathrm{lower}(e,h)$.\\
2.  For each $g\in L$ with $g\succeq_y f$ and $h_y\geq\val(g)_y$, 
    call $\mathrm{raise}(g,h)$.\\
3.  If $f$ is not vertical, remove it from $L$.\\[1em]

subroutine $\mathrm{lower}(e,a)$:\\
1. If $\action(e)=\rai$, split $e$ at $\val(e)$.\\
2. Set $\val(e)=a$ and $\action(e)=\low$.\\[1em]

subroutine $\mathrm{raise}(e,a)$:\\
1. If $\action(e)=\low$, split $e$ at $\val(e)$.\\
2. Set $\val(e)=a$ and $\action(e)=\rai$.
\end{flushleft}

This algorithm handles strict extrema.  In the general case, lower sets a
pointer from $a$ to $\val(e)$ if $\action(e)=\low$ and $\val(e)_y=a_y$; likewise
if $\action(e)=\rai$ in raise.  The split steps traverse the pointers and split
$e$ at the extrema.

\paragraph{Step 4}

Split every vertical curve at every endpoint that violates constraint~4.

\section{Correctness}\label{s-correct}

We prove that the refinement algorithm output is a consistent refinement of the
input (Thm.~\ref{t-refinealg}) and that the algorithm is inconsistency sensitive
(Thm.~\ref{t-time}).  We then prove that an approximate subdivision is
consistent iff it has a realization (Thm.~\ref{t-iff}), which implies that the
refinement algorithm output has a realization.

\begin{theorem}\label{t-refinealg}
The refinement algorithm computes a consistent refinement.
\end{theorem}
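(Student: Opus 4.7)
The plan is to prove the two halves of the claim separately: first that the output is a refinement of the input approximate subdivision, then that the output satisfies all four constraints of Def.~\ref{def:consistent}. For the first half, the key observation is a uniform invariant: every split in every step is performed at a point $p$ that is already an endpoint of some curve in the current approximate subdivision. Consequently the set of endpoint $x$-coordinates is invariant throughout the algorithm, and each split replaces $e[t,h]$ with $\{e_1[t,p],e_2[p,h]\}$ inheriting $e$'s $\prec_y$ relations as stipulated in the description of Step~1. Transitive-closure acyclicity is preserved because the new curves can be thought of as occupying the position of $e$ in any topological order. From this one concludes that each step produces a refinement of its input, and refinements compose, so the final output is a refinement of the original input.

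For the second half I would argue one step at a time and, crucially, show that once a constraint is established it is never violated by a later step. The preservation argument has a single driver: a split introduces a new endpoint only at a location that was already an endpoint of another curve, and a constraint-$i$ violation depends only on the set of endpoint locations and the $\prec_y$ order, never on which curves own which endpoints. Hence constraint~1 violations are unchanged by splits in Steps~2--4, constraint~2 is unchanged by Steps~3--4, and constraint~3 is unchanged by Step~4. With preservation in hand, I just need per-step correctness.

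For Step~1, I would fix an endpoint $x$-coordinate $x$ and show that after the sweep processes $x$ there is no violation of constraint~1 at $x$: after merging, the groups $G_1,\dots,G_m$ are separated by passing curves whose relative $\prec_y$ gaps strictly exceed the $y$-range of the adjacent groups (otherwise the while-loop would have merged again), so no $p,q$ straddling a passing curve $e$ with $p_y\geq q_y$ can remain. The loop terminates because $m$ decreases on every merge. For Step~2, I would show that the final split of each $e\in s$ at $\max_{f\preceq_y e}\m(f)$ is exactly the minimal split that handles all incoming constraint-2 pairs: the inner swap loop enumerates precisely the pairs $(f,e)$ with $f\prec_y e$ and $a_y<b_y$, i.e.\ the constraint-2 violators, and the $\m$ update records the safer of the two possible split points. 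For Step~3, the invariants are that $\val(e)$ is the rightmost endpoint so far that forced a monotonicity direction on the envelope of $L(e)$ and $U(e)$, and that $\action(e)$ records that direction; when the direction is forced to flip, splitting at $\val(e)$ separates the two monotone pieces, which simultaneously discharges both the 3a witness below and the 3b witness above. Step~4 is immediate by construction.

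The hardest part, I expect, is Step~1: the group algorithm interleaves insertion of a new curve $e$ into $G_m$ (or starting a new group) with a cascade of merges, and one must precisely characterize the group invariant after every operation to show both that every genuine constraint-1 violation triggers a split of the separating passing curves and that no passing curve is split unnecessarily. A secondary difficulty is being careful about the tie-breaking rules in Step~3 (heads of non-verticals before tails of verticals, etc.) so that simultaneous events are ordered correctly for the $\val/\action$ bookkeeping; showing that the general-case pointer mechanism at non-strict extrema preserves the invariant requires a small separate argument. Once these local correctness arguments are assembled and combined with the uniform preservation observation above, the theorem follows.
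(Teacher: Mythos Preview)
Your overall architecture matches the paper's: show the output is a refinement, show each step enforces its constraint, and show later steps preserve earlier constraints. The gap is in your preservation argument. You claim that ``a constraint-$i$ violation depends only on the set of endpoint locations and the $\prec_y$ order, never on which curves own which endpoints.'' This is false. The sets $L(e)$ and $U(e)$ in Def.~\ref{def:LU} are defined as endpoints \emph{of curves} $f\preceq_y e$ (resp.\ $f\succeq_y e$), so ownership matters. Concretely: suppose $e\prec_y f$ and Step~3 splits $f$ at a point $v$ that happens to lie in $L(e)\cap L(f)$ (so $v$ is an endpoint of some $h\preceq_y e$). After the split, $v$ is an endpoint of the two pieces of $f$, both of which are $\succ_y e$, so $v$ is now also in $U(e)$. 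Taking $p=q=v$ in constraint~1 gives $p_y\geq q_y$, a brand-new violation for $e$ even though no new endpoint location was created.

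The paper closes exactly this hole with a nontrivial argument: it proves the sweep invariant $\val(e)_y\leq\val(f)_y$ for $e\prec_y f$ (Lem.~\ref{l-step3}) and uses it to show that whenever Step~3 splits $f$ at such a $v$, it \emph{also} splits $e$ at $v$, so $v$ never lands in the interior of $e$'s $x$-range. A similar care is needed for constraint~2 preservation under Step~3: the paper relies on the specific fact that Step~3 always splits at an extremal $v_y$ among the relevant $L(e)$ or $U(e)$ endpoints, not on a generic ``splits at existing endpoints'' observation. Your proposal needs these ingredients; without them the preservation step does not go through.
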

\begin{proof}
We prove consistency.  Being a refinement follows from the definition of a curve
split.

Step~1 enforces the invariant that constraint~1 holds for every passing curve,
$f$, and the endpoints in $G_1,\ldots,G_m$.  Step~2 enforces constraint~2 by
assigning each incoming curve an endpoint whose $y$-coordinate is the maximum
over all incoming curves below it in $\prec_y$ order.  For step~3, let input
curve $e$ and endpoint $q\in U(e)$ violate constraint~3b with $c,d\in L(e)$; the
other case is similar.  If $e$ is split at an endpoint, $w$, with $c_x<w_x<d_x$
before $d$ is handled, we are done.  Otherwise, $\val(e)_y\geq c_y$ after $c$ is
handled, so $\val(e)_y\leq q_y$ and $\action(e)=\low$ after $q$ is handled.
When $d$ is handled, $\action(e)=\low$ and $\val(e)=r$ for some $r\in U(e)$ with
$r_y\leq q_y$ and $c_x<r_x< d_x$, since no split occurs before $d$.  Thus, $e$
is split at $r$, which eliminates the $e,q$ inconsistency.  Step~4 trivially
enforces constraint~4.

Step~2 maintains constraint~1 by not changing $L(e)$ or $U(e)$ for a passing
curve, $e$.  Step~3 maintains constraint~1 by not adding an endpoint, $v$, to
$U(e)$, except by splitting $e$ at $v$ (proof below).  It maintains constraint~2
by splitting a curve, $e$, at the endpoint $v\in L(e)$ or $v\in U(e)$ with the
largest or smallest $v_y$ value.  Step~4 maintains constraints~1--3 by only
splitting vertical curves.

Proof: Suppose $e\prec f$, $v\in L(e)\cap L(f)$, and $f$ is split at $v$.  We
show that $e$ is also split at $v$.  After $v$ is processed, $\val(f)=v$, since
it is later split at $v$.  By Lem.~\ref{l-step3} below, $\val(e)_y\leq v_y$.
But $\val(e)_y\geq v_y$ because $v\in L(e)$, so $\val(e)=v$.  Curve $f$ is split
at $v$ when the sweep processes an endpoint, $u\in U(f)$ with $u_y\leq v_y$.  If
$\val(e)=v$ then, $e$ is split at $v$ because $u\in U(e)$.  Otherwise, $e$ is
lowered by $w\in U(e)\cap L(f)$ with $w_x<u_x$ and $e$ is split at $v$ then.
\end{proof}

\begin{lemma}\label{l-step3}
In step~3, $\val(e)_y\leq\val(f)_y$ for $e,f\in L$ with $e\prec_y f$.
\end{lemma}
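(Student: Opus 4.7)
The plan is to establish the invariant by induction on the sequence of events processed by the sweep of step~3. Initially $L$ is empty and the claim is vacuous. For the inductive step, I fix an event---a tail of some $f[t,h]$ or a head of $f$---and verify that for every pair $e\prec_y g$ remaining in $L$ after the event, $\val(e)_y\leq\val(g)_y$.

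The argument classifies each pair $(e,g)$ according to its position relative to $f$ in the $\prec_y$ order. Consider a tail event at $t$. Once $f$ is inserted into $L$ with $\val(f)=t$, the subroutines call $\mathrm{lower}(\cdot,t)$ on each $e' \prec_y f$ with $\val(e')_y \geq t_y$ and $\mathrm{raise}(\cdot,t)$ on each $g' \succ_y f$ with $\val(g')_y \leq t_y$. The key observation is that these updates are monotone in the correct direction: after step~2, every curve strictly below $f$ in $L$ has $\val_y \leq t_y$, and after step~3, every curve strictly above $f$ has $\val_y \geq t_y$.

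For a pair $(e,g)$ with $e\prec_y g\prec_y f$, the inductive hypothesis gives $\val(e)_y\leq\val(g)_y$. If $t_y\leq\val(e)_y$ then both are lowered and end with $\val=t$, so the invariant holds with equality. If $\val(e)_y < t_y \leq \val(g)_y$, only $g$ is lowered, yielding $\val(g)_y=t_y>\val(e)_y$. If $t_y > \val(g)_y$, neither is lowered and the hypothesis carries over. The case $f\prec_y e\prec_y g$ is symmetric using $\mathrm{raise}$. For a straddling pair $e\prec_y f\prec_y g$, the key point above gives $\val(e)_y\leq t_y\leq\val(g)_y$ in every sub-case, so the invariant holds. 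When one side of the pair equals $f$, the inequality is immediate from $\val(f)=t$ together with the same monotone condition.

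A head event for $f[t,h]$ is handled identically, with $h$ in place of $t$; the final removal of $f$ from $L$ can only eliminate pairs, not create new ones. The main obstacle is not any individual case but organizing the classification cleanly so that the monotone nature of $\mathrm{lower}$ and $\mathrm{raise}$---each of which sets $\val$ to the \emph{exact} endpoint being processed rather than to some arbitrary value---is what pins the straddling case and keeps the induction tight.
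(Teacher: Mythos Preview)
Your proof is correct and follows the same event-by-event inductive scheme as the paper. The paper compresses the case analysis into a single observation: if $\mathrm{raise}(e,v)$ is called and $v_y\geq\val(f)_y$, then since $v\in L(e)$ and $e\prec_y f$ force $v\in L(f)$, $\mathrm{raise}(f,v)$ is also called, leaving $\val(e)_y=\val(f)_y=v_y$ (and symmetrically for $\mathrm{lower}$); your partition of pairs by their position relative to the event curve unwinds this same mechanism explicitly.
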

\begin{proof}
The invariant holds initially and is preserved by sweep updates because if $v\in
L(e)$ overlaps $f$ in $x$, then $v\in L(f)$.  If we call $\mathrm{raise}(e,v)$ and if $v_y\geq\val(f)_y$, then we also call $\mathrm{raise}(f,v)$. Hence $\val(e)_y\leq\val(f)_y$ remains true because $\val(e)_y=\val(f)_y=v$.  Similarly for $u\in U(f)$.
\end{proof}

To prove inconsistency sensitivity, we need to count constraint violations.  A
constraint~1 violation is a pair $e,q$ such that there exists $p$ that violates
the constraint, and similarly the other violations are defined by the pair
$e,f$ (2), the pair $e,p$ (3a), the pair $e,q$ (3b), or the pair $e,p$ (4).

\begin{theorem}\label{t-time}
The refinement algorithm complexity is $\bigo(n\log n+k)$ for $n$ curves with
$k=\bigo(n^2)$ constraint violations.
\end{theorem}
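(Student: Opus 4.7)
The plan is to partition the total running time into an $O(n\log n)$ sweep overhead, coming from sorting the $O(n)$ endpoint events and maintaining the sweep list $L$ as a balanced search tree keyed by the input $\prec_y$ order (so insert, delete, and neighbour lookup cost $O(\log n)$), and an $O(k)$ term absorbing a constant amount of work per constraint violation, with $k=k_1+k_2+k_3+k_4$ where $k_i$ counts the constraint-$i$ violations. I would analyse the four steps in turn.

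For step~1 I would argue that the inner ``while $m>1$'' loop terminates each iteration either by satisfying the $\ymin/\ymax$ condition or by splitting at least one passing curve at $\ymax(G_{m-1})$; every such split witnesses a distinct constraint~1 violation (the passing curve together with that endpoint), so the total number of inner iterations is $O(k_1)$. A linked-list representation of the groups makes the per-curve placement, merge, and split work $O(1)$, for a step total of $O(n\log n+k_1)$. Step~2 is an insertion sort at each endpoint $x$-coordinate on the list of incoming curves; each swap witnesses a distinct constraint~2 violation, so the total work is $O(n+k_2)$. Step~4 is trivial: one split per violating endpoint of each vertical curve, giving $O(n+k_4)$.

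Step~3 is the delicate case. I would use Lemma~\ref{l-step3}: since $\val$ is $\prec_y$-monotone along $L$, each iteration ``for each $e\in L$ with $e\prec_y f$ and $t_y\le\val(e)_y$'' (and its raise analogue) can be executed as a prefix scan of $L$ from $f$ downward that stops at the first $e$ whose $y$-condition fails. The per-event cost beyond $O(\log n)$ is thus proportional to the number of \textsf{lower} or \textsf{raise} calls issued at that event. I would bound the total number of such calls by $O(n+k_3)$ via amortization: every call that flips $\action(e)$ produces a split witnessing a fresh constraint~3 violation (charged to $k_3$), and every call that preserves $\action(e)$ strictly advances $\val(e)_y$ within the current low or raise phase and can be charged to the endpoint that triggered it, with each endpoint contributing $O(1)$ to the aggregate.

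The main obstacle is precisely this amortization in step~3. A naive count allows each of the $O(n)$ endpoints in $L(e)\cup U(e)$ to update $\val(e)$, yielding $O(n^2)$ in the worst case, which would overwhelm the bound when $k_3$ is small. The careful argument must exploit that non-flipping updates within a single monotone phase are tied to events that introduce new endpoints of the phase's extremal chain, so the aggregate non-flipping charge is $O(n)$ rather than $O(n^2)$. Once that bound is in place, summing over the four steps together with the $O(n\log n)$ sweep overhead yields the claimed $O(n\log n+k)$ running time.
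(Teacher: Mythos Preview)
Your analysis of steps~1, 2, and~4 is essentially that of the paper and is fine (modulo the minor point that step~1 can create new constraint~2 violations, so step~2 must be charged against $O(k_1+k_2)$ rather than $k_2$ alone; the paper addresses this explicitly).

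The genuine gap is exactly where you flagged it: the step~3 amortization. Your charging scheme does not yield $O(n+k_3)$. Take $m$ decreasing curves $e_1\prec_y\cdots\prec_y e_m$, each spanning $[0,1]$ in $x$ with heads near $y=0$, together with $m$ short curves $f_1,\ldots,f_m$ lying above all the $e_j$, whose tails $t_i$ sit at $x=i/(m+1)$ with $t_{i,y}$ strictly decreasing in $i$ but staying above every $e_j$ head. One checks directly from Def.~\ref{def:consistent} that this configuration has $k=0$. Yet each tail event $t_i$ satisfies $t_{i,y}\le\val(e_j)_y$ for every $j$, so it triggers a non-flipping $\mathrm{lower}(e_j,t_i)$ call on all $m$ curves $e_j$, for $\Theta(m^2)$ calls total. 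Neither of your proposed charges absorbs this: a single endpoint $t_i$ triggers $m$ calls (so ``$O(1)$ per endpoint'' fails), and the ``extremal chain'' of each $e_j$ has length $m$, summing to $m^2$ over all $j$.

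The paper's fix is not an amortization argument but a data structure. It maintains a partition of $L$ into maximal contiguous sublists of curves sharing a common $\val$ and $\action$; by Lemma~\ref{l-step3} these sublists are intervals of $L$, and they are stored in a balanced tree keyed by their common $\val_y$ (equivalently by height). A lower or raise at an event then acts on whole sublists: it splits at most one sublist and merges several adjacent ones, each tree operation costing $O(\log n)$. Since at most $2n$ sublists are ever created and each merge destroys one, the total tree work is $O(n\log n)$ independent of $k$; curve splits still cost $O(1)$ apiece and are charged to $k$. In the example above, after $t_1$ all the $e_j$ collapse into one sublist, and each subsequent $t_i$ touches $O(1)$ sublists rather than $m$ curves. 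This batching is the missing ingredient in your proposal.
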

\begin{proof}
The definitions of the constraint violations imply that $k=\bigo(n^2)$, since
there are $n$ curves and $\bigo(n)$ endpoints.  Each curve, $e$, stores its
poset height, $\height(e)$.  For $e$ and $f$ that overlap in $x$, $e\prec_yf$
iff $\height(e)<\height(f)$.  Each curve stores pointers to the curves below and
above its endpoints.  Splitting a curve updates these pointers in constant time.
Sort the endpoints in $x$ order with ties broken by height.
%Also sort them in $x$ order with ties broken by $y$ order.
This processing takes $\bigo(n\log n)$
time and $\bigo(n)$ space.

Step~1 takes $\bigo(n+k)$ time using a doubly linked list to represent $L$.
Group operations take constant time using $\ymin$, $\ymax$, and the maximum
curve height, $\hmax$, to represent a group.

Step~2 takes $\bigo(n+k)$ time because the number of constraint~2 violations is
$\bigo(k)$ after step~1, which is shown as follows.  If step~1 splits $e$ at
$p$, $p$ is the highest point in $G_{m-1}$, so the new curves satisfy
constraint~2 with respect to a curve, $f\prec_ye$.  A constraint~2 violation is
created for each curve, $f\succ_y e$, that has an endpoint, $q$, with $q_x=p_x$
and $q_y<p_y$.  Since $q\in U(e)$, $e$ and $f$ are also a constraint~1
violation.

Step~3 takes $\bigo(n\log n+k)$ time.  Maintain a partition of $L$ into sublists
with a common $\val$ and $\action$.  Each sublist stores pointers to its lowest
and highest curves.  Represent the partition as a balanced binary tree of
sublists ordered by $\val_y$.  By Lem.~\ref{l-step3}, these will also be ordered
by $h$.  Store the curves in $L$ in a binary tree ordered by $h$.  To handle an
event, look up the new curve in the latter binary tree to determine the $h$ of
its neighbors.  Look up this $h$ in the former binary tree to determine the
sublist that will be split.  The new curve can lower the curves below it in its
sublist or can raise the curves above.  Each case splits the sublist into lower
and upper sublists.  When a lower or a raise propagates to sublists below or
above, they are merged into the newly created lower or upper sublists.  A tail
event also creates a singleton sublist for the curve that enters $L$.  A head
event removes its curve from its sublist and removes an empty sublist from the
partition.

Excluding sublist merges and curve splits, the sweep takes $\bigo(n\log n)$
because there are $\bigo(n)$ tree operations.  Although an event can cause
$\bigo(n)$ merges, the merge time is $\bigo(n\log n)$ because each merge
decreases the number of sublists, whereas at most $2n$ sublists are created.
Since each split applies to an entire sublist, the partition structure is
unchanged and the only cost is $\bigo(k)$ to update the curves.

Step~4 takes $\bigo(n+k)$ time, using the sorted endpoints, because the number
of constraint~4 violations is $\bigo(k)$ after steps~1--3.  Each split creates
at most one violation and is charged to a constraint~1--3 violation.
\end{proof}

We now turn to Thm.~\ref{t-iff}.  We first prove necessity
(Lem.~\ref{lem:onlyif}).  To prove sufficiency, we start with a more general
sufficient condition (Def.~\ref{def:ordered} and Lem.~\ref{lem:ordered}).

\begin{lemma}\label{lem:onlyif}
If an approximate subdivision has a realization, it is consistent.
\end{lemma}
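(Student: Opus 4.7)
The plan is to fix a realization $r\colon S\to T$ and show each of the four consistency constraints holds by contradiction, exploiting three properties of $r$: it preserves endpoints, $e\prec_y f$ implies $r(e)<_y r(f)$ pointwise on their $x$-overlap, and the target $T$ is a subdivision, so any two distinct curves $r(e),r(f)$ satisfy $r(e)\cap\overline{r(f)}=\emptyset$. The recurring device is a one-sided limit: if $p$ is an endpoint of a curve $f$ and another curve $e$ passes over the $x$-coordinate $p_x$, then by the strict pointwise ordering $r(f)<_y r(e)$ on an $x$-neighborhood of $p_x$ together with continuity of $r(e)$, we get $p_y\le r(e)(p_x)$, and symmetrically $r(e)(p_x)\le q_y$ if $q$ is an endpoint of some $g\succ_y e$ at $x=p_x$.

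For constraint~1, suppose $e[a,b]$, $p\in L(e)$ coming from $f\preceq_y e$, and $q\in U(e)$ coming from $g\succeq_y e$ with $a_x<p_x=q_x<b_x$ and $p_y\ge q_y$. The strict interior $x$-inequality forces $f\ne e$ and $g\ne e$ (otherwise $p$ or $q$ would be $a$ or $b$), so the $\prec_y$ relations are strict. The one-sided limit trick above gives $p_y\le r(e)(p_x)\le q_y$, and combining with $p_y\ge q_y$ yields $p=q=(p_x,r(e)(p_x))$, a point that sits in $r(e)$ and in $\overline{r(f)}$, contradicting disjointness.

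Constraint~2 follows by the same limit argument, applied directly at the common $x$-coordinate of endpoints $a$ of $e$ and $c$ of $f$: from $r(e)<_y r(f)$ near $a_x=c_x$, letting $x\to a_x$ gives $a_y\le c_y$, with the $b,d$ case symmetric. Constraint~4 is handled by noting that if $e[a,b]$ is vertical, $r(e)$ must also be vertical since it is $x$-monotone with both endpoints on $\{x=a_x\}$; hence $r(e)$ covers the full open segment of $y$-values between $a_y$ and $b_y$ at $x=a_x$, so if $p$ is any endpoint of another curve $f$ with $p_x=a_x$ and $p_y\in(a_y,b_y)$, then $(a_x,p_y)=p\in r(e)\cap\overline{r(f)}$, a contradiction.

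Constraint~3 is the main obstacle because it forces us to use $y$-monotonicity of realized curves, and to treat a boundary equality case carefully. Suppose, for the first half, $p\in L(e)$ and $c,d\in U(e)$ violate the constraint, with $c_x<p_x<d_x$ and $c_y,d_y\le p_y$. The argument showing the relevant curves are strictly below/above $e$ is as in constraint~1. Apply the one-sided limits to obtain $r(e)(c_x)\le c_y\le p_y$, $r(e)(p_x)\ge p_y$, and $r(e)(d_x)\le d_y\le p_y$. Since $r(e)$ is $y$-monotone on $[c_x,d_x]$, this up-then-down profile forces $r(e)$ to be constantly equal to $p_y$ on a subinterval containing $p_x$; in particular $(p_x,p_y)=p\in r(e)$, while $p\in\overline{r(f)}$ for the curve $f$ witnessing $p\in L(e)$, contradicting disjointness. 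The case with the roles of $L(e)$ and $U(e)$ swapped is symmetric.
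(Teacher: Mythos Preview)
Your proof is correct and follows essentially the same approach as the paper: each constraint is verified by combining endpoint preservation, the implication $e\prec_y f\Rightarrow r(e)<_y r(f)$, one-sided limits at endpoints, and the disjointness condition $r(e)\cap\overline{r(f)}=\emptyset$ in the target subdivision. The only cosmetic difference is in constraint~3, where the paper case-splits on the slope of $e$ (increasing, horizontal, decreasing) to derive a direct strict inequality $p_y<d_y$ or $p_y<c_y$, whereas you argue uniformly via $y$-monotonicity to force $p\in r(e)$ and then invoke disjointness for the contradiction; the ingredients are identical.
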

\begin{proof}
For constraint~1, consider $p\in L(e)$ and $q\in U(e)$ with $a_x<p_x=q_x<b_x$
(Fig.~\ref{f-inconsistent}a).  There exists a curve $f\prec_ye$ with endpoint
$p$ and a curve $g\succ_ye$ with endpoint $q$.  Hence, $r(f)<_yr(e)$ and
$r(e)<_yr(g)$.  By continuity, $p<_y\overline{r(e)}$ or $p\in\overline{r(e)}$,
but $r(e)\cap\overline{r(f)}=\emptyset$ by Def.~\ref{def:realization} and
$a_x<p_x<b_x$, hence $p<_yr(e)$.  Similarly, $r(e)<_yq$ and hence $p_y<q_y$.

For constraint~2, consider $a_x=c_x$ (Fig.~\ref{f-inconsistent}b).  Since
$r(e)<_yr(f)$, $a\in\overline{r(e)}$, and $c\in\overline{r(f)}$, $a_y\leq c_y$.
The $b_x=d_x$ case is similar.

For constraint~3a, consider $e[a,b]$ increasing (Fig.~\ref{f-inconsistent}c).
As above, $p\in L(e)$ and $a_x<p_x<b_x$ imply $p<_yr(e)$, and $d\in U(e)$ and
$a_x<d_x$ imply $r(e)<_yd$ or $d=b$.  Since $r(e)$ is monotone and $p_x<d_x$,
$p_y<d_y$.  Similarly, $p_y<c_y$ for $e$ horizontal or decreasing.  The
constraint~3b proof is similar.

Constraint~4 holds because $r(e)\cap\overline{r(f)}=\emptyset$ by
Def.~\ref{def:realization} (Fig.~\ref{f-inconsistent}d).
\end{proof}

\begin{definition}[monotone]\label{def:M}
A set, $M(e)$, is monotone with respect to a curve, $e[a,b]$, if $M(e)=e$ for
$e$ horizontal or vertical and otherwise $M(e)$ is an open set bounded by
monotone curves, $l(e)$ and $u(e)$, that connect $a$ to $b$
(Fig.~\ref{fig:MfMe}a).
\end{definition}

\begin{definition}[ordered monotone assignment]\label{def:ordered}
An assignment of a monotone set, $M(e)$, to each curve, $e$, in an approximate
subdivision is {\em ordered\/} if no $M(e)$ contains a curve endpoint, and
$f\prec_ye$ implies $M(f)<_yM(e)$ unless $e$ and $f$ are both increasing or both
decreasing in which case $M(f)<_y u(e)$ (Fig.~\ref{fig:MfMe}b).
\end{definition}

\begin{definition}[$p^-$, $p^+$, $S^-$ and $S^+$]\label{def:p^-}
For a point, $p$, $p^- = \{ (p_x,y) | y \leq p_y \}$ and $p^+ = \{ (p_x,y) | y
\geq p_y \}$.  For a set, $S$, $S^- = \bigcup_{p \in S} p^-$ and $S^+ =
\bigcup_{p \in S} p^+$.
\end{definition}

\begin{lemma}\label{lem:ordered}
If an approximate subdivision has an ordered monotone assignment, it has a
realization.
\end{lemma}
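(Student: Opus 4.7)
The plan is to construct the realization $r$ explicitly. For each $e \in S$, I choose a curve $r(e)$ sharing endpoints with $e$ and lying in the interior of $M(e)$ strictly away from the boundary except at the endpoints themselves; then the collection $\{r(e) : e \in S\}$ is the desired subdivision $T$ and $r$ the realization map.

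The construction proceeds by processing curves in a linear extension of $\prec_y$. For horizontal or vertical $e$, $M(e) = e$ forces $r(e) = e$. For each monotone $e[a,b]$, I choose $r(e)$ as an $x$-monotone curve from $a$ to $b$ whose interior at each $x \in (a_x, b_x)$ lies strictly above the supremum of $r(f)(x)$ over all previously placed $r(f)$ with $f \prec_y e$ overlapping at $x$, and strictly below $u(e)(x)$. For concreteness, the midline of this admissible strip works.

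The heart of the plan is showing that the admissible strip is a non-empty open interval at every interior $x$. The ordering hypothesis gives either $M(f) <_y M(e)$ or, in the same-direction monotone case, $M(f) <_y u(e)$. Combined with the inductive invariant that $r(f)$ lies in the interior of $M(f)$ and hence strictly below $u(f)$, this yields $r(f)(x) < u(e)(x)$ strictly at each interior $x$: in the generic case via $r(f)(x) < u(f)(x) \leq l(e)(x) < u(e)(x)$, and in the same-direction case directly because $r(f) \in M(f) <_y u(e)$. Thus the strip is a non-empty open interval and its midline provides a valid $x$-monotone choice; the pinching $l(e) = u(e)$ at the endpoints of $e$ handles continuity there, joining $r(e)$ to the required endpoints $a$ and $b$.

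It remains to verify the three realization axioms. Endpoint preservation is by construction. The ordering $e \prec_y f \Rightarrow r(e) <_y r(f)$ is enforced by the strip placement when the later curve is processed. The subdivision condition $r(e) \cap \overline{r(f)} = \emptyset$ for distinct $e,f$ follows from strict height separation at interior shared $x$-values, the openness of curves (so shared endpoints are not in the open curves themselves), and the no-endpoint-in-$M(\cdot)$ hypothesis, which prevents $r(e)$, placed in the interior of $M(e)$, from passing through another curve's endpoint. The main obstacle I anticipate is the same-direction monotone exception, where $M(f)$ and $M(e)$ may overlap; it is resolved exactly by the inductive invariant that $r(f)$ sits strictly inside $M(f)$, so the chain of inequalities $r(f) <_y u(f) \leq u(e)$ remains strict and leaves a strip available for $r(e)$.
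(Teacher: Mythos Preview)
Your inductive placement---process the curves in a linear extension of $\prec_y$ and choose $r(e)$ inside $M(e)$ above all earlier $r(f)$---is exactly the paper's approach. Two things are missing from the execution.

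First, a realization must land in a subdivision, so each $r(e)$ has to be $y$-monotone as well as $x$-monotone (this is used elsewhere, e.g.\ in the proof of Lemma~\ref{lem:onlyif}). You ask only for an $x$-monotone curve, and the midline of your strip need not be $y$-monotone: its lower edge $\sup_{f\prec_y e} r(f)(x)$ is an upper envelope of curves with different $x$-domains and can drop at the right endpoint of some $r(f)$. The paper handles precisely this point by treating increasing and decreasing curves separately; when placing an increasing $e$ it subtracts only $r(f)^-$ for \emph{increasing} $f\prec_y e$, and then invokes the hypothesis that no curve endpoint lies in $M(e)$ to conclude that these drops occur outside $M(e)$, so the lower boundary of the residual region $M'(e)$ is still monotone and a monotone path from $a$ to $b$ exists inside it. You cite the no-endpoint hypothesis only for disjointness, not here, where it is the key ingredient.

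Second, and more minor: your midline sits between $\sup_f r(f)$ and $u(e)$ with no explicit lower bound by $l(e)$. At an $x$ where only opposite-direction (or no) $f$ overlap, the supremum can lie well below $l(e)(x)$ and the midline can fall outside $M(e)$, contradicting your own claim that $r(e)$ lies in the interior of $M(e)$. You need $r(e)\subset M(e)$ both for the endpoint-disjointness step and for the mixed-direction ordering $r(e)<_y r(g)$, which goes through $M(e)<_y M(g)$.
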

\begin{proof}
For each curve, $e$, we will choose $r(e)\subset M(e)$ that will be disjoint
from the other curve endpoints.  If $e$ is horizontal or vertical, define
$r(e)=M(e)=e$.  Vertical $r(e)$ and $r(f)$ cannot intersect because they cannot
contain each other's endpoints.  Otherwise, $f \prec_y e$ implies $r(f)<_y r(e)$
because $M(f) <_y M(e)$, except when both are increasing or both are decreasing.

We define $m$ for the increasing curves in an order that is consistent with
$\prec_y$.  Let $M'(e)$ equal $M(e)$ minus $r(f)^-$ for all increasing $f\prec_y
e$ (Fig.~\ref{fig:MfMe}c).  Because $r(f)\subset M(f) <_y u(e)$
(Fig.~\ref{fig:MfMe}b), $M'(e)$ is connected.  Since the endpoints of $r(f)$ do
not lie in $R(e)$, the only part of the boundary of $r(f)^-$ that can lie in
$R(e)$ is $r(f)$, hence the lower boundary of $M'(e)$ is monotone.  Choose a
monotone path, $r(e)\in M'(e)$, from $a$ to $b$ (Fig.~\ref{fig:MfMe}c).  Because
$r(e)\cap r(f)^- \subset M'(e)\cap r(f)^- = \emptyset$, $r(f)<_yr(e)$ as
required.  We handle the decreasing curves similarly.
\end{proof}

\begin{figure}[tbp]
\centering \includegraphics[width=4.5in]{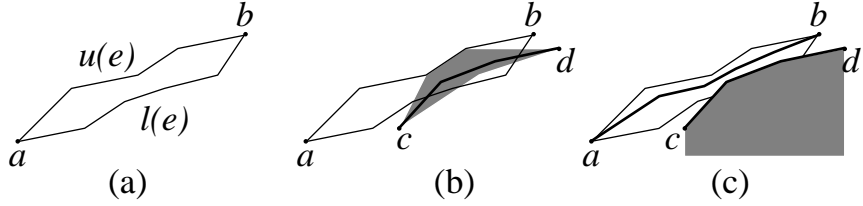}
\caption{(a) $M(e)$; (b) $M(f)$ (shaded) and $r(f)$ (dark); (c) $M'(e)$, $r(e)$
  (dark), and $r(f)^-$ (shaded).}\label{fig:MfMe}
\end{figure}

We complete the sufficiency proof by showing that a consistent approximate
subdivision has an ordered monotone assignment, $R(e)$.

\begin{definition}\label{def:a++}
For a point, $p$, $p^{++}=[p_x,\infty)\times[p_y,\infty)$,
    $p^{-+}=(-\infty,p_x]\times[p_y,\infty)$,
    $p^{+-}=[p_x,\infty)\times(-\infty,p_y]$, and
    $p^{--}=(-\infty,p_x]\times(-\infty,p_y]$.  For a set, $A$,
$A^{++}=\bigcup_{p\in A}p^{++}$ {\em etc}.  For two points, $p$ and $q$, with
$p_x<q_x$, $A(p,q)=(p_x,q_x)\times(p_y,q_y)$.
\end{definition}

\begin{definition}[$R(e)$]
For a curve $e[a,b]$:
\[
\begin{array}{cc}
R(e) = & \left\{\begin{array}{ll} A(a,b)-L(e)^{+-}\cup U(e)^{-+} & \mbox{for
  increasing $e$,} \\ A(a,b)-L(e)^{--}\cup U(e)^{++} & \mbox{for decreasing
  $e$,} \\ e & \mbox{otherwise.}
\end{array}\right.
\end{array}
\]
\end{definition}

Lemmas \ref{lem:p++} through \ref{lem:Rordered} apply to a consistent
approximate subdivision.

\begin{lemma}\label{lem:p++}
For a curve, $e[a,b]$, if $p\in L(e)$, $q\in U(e)$, and either $p\not\in\{a,b\}$
or $q\not\in\{a,b\}$, $e$ increasing or horizontal implies $p^{+-}\cap
q^{-+}=\emptyset$ and $e$ decreasing or horizontal implies $p^{--}\cap
q^{++}=\emptyset$.
\end{lemma}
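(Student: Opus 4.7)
The statements $p^{+-}\cap q^{-+}=\emptyset$ and $p^{--}\cap q^{++}=\emptyset$ unfold to the implications $p_x\leq q_x\Rightarrow p_y<q_y$ and $q_x\leq p_x\Rightarrow p_y<q_y$ respectively. My plan is to prove the first implication assuming $e$ is increasing or horizontal; the second then follows by the reflection $y\mapsto-y$, which converts a decreasing curve to an increasing one, swaps $L(e)$ with $U(e)$, and interchanges the two quadrant claims. Throughout I would exploit that $a,b\in L(e)\cap U(e)$ (because $e\preceq_y e$ and $e\succeq_y e$) as auxiliary witnesses.

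Assuming $p_x\leq q_x$, I would split into the sub-cases $p_x=q_x$ and $p_x<q_x$. When $a_x<p_x=q_x<b_x$, constraint~1 directly forbids $p_y\geq q_y$. When $p_x=q_x=a_x$ (symmetrically $b_x$), I would first establish $p_y\leq a_y$ and $q_y\geq a_y$ from constraint~2 together with the observation that an open $f$ satisfying $f\prec_y e$ must share an interior $x$-coordinate with $e$; this rules out vertical $f$ at $x=a_x$ and heads of $f$ at $x=a_x$, forcing any endpoint at $x=a_x$ that contributes to $L(e)$ or $U(e)$ to be a tail to which constraint~2 applies. The resulting chain $p_y\leq a_y\leq q_y$ is upgraded to strict by the extremity hypothesis, since equality throughout would force $p=a=q$ as points and put both $p,q\in\{a,b\}$.

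In the sub-case $p_x<q_x$ with both coordinates interior to $(a_x,b_x)$, I would argue by contradiction, supposing $p_y\geq q_y$, and reach a contradiction via a trichotomy. If $a_y\leq p_y$, then $c=a$, $d=q$ lies in $U(e)$, straddles $p_x$, and satisfies $c_y,d_y\leq p_y$, violating constraint~3a at $p$. If instead $b_y\geq q_y$, then $c=p$, $d=b$ lies in $L(e)$, straddles $q_x$, and satisfies $c_y,d_y\geq q_y$, violating constraint~3b at $q$. Otherwise $a_y>p_y\geq q_y>b_y$ yields $a_y>b_y$, contradicting $a_y\leq b_y$ for $e$ increasing or horizontal. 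The mixed boundary variants ($p_x=a_x$ with $q_x$ interior, or $p_x$ interior with $q_x=b_x$) are handled by substituting the bounds $p_y\leq a_y$ or $q_y\geq b_y$ into the same trichotomy, and the full boundary case $p_x=a_x$, $q_x=b_x$ follows from $p_y\leq a_y\leq b_y\leq q_y$, which is strict when $e$ is strictly increasing and otherwise becomes strict via the extremity hypothesis.

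The main obstacle is extracting a strict inequality at the shared boundary $p_x=q_x\in\{a_x,b_x\}$: constraints~1 and~2 supply only weak bounds, so strictness can only come from the hypothesis that at least one of $p,q$ lies outside $\{a,b\}$---precisely why this assumption appears in the statement. The horizontal case proceeds identically, because $a_y=b_y$ satisfies both $a_y\leq b_y$ and $a_y\geq b_y$, so the constraint~3 arguments fire for either quadrant claim.
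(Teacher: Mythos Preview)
Your proof is correct and follows essentially the same route as the paper: a case analysis on whether $p_x,q_x$ coincide or lie at the boundary or interior of $[a_x,b_x]$, invoking constraint~1 for the interior coincidence, constraint~2 at the boundaries, and constraints~3a/3b (with $a,b$ as the auxiliary witnesses in $L(e)\cap U(e)$) for the strictly separated interior case. Your reflection symmetry and your explicit justification that an endpoint at $x=a_x$ contributing to $L(e)$ must be a tail (via the open-overlap requirement) are details the paper leaves implicit, but the argument is otherwise the same.
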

\begin{proof}
Suppose $e'[a',b']$ is increasing or horizontal, $f'\prec_ye'$ has endpoint
$p'\in L(e')$, $g'\succ_ye'$ has endpoint $q'\in U(e')$, $q'\not\in\{a',b'\}$,
and ${p'}^{+-}\cap{q'}^{-+}\not=\emptyset$, hence $p'_x\leq q'_x$ and $p'_y\geq
q'_y$.  We will show a contradiction.  The other cases are similar.

If $a'_x<p'_x=q'_x<b'_x$, constraint~1 is violated with $e[a,b]=e'[a',b']$,
$p=p'$, and $q=q'$.  If $a'_x=p'_x=q'_x$, $p'_y\leq a'_y$ by constraint~2 with
$e=f'$, $f=e'$, $a=p'$, and $c=a'$.  Similarly, $a'_y\leq q'_y$, so $a'_y\leq
q'_y\leq p'_y\leq a'_y$ and $q'=a'$, which is a contradiction.  Similarly,
$p'_x=q'_x=b'_x$ is contradictory.  If $a'_x<p'_x<q'_x<b'_x$, either $p'_y\geq
a'_y$ or $q'_y\leq b'_y$.  The former contradicts constraint~3a with $e=e'$,
$c=a$, $p=p'$, and $d=q'$, and the latter contradicts constraint~3b with $e=e'$,
$c=p$, $q=q'$, and $d=b'$.  If $a'_x=p'_x<q'_x<b'_x$, $p'_y\leq a'_y$ by
constraint~2 and constraint~3b is contradicted as before.  Similarly,
$a'_x<p'_x<q'_x=b'_x$ is contradictory.  If $a'_x=p'_x$ and $q'_x=b'_x$,
$p'_y\leq a'_y$ and $b'_y\leq q'_y$ by constraint~2, so $b'_y\leq q'_y\leq
p'_y\leq a'_y\leq b'_y$ and $q'=b'$, which is contradictory.
\end{proof}

\begin{lemma}\label{lem:Rmonotone}
$R(e)$ is monotone with respect to $e$.
\end{lemma}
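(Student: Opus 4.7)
The plan is to prove $R(e)$ is monotone by a case split on the shape of $e$. For horizontal or vertical $e$ the claim is immediate from $R(e) = e$, matching the corresponding clause of Definition \ref{def:M}. For decreasing $e$ the argument is entirely symmetric to the increasing case, with $L(e)^{--}$ and $U(e)^{++}$ playing the roles of $L(e)^{+-}$ and $U(e)^{-+}$ and the bounding curves being non-increasing rather than non-decreasing, so I focus on the increasing case.

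For increasing $e[a, b]$, I would exhibit the bounding curves $l(e)$ and $u(e)$ explicitly as staircase envelopes. Define
\[
\lambda(x) = \max\{p_y : p \in L(e),\ p_x \leq x\}, \qquad
\mu(x) = \min\{q_y : q \in U(e),\ q_x \geq x\},
\]
for $x \in [a_x, b_x]$. These are well defined because $a, b \in L(e) \cap U(e)$. Both are non-decreasing: the maximization set for $\lambda$ grows with $x$, while the minimization set for $\mu$ shrinks with $x$. I take $l(e)$ and $u(e)$ to be the graphs of $\lambda$ and $\mu$ with vertical segments inserted at jumps, giving weakly monotone (i.e.\ $x$-monotone and $y$-monotone) curves.

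I then verify the three parts of Definition \ref{def:M}. (i) The curves connect $a$ to $b$: at $x = a_x$, $\lambda$ maximizes over $\{p \in L(e) : p_x = a_x\}$, which contains $a$; Lemma \ref{lem:p++} applied with $q = a \in U(e)$ rules out $p_x \leq a_x$ and $p_y \geq a_y$ for every $p \neq a$, forcing $p_y < a_y$ and hence $\lambda(a_x) = a_y$. The same lemma with $p = a$ forces $q_y > a_y$ for every $q \in U(e)$ with $q \neq a$, so $\mu(a_x) = a_y$. The values at $b_x$ are handled symmetrically, yielding $\lambda(b_x) = \mu(b_x) = b_y$. (ii) $\lambda \leq \mu$ pointwise: if $\lambda(x_0) > \mu(x_0)$ at some $x_0$, the realizing $p \in L(e)$ and $q \in U(e)$ satisfy $p_x \leq x_0 \leq q_x$ and $p_y > q_y$, so $(q_x, p_y) \in p^{+-} \cap q^{-+}$; Lemma \ref{lem:p++} then forces $p, q \in \{a, b\}$, and the four remaining subcases are eliminated using $a_x < b_x$ and $a_y < b_y$ together with the endpoint values computed in (i). (iii) A direct unwinding shows that $(x, y) \in L(e)^{+-}$ iff $y \leq \lambda(x)$ and $(x, y) \in U(e)^{-+}$ iff $y \geq \mu(x)$, so $R(e) = \{(x, y) \in A(a, b) : \lambda(x) < y < \mu(x)\}$ is precisely the open region strictly between $l(e)$ and $u(e)$.

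The main obstacle is part (ii): Lemma \ref{lem:p++} only excludes the quadrant intersection when at least one of $p, q$ lies outside $\{a, b\}$, so the degenerate cases where both witnesses are endpoints of $e$ must be ruled out directly, using the strictness of $a_x < b_x$ and $a_y < b_y$ available for an increasing curve. Once that hurdle is cleared, the remaining verification is a routine unwinding of the definitions.
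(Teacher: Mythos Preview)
Your proof is correct and follows essentially the same approach as the paper: both hinge on Lemma~\ref{lem:p++} to show that the removed sets $L(e)^{+-}$ and $U(e)^{-+}$ meet only at $\{a,b\}$, so that $R(e)$ is the open region between their monotone staircase boundaries. The paper packages this as the single identity $L(e)^{+-}\cap U(e)^{-+}=\{a,b\}$, whereas you unpack the same content into the envelope functions $\lambda,\mu$ and a pointwise comparison, but the substance is identical.
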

\begin{proof}
For increasing $e$, the boundary of $p^{+-}$ is monotone, so the boundary of
$L(e)^{+-}$ is monotone because it is the upper envelope of a set of monotone
curves.  Similarly, the boundary of $U(e)^{-+}$ is monotone.  By
Lem.~\ref{lem:p++},
\begin{displaymath}
L(e)^{+-}\cap U(e)^{-+}= \cup_{p\in L(e), q\in U(e)} p^{+-}\cap q^{-+}= \{a,b\}.
\end{displaymath}
Hence, $R(e)$ is an open set bounded below and above by the portions of the
boundaries of $L(e)^{+-}$ and $U(e)^{-+}$ that connect $a$ to $b$.  Decreasing
$e$ is handled similarly and the other cases are trivial.
\end{proof}

\begin{lemma}\label{l-trans}
For $e$, $f$, and $g$ that overlap in $x$, $e\prec_yf$ and $f\prec_yg$ implies
$e\prec_yg$.
\end{lemma}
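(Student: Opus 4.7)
The plan is to derive this directly from the defining properties of $\prec_y$ in Def.~\ref{def:appsub}, using acyclicity of the transitive closure together with the ``overlap in $x$'' dichotomy; I do not expect to need any of the four consistency constraints. The observation is that $\prec_y$ itself is only required to be acyclic, not transitive, but the hypothesis that $e$ and $g$ overlap in $x$ forces one of $e\prec_y g$ or $g\prec_y e$ to hold, and acyclicity then pins down the correct one.

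Concretely, I would argue as follows. Given $e\prec_y f$ and $f\prec_y g$, the pair $(e,g)$ lies in the transitive closure of $\prec_y$. Since $e$ and $g$ are assumed to overlap in $x$, Def.~\ref{def:appsub} says that either $e\prec_y g$ or $g\prec_y e$ must hold. Suppose for contradiction that $g\prec_y e$. Then the chain $e\prec_y f\prec_y g\prec_y e$ places $(e,e)$ in the transitive closure of $\prec_y$, contradicting irreflexivity of that closure. Hence $e\prec_y g$, as required.

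The only thing to be careful about is the asymmetric roles of ``$\prec_y$'' and ``transitive closure of $\prec_y$'' in the definition: the dichotomy (``$e\prec_y g$ or $g\prec_y e$'') is on the base relation, while the acyclicity hypothesis is on its closure, and the contradiction argument correctly combines the two. There is no real obstacle beyond being explicit about this distinction.
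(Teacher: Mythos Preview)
Your proposal is correct and is essentially identical to the paper's own proof: both use the overlap-in-$x$ dichotomy from Def.~\ref{def:appsub} to force $e\prec_y g$ or $g\prec_y e$, and then rule out the latter by observing it would put $(e,e)$ in the transitive closure, contradicting irreflexivity. Your explicit remark about the asymmetric roles of the base relation and its transitive closure is a nice clarification, but the argument is the same.
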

\begin{proof}
Since $e$ and $g$ overlap in $x$, $e\prec_yg$ or $g\prec_ye$, but the latter
implies that $(e,e)$ is in the transitive closure of $\prec_y$, which
contradicts Def.~\ref{def:appsub}.
\end{proof}

\begin{lemma}\label{lem:Rordered}
The assignment of $R(e)$ to $e$ is ordered.
\end{lemma}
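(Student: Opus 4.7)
The plan is to verify the two clauses of Def.~\ref{def:ordered} for the assignment $e \mapsto R(e)$: that no $R(e)$ contains a curve endpoint, and that $R(f) <_y R(e)$ (or $R(f) <_y u(e)$ in the same-monotonicity case) whenever $f \prec_y e$.

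I would dispose of the first clause by contradiction, assuming an endpoint $p$ of some curve $f$ lies in $R(e)$, and splitting on the monotonicity type of $e$. When $e$ is vertical this is exactly constraint~4. When $e$ is horizontal, $p \in R(e) = e$ forces $a_x < p_x < b_x$ and $p_y = a_y = b_y$; since $a, b \in L(e) \cap U(e)$, the pair $a, b$ straddles $p_x$ at height $p_y$ and contradicts constraint~3a if $f \preceq_y e$ (so $p \in L(e)$) or constraint~3b if $f \succeq_y e$. When $e$ is strictly monotone, $p \in A(a,b)$ forces $a_x < p_x < b_x$, so $e$ and $f$ overlap in $x$ at $p_x$; since $p \notin \{a,b\}$, either $f \prec_y e$ (placing $p$ in $L(e)$, hence in $p^{+-}\subset L(e)^{+-}$ for increasing $e$ or $p^{--}\subset L(e)^{--}$ for decreasing $e$) or $f \succ_y e$ (placing $p$ symmetrically into $U(e)^{-+}$ or $U(e)^{++}$), either of which contradicts $p \in R(e)$.

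For the ordering clause, I fix $f \prec_y e$ and a shared interior $x$-coordinate $x_0$, and take $(x_0, y_f) \in R(f)$ and $(x_0, y_e) \in R(e)$. The easy subcases are when one of $e, f$ is horizontal or vertical: $R$ collapses to the curve itself, and the required inequality follows from the endpoint-exclusion just established together with constraints~2 and~3 applied at the shared endpoint $x$-coordinates. The principal case is when both $e$ and $f$ are strictly monotone. There the permitted $y$-values of $R(f)$ and $R(e)$ at $x_0$ are bounded by explicit $\max$ and $\min$ envelopes over $L(\cdot)$ and $U(\cdot)$, so the target ordering reduces to an envelope inequality: $u(f)(x_0)\le u(e)(x_0)$ in the same-monotonicity case (giving $R(f) <_y u(e)$), and $u(f)(x_0)\le l(e)(x_0)$ in the opposite-monotonicity case (giving $R(f) <_y R(e)$).

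The hard part will be proving these envelope inequalities. Each reduces to showing that an endpoint of $e$ witnessing the envelope on one side of $x_0$ is dominated in $y$ by an appropriate endpoint of $f$ on that same side. If the witnessing $e$-endpoint lies inside $f$'s $x$-range, transitivity (Lem.~\ref{l-trans}) places it in the corresponding $L(f)$ or $U(f)$ and the inequality is immediate. The main obstacle is the ``outside'' case, where the witnessing $e$-endpoint lies beyond $f$'s $x$-range: there the nearer endpoint of $f$ is forced to lie in $L(e)$ or $U(e)$, and I expect to force the required $y$-inequality by applying constraint~3a or~3b at $e$, reinforced where needed by constraint~2 at coinciding $x$-endpoints, to contradict the negation. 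All of the delicate geometric bookkeeping is concentrated in this last case analysis.
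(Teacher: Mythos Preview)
Your handling of the first clause (no endpoint lies in $R(e)$) matches the paper's proof.  For the ordering clause, your envelope-at-$x_0$ framework is equivalent to the paper's set-based arguments, and your same-monotonicity case is essentially the paper's: transitivity for the inside subcase, and an $L(e)$--$U(e)$ separation for the outside subcase.  The paper packages that outside subcase as a one-line application of Lem.~\ref{lem:p++} (since $d\in L(e)$, $q\in U(e)$, and $d\notin\{a,b\}$, one gets $d^{+-}\cap q^{-+}=\emptyset$, hence $d_y<q_y$); your plan to invoke constraints~2 and~3 directly amounts to re-deriving that lemma in place, which works but is redundant.

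The gap is in the opposite-monotonicity case.  Your reduction ``the witnessing $e$-endpoint, if inside $f$'s $x$-range, is placed by transitivity into $L(f)$ or $U(f)$'' does not go through there: the witness for $l(e)(x_0)$ is an endpoint $p$ of some curve $h\preceq_y e$, and from $h\preceq_y e$ together with $f\preceq_y e$ Lem.~\ref{l-trans} yields nothing about $h$ versus $f$, so you cannot place $p$ in $L(f)$ or $U(f)$.  The paper sidesteps this entirely.  When, say, $e$ is increasing and $f$ is decreasing, the left edge of their $x$-overlap is either $a$ or $c$; whichever it is lies simultaneously in $L(e)$ and $U(f)$, so $R(e)\subset a^{++}$ (resp.\ $c^{++}$) while that same quadrant has been excised from $R(f)$, giving $R(f)<_y R(e)$ in one stroke.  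In your envelope language this is just $u(f)(x_0)\le m_y\le l(e)(x_0)$ with $m\in\{a,c\}$.  No transitivity, no inside/outside split, and no appeal to constraint~3 is needed; the opposite-monotonicity case is in fact the easier one, and your sketch has the difficulty inverted.
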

\begin{proof}
Verticals cannot contain endpoints by constraint~4.  A horizontal, $e[a,b]$,
cannot contain an endpoint, $p$, by constraint~3 with $c=a$ and $d=b$.
Otherwise, $R(e)$ cannot contain $p$ because one of
$p^{--},p^{-+},p^{+-},p^{++}$ has been removed.

If $e[a,b]$ is not decreasing and $f[c,d]$ is not increasing and $f \prec_y e$,
either $a=c$, $a \in U(f)-\{c,d\}$, or $c \in L(e)-\{a,b\}$.  If $a=c$, $e$ and
$f$ cannot both be horizontal without one containing the other's right endpoint.
We have $p_y\leq a_y$ for $p\in R(f)$ and $q_y\geq a_y$ for $q\in R(e)$ with
equality only for horizontal curves.  Hence, $R(f)<_yR(e)$.  If $a \in
U(f)-\{c,d\}$, $R(f)\cap a^{++}=\emptyset$.  Since $R(e) \subset a^{++}$, $R(f)
<_y R(e)$.  The $c \in L(e)-\{a,b\}$ case is similar, as is $e$ not increasing
and $f$ not decreasing.

It remains to show that if $e[a,b]$ and $f[c,d]$ are both increasing and $f
\prec_y e$, $R(f) <_y u(e)$, and similarly for both decreasing.  It suffices to
show $R(f)<_yq^{-+}$ for each $q\in U(e)$ for which $q^{-+}$ and $f$ overlap in
$x$.  If $q_x<d_x$ (Fig.~\ref{fig:Rconnected}, $q=1$), a curve, $g\succ_ye$,
with endpoint $q$ also overlaps $f$ in $x$, so $g\succ_yf$ by
Lem.~\ref{l-trans}, $q\in U(f)$, $q^{-+}$ was subtracted from $R(f)$, and thus
$R(f)<_yq^{-+}$.  If $q_x>d_x$ (Fig.~\ref{fig:Rconnected}, $q=2$), $d\in L(e)$
because it is an endpoint of $f\prec_ye$.  By Lem.~\ref{lem:p++}, $q^{-+}\cap
d^{+-}=\emptyset$, hence $d_y<q_y$ and $R(f)\subset A(c,d)<_yq_{-+}$.  If
$q_x=d_x<b_x$ (Fig.~\ref{fig:Rconnected}, $q=3$), the same reasoning applies.
If $q_x=d_x=b_x$ (Fig.~\ref{fig:Rconnected}, $q=4$), $q_y\geq b_y$ and $b_y\geq
d_y$ by constraint~2, so $R(f)\subset A(c,d)<_yq^{-+}$.
\end{proof}
\begin{figure}[tbp]
\centering \includegraphics[width=4in]{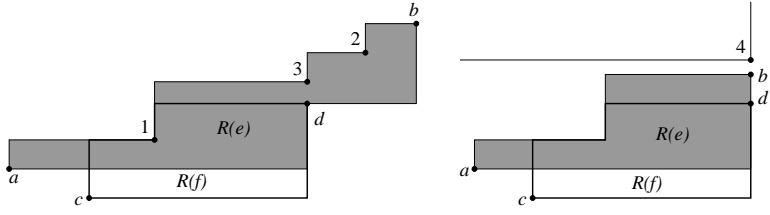}
\caption{$R(e)$ is shaded and $R(f)$ has a thicker
  boundary.}\label{fig:Rconnected}
\end{figure}

\begin{theorem}\label{t-iff}
An approximate subdivision has a realization iff it is consistent.
\end{theorem}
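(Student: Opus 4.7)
The plan is to assemble the theorem from the supporting lemmas already proved. The ``only if'' direction is immediate from Lemma~\ref{lem:onlyif}: any realization, viewed pointwise, forces the monotonicity and non-crossing properties that translate exactly into the four constraints of Def.~\ref{def:consistent}.

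For the ``if'' direction, I would use the detour through ordered monotone assignments (Def.~\ref{def:ordered}), since Lemma~\ref{lem:ordered} reduces the existence of a realization to producing such an assignment. So the task becomes: given a consistent approximate subdivision, exhibit an ordered monotone assignment $M(\cdot)$. The natural candidate is $R(e)$ as defined in the text, and by Lemmas~\ref{lem:Rmonotone} and~\ref{lem:Rordered} this candidate is (a)~monotone with respect to each $e$ and (b)~ordered as an assignment. Applying Lemma~\ref{lem:ordered} to $R$ then yields a realization, completing the equivalence.

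The substantive content of the argument therefore lives in Lemmas~\ref{lem:Rmonotone} and~\ref{lem:Rordered}, which in turn rest on Lemma~\ref{lem:p++} (the geometric disjointness of the ``bad'' quadrants $p^{+-}$ and $q^{-+}$, which is the key use of the four consistency constraints) and on the transitivity Lemma~\ref{l-trans}. I expect the main obstacle for a reader to be not the bookkeeping in the theorem itself but trusting that the case analysis in Lemma~\ref{lem:Rordered} has handled every way a pair $f \prec_y e$ can relate geometrically, particularly the delicate case where $e$ and $f$ are both increasing (or both decreasing), which is why Def.~\ref{def:ordered} weakens the requirement to $M(f) <_y u(e)$ rather than $M(f) <_y M(e)$.

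Given all this, the theorem's proof itself is essentially a one-liner: cite Lemma~\ref{lem:onlyif} for one direction, and chain Lemmas~\ref{lem:Rmonotone}, \ref{lem:Rordered}, \ref{lem:ordered} for the other. I would not attempt any new construction here; the work has already been front-loaded into the preceding lemmas by design.
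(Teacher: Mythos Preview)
Your proposal is correct and matches the paper's own proof essentially line for line: the paper cites Lemma~\ref{lem:onlyif} for necessity and chains Lemmas~\ref{lem:ordered}, \ref{lem:Rmonotone}, and~\ref{lem:Rordered} for sufficiency, exactly as you outline. Your additional commentary about where the real work lives (Lemmas~\ref{lem:p++} and~\ref{l-trans}) and why Def.~\ref{def:ordered} weakens the condition in the both-increasing case is accurate and helpful context, though it goes beyond what the paper's terse proof says explicitly.
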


\begin{proof}
Lem.~\ref{lem:onlyif} proves necessity.  Lems.~\ref{lem:ordered},
\ref{lem:Rmonotone}, and~\ref{lem:Rordered} prove sufficiency.
\end{proof}

\section{Error Analysis}\label{s-error}

We prove that a consistent $\delta$-splitting of a $\delta$-accurate approximate
subdivision is a $\delta$-refinement (Thm.~\ref{t-refine}).  This implies that
the refinement algorithm output is a $\delta$-refinement (Thm.~\ref{t-delta}).
We first prove the special case where the approximate subdivision is consistent,
hence is its own refinement (Thm.~\ref{t-realize}).  We define the
$\delta$-offsets of a curve (Def.~\ref{def:offset} and Fig.~\ref{f-offset}a) and
prove them monotone (Lem.~\ref{l-offsetmon}).  We prove that the $\prec_y$
relation on the curves of a $\delta$-accurate approximate subdivision implies a
$<_y$ relation on their $\delta$-offsets
(Lems.~\ref{l-offset}--\ref{l-poffset}).  We use these tools to sharpen the
Thm.~\ref{t-iff} realization into one that is $\delta$-close to the approximate
subdivision.

\begin{lemma}\label{l-distmon}
For a point, $a$, and an increasing (decreasing) curve, $e$, $\dist(a, e)$ is
decreasing (increasing) in $a_x$ and increasing in $a_y$ for $a>_ye$ and is
increasing (decreasing) in $a_x$ and decreasing in $a_y$ for $a<_ye$.
\end{lemma}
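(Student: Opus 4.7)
The plan is to reduce to one case by symmetry, prove a geometric fact about where the nearest point of $\bar{e}$ to $a$ lies, and then turn that fact into the stated monotonicities via an envelope argument.

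First I would apply the reflections $(x,y)\mapsto(-x,y)$ and $(x,y)\mapsto(x,-y)$: each preserves Euclidean distance and exchanges ``increasing'' with ``decreasing'' curves, and together they also exchange $a>_y e$ with $a<_y e$. So it suffices to handle one case, say $e$ increasing and $a>_y e$, proving that $\dist(a,e)$ is non-increasing in $a_x$ and non-decreasing in $a_y$; the other three statements of the lemma will follow by relabelling under these reflections.

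For that case, I would parametrize $\bar{e}$ as $\{(t,g(t)):t\in[x_l,x_r]\}$ with $g$ continuous and strictly increasing, so that $a>_y e$ becomes $a_x\in(x_l,x_r)$ and $a_y>g(a_x)$. The key geometric claim to prove is that every minimizer $t^*$ of
\[
F(t):=(a_x-t)^2+(a_y-g(t))^2
\]
satisfies $t^*\ge a_x$ and $g(t^*)\le a_y$. Both halves go by contradiction. If $g(t^*)>a_y$, continuity of $g$ gives $t'\in(a_x,t^*)$ with $g(t')=a_y$, and then $F(t')=(a_x-t')^2<(a_x-t^*)^2\le F(t^*)$. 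If $t^*<a_x$, then $g(t^*)<g(a_x)<a_y$, so $t'=a_x$ gives $F(t')=(a_y-g(a_x))^2<(a_y-g(t^*))^2\le F(t^*)$.

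Given the claim, I would conclude by an envelope argument. Fixing $a_y$ and letting $\phi(x):=\dist((x,a_y),e)^2$ on the open set $\{x:(x,a_y)>_y e\}$, $\phi$ is the pointwise minimum of the smooth family $(x-t)^2+(a_y-g(t))^2$ in $t$, hence locally Lipschitz in $x$; wherever it is differentiable, the envelope formula gives $\phi'(x)=2(x-t^*(x))\le 0$ by the claim, so $\phi$ is non-increasing by absolute continuity. The same reasoning with $a_x$ fixed yields a derivative $2(a_y-g(t^*))\ge 0$, so $\dist$ is non-decreasing in $a_y$. The hard part will be the geometric claim about the closest point's location; the remaining technical points (non-unique minimizers, boundary minimizers $t^*\in\{x_l,x_r\}$, and isolated non-smooth points of $g$) will be absorbed by working with one-sided derivatives of the locally Lipschitz $\phi$, requiring no new ideas.
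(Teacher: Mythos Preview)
Your proof is correct. The paper itself gives no argument here --- it merely cites Lemma~8 of the authors' earlier paper --- so there is no in-text route to compare against; you are supplying what the paper omits. The reflection reduction to the single case ``$e$ increasing, $a>_ye$'' is valid. The geometric claim that any nearest point $(t^*,g(t^*))$ satisfies $t^*\ge a_x$ and $g(t^*)\le a_y$ follows from your two contradictions (for the first, note that $g(a_x)<a_y<g(t^*)$ already forces $t^*>a_x$ before you invoke the intermediate value theorem). The envelope step is sound: $x\mapsto\dist((x,a_y),e)$ is $1$-Lipschitz, so $\phi$ is locally Lipschitz, hence absolutely continuous, and at every differentiability point the upper-touching function $x\mapsto(x-t^*)^2+\text{const}$ forces $\phi'(x)=2(x-t^*)\le0$. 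The caveats you flag are harmless --- smoothness of $g$ is never actually used, and non-unique or boundary minimizers do not disturb the a.e.\ derivative bound. You obtain weak monotonicity, which is all the subsequent applications (the offset lemmas) require.
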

\begin{proof}
Follows from Lem.~8 in our prior paper.\cite{sacks-milenkovic06}
\end{proof}

\begin{definition}[$e_{-\delta}$ and $e_{+\delta}$]\label{def:offset}
The {\em $\delta$-offsets\/} of a curve, $e$, are
\begin{eqnarray*}
e_{-\delta} & = & \{a<_ye| \dist(a,e)=\delta\}\\
e_{+\delta} & = & \{a>_ye| \dist(a,e)=\delta\}.
\end{eqnarray*}
\end{definition}

\begin{lemma}\label{l-offsetmon}
The offsets $e_{\pm\delta}$ are monotone.
\end{lemma}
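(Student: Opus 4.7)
The plan is to derive monotonicity of $e_{\pm\delta}$ directly from Lemma~\ref{l-distmon}. I will focus on the representative case of an increasing curve $e$ together with its upper offset $e_{+\delta}$; the other three combinations (upper/lower offset of an increasing or decreasing curve) follow by symmetric arguments, and horizontal or vertical curves yield trivially monotone parallel offsets.

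Fix two points $a, a' \in e_{+\delta}$ with $a_x \leq a'_x$. The goal is to show $a_y \leq a'_y$, which, combined with the analogous comparison in the other coordinate, gives that the offset is both $x$-monotone and $y$-monotone. Suppose for contradiction that $a_y > a'_y$, and introduce the intermediate point $p = (a'_x, a_y)$. Since $a' >_y e$ and $p_y = a_y > a'_y$, we still have $p >_y e$.

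Now apply Lemma~\ref{l-distmon} along two axis-aligned steps. Along the horizontal segment from $a$ to $p$ the $x$-coordinate weakly increases, $y$ is fixed, and the path stays above the increasing curve $e$; since $\dist(\cdot,e)$ is decreasing in $x$ above an increasing curve, $\dist(p,e) \leq \dist(a,e) = \delta$. Along the vertical segment from $p$ to $a'$ the $x$-coordinate is fixed while $y$ strictly decreases; since $\dist(\cdot,e)$ is increasing in $y$ above the curve, $\dist(a',e) < \dist(p,e) \leq \delta$, contradicting $a' \in e_{+\delta}$. The degenerate subcase $a_x = a'_x$ forces $p = a'$, which directly conflicts with $a_y > a'_y$, so the contradiction still closes.

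The main obstacle I anticipate is the handling of boundary situations where $a$ or $a'$ has an $x$-coordinate outside the $x$-range of $e$, so that the nearest point on $e$ is an endpoint and the monotonicity supplied by Lemma~\ref{l-distmon} degenerates to weak inequality in one of the two variables. A short case split at each endpoint of $e$, combined with strict monotonicity in whichever variable does strictly change along the two-step path, should cover these cases without new ideas and complete the proof.
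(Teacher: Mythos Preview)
Your argument is sound in spirit and, like the paper, rests entirely on Lemma~\ref{l-distmon}, but the paper takes a shorter and somewhat different route. Rather than fixing $a_x\le a'_x$ and deriving $a_y\le a'_y$ via a two-step axis-aligned path, the paper picks two points $a,b\in e_{\pm\delta}$ on a common vertical line with $a_y<b_y$ and observes that, since $\dist(\cdot,e)$ is monotone in $y$ along that line and equals $\delta$ at both $a$ and $b$, it must equal $\delta$ on the entire segment $[a,b]$; hence the vertical line meets $e_{\pm\delta}$ in a connected set. The same argument with a horizontal line finishes the proof. This phrasing directly yields the ``connected intersection with axis-parallel lines'' form of monotonicity and never needs to distinguish strict from weak monotonicity of the distance function. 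Your version, by contrast, actually proves the stronger fact that $e_{+\delta}$ is the graph of an increasing function, at the cost of an extra hypothesis.

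Two small points about your write-up. First, the vertical step uses $\dist(a',e)<\dist(p,e)$, i.e.\ \emph{strict} monotonicity in $y$; Lemma~\ref{l-distmon} as stated does not commit to strictness, so you would need to justify it separately (it does hold for $a>_y e$ with $e$ increasing, but the paper's route sidesteps the issue entirely). Second, in the degenerate subcase $a_x=a'_x$ you write ``forces $p=a'$,'' but in fact $p=(a'_x,a_y)=a$; the contradiction still closes via the vertical step alone, so this is only a slip.
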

\begin{proof}
Let a vertical line intersect $e_{\pm\delta}$ at $a$ and $b$ with $a_y<b_y$.
Since $\dist(a,e)=\delta$ and $\dist(b,e)=\delta$, $\dist(c,e)=\delta$ for
$c\in[a,b]$ by Lem.~\ref{l-distmon}.  Hence, the line intersects $e_{\pm\delta}$
in a connected set.  Likewise for a horizontal line.
\end{proof}

Lems.~\ref{l-offset} and~\ref{l-poffset} apply to a $\delta$-accurate
approximate subdivision.

\begin{lemma}\label{l-offset}
For two curves, $e$ and $f$, $e\preceq_yf$ implies $e_{-\delta}<_yf_{+\delta}$.
\end{lemma}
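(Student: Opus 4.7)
The plan is to split on whether $e = f$ or $e \prec_y f$. If $e = f$, the result is immediate: for any $p \in e_{-\delta}$ and $q \in e_{+\delta}$ with $p_x = q_x = x_0$, the conditions $p <_y e$ and $e <_y q$ at $x_0$ give $p_y < e(x_0)_y < q_y$.

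Assume instead $e \prec_y f$. By $\delta$-accuracy (Def.~\ref{def:delta-accurate}), pick $\delta$-deformations $\tilde e$ of $e$ and $\tilde f$ of $f$ with $\tilde e <_y \tilde f$. Since $\tilde e, \tilde f$ share endpoints with $e, f$, the shared $x$-coordinate $x_0 = p_x = q_x$ lies in both $x$-ranges, so every $y$-value on $\tilde e$ at $x_0$ is strictly less than every $y$-value on $\tilde f$ at $x_0$.

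The crux is to show that $p_y < y$ for every point $(x_0, y) \in \tilde e$; a symmetric argument yields $y' < q_y$ for every $(x_0, y') \in \tilde f$, and chaining through $\tilde e <_y \tilde f$ at $x_0$ gives $p_y < q_y$. Fix $(x_0, y) \in \tilde e$. Because $\tilde e$ has Hausdorff distance less than $\delta$ from $e$, $\dist((x_0, y), e) < \delta$. If $(x_0, y) \geq_y e$ at $x_0$, then $y \geq e(x_0)_y > p_y$. Otherwise $(x_0, y) <_y e$, and by Lem.~\ref{l-distmon} the function $a_y \mapsto \dist((x_0, a_y), e)$ is monotone along the vertical line below $e$, so the set of $a_y$ with $\dist((x_0, a_y), e) < \delta$ and $(x_0, a_y) <_y e$ is an upward-open interval whose lower endpoint $p^*$ satisfies $\dist((x_0, p^*), e) = \delta$. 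By definition of $e_{-\delta}$, any point $p \in e_{-\delta}$ at $x_0$ has $p_y \leq p^*$, hence $y > p^* \geq p_y$.

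The main obstacle is the absence of a pointwise vertical bound between $\tilde e$ and $e$: a Hausdorff bound of $\delta$ does not translate to a uniform bound on $|\tilde e(x_0)_y - e(x_0)_y|$ when $e$ is steep. The argument sidesteps this by working with distance level sets of $e$ through Lem.~\ref{l-distmon}, rather than with pointwise vertical offsets, converting the bound $\dist((x_0, y), e) < \delta$ directly into $y > p_y$. The same reasoning accommodates a vertical sub-segment of $\tilde e$ or $\tilde f$ at $x_0$, since the claim is stated uniformly over all $(x_0, y) \in \tilde e$ (respectively $\tilde f$).
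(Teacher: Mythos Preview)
Your proof is correct and follows essentially the same approach as the paper: split on $e=f$ versus $e\prec_y f$, invoke $\delta$-accuracy to obtain $\tilde e<_y\tilde f$, and use Lem.~\ref{l-distmon} to compare a point on the offset (distance exactly $\delta$) with a point on the deformation (distance less than $\delta$) at the common $x$-coordinate. Your version is slightly more explicit in handling the case split on whether the deformation point lies above or below $e$ and in allowing vertical segments of $\tilde e$, but the argument is the same.
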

\begin{proof}
By Def.~\ref{def:offset}, $e_{-\delta}<_y e <_y e_{+\delta}$.  For $e\prec_y f$,
let $a\in e_{-\delta}$ and $b\in f_{+\delta}$ with $a_x=b_x$
(Fig.~\ref{f-offset}b).  There exist $\delta$-deformations, $\tilde{e}$ and
$\tilde{f}$, with $\tilde{e}<_y\tilde{f}$.  Let $\tilde{a}\in\tilde{e}$ and
$\tilde{b}\in\tilde{f}$ satisfy $\tilde{a}_x=\tilde{b}_x=a_x$, so
$\tilde{a}_y<\tilde{b}_y$.  We have $a_y<\tilde{a}_y$ by Lem.~\ref{l-distmon}
because $\dist(a,e)=\delta$ and $\dist(\tilde{a},e)<\delta$; likewise
$\tilde{b}_y<b_y$.  Thus, $a_y<\tilde{a}_y<\tilde{b}_y<b_y$.
\end{proof}

\begin{figure}[tbp]
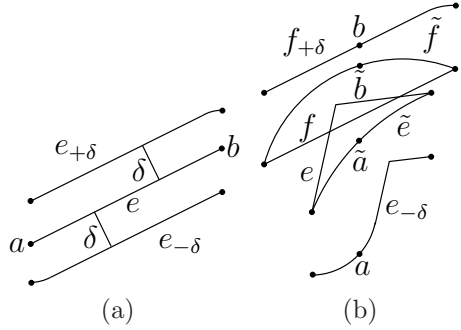

\centering
\begin{tabular}{cc}
\input{offset1L.pstex_t} &
\input{offset2L.pstex_t} \\
(a) & (b)
\end{tabular}
\caption{Offsets (a) and their properties (b).}\label{f-offset}
\end{figure}

\begin{lemma}\label{l-poffset}
If $p\in L(e)$, $p<_y\overline{e_{+\delta}}$; if $q\in U(e)$,
$q>_y\overline{e_{-\delta}}$.
\end{lemma}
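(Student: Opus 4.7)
The plan is to adapt the argument of Lem.~\ref{l-offset}. I prove the first assertion; the second is symmetric. Let $p \in L(e)$. By Def.~\ref{def:LU}, $p$ is an endpoint of some curve $f \preceq_y e$ with $a_x \leq p_x \leq b_x$. Fix $c \in \overline{e_{+\delta}}$ with $c_x = p_x$; the goal is to show $p_y < c_y$. Applying $\delta$-accuracy (Def.~\ref{def:delta-accurate}) produces $\delta$-deformations $\tilde{f}, \tilde{e}$ with $\tilde{f} <_y \tilde{e}$ when $f \prec_y e$; when $f = e$ I take $\tilde{f} = \tilde{e}$. Because deformations preserve endpoints and $p_x \in [a_x,b_x]$, $p \in \overline{\tilde{f}}$ and $\overline{\tilde{e}}$ contains a unique point $\tilde{q}$ with $\tilde{q}_x = p_x$.

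I then chain $p_y \leq \tilde{q}_y < c_y$. The left inequality follows from $\tilde{f} \leq_y \tilde{e}$ by approaching $p$ along the interior of $\tilde{f}$: for $x$ tending to $p_x$ from the side on which $\tilde{f}$ extends, $\tilde{f}(x)_y < \tilde{e}(x)_y$, and passage to the limit gives $p_y \leq \tilde{q}_y$ (equality can arise only when $f = e$ and $p$ is the corresponding endpoint of $\tilde{e}$).

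The right inequality is established on the vertical line $x = p_x$. By definition of a $\delta$-deformation, $\dist(\tilde{q}, e) < \delta$. By continuity of $\dist(\cdot,e)$ and $c \in \overline{e_{+\delta}}$, $\dist(c, e) = \delta$, and $c \geq_y e$ by passing to the limit along $e_{+\delta}$. Lem.~\ref{l-distmon} says that $\dist((p_x, y), e)$ is strictly monotone in $y$ on each side of $e$ with minimum $0$ at $y = e(p_x)_y$. Hence, on the upper side, the unique level $\{\dist = \delta\}$ point is $c$, and any point with $\dist < \delta$ lies strictly below it. If $\tilde{q}_y \leq e(p_x)_y$, then trivially $\tilde{q}_y \leq e(p_x)_y < c_y$; if $\tilde{q}_y > e(p_x)_y$, strict monotonicity of $\dist$ together with $\dist(\tilde{q}, e) < \delta = \dist(c, e)$ gives $\tilde{q}_y < c_y$. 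Either way, $p_y \leq \tilde{q}_y < c_y$.

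The main obstacle I anticipate is the boundary case $p_x \in \{a_x, b_x\}$, in which $\tilde{q}$ coincides with an endpoint of $\tilde{e}$ and $\overline{e_{+\delta}}$ on the line $x = p_x$ is only a limit set of offset points from the interior. Here one must verify that every such limit $c$ still satisfies $\dist(c, e) = \delta > 0$, so that $c$ cannot collapse onto the endpoint of $e$, and that the one-sided monotonicity of $\dist$ along the vertical line still yields the strict inequality $\tilde{q}_y < c_y$ that closes the chain.
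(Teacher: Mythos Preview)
Your proposal is correct and follows essentially the same approach as the paper: use $\delta$-accuracy to produce a deformation $\tilde{e}$, compare $p$ to a point of $\overline{\tilde{e}}$ on the same vertical line via $\tilde{f}<_y\tilde{e}$, and then compare that point to $\overline{e_{+\delta}}$ using that $\tilde{e}$ is $\delta$-close to $e$. The only organizational difference is that the paper first records the set-level fact $\overline{\tilde{e}}<_y\overline{e_{+\delta}}$ (handling the endpoints $a,b$ of $e$ directly since $\dist(a,e)=\dist(b,e)=0<\delta$) and then applies it, whereas you argue pointwise for a fixed $c$ and invoke Lem.~\ref{l-distmon} explicitly; the content is the same.
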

\begin{proof}
We prove the $p \in L(e)$ case for $e[a,b]$.  By construction, $e<_ye_{+\delta}$
hence $\{a,b\}<_y\overline{e_{+\delta}}$ by continuity and because
$\dist(a,e)=\dist(b,e)=0<\delta$.  If $\tilde{e}$ is a $\delta$-deformation of
$e$, $\overline{\tilde{e}}=\tilde{e}\cup\{a,b\}<_y\overline{e_{+\delta}}$ and
similarly $\overline{\tilde{e}}>_y\overline{e_{-\delta}}$.

If $p\in L(e)$, $p$ is an endpoint of $f\prec_ye$ and hence an endpoint of
$\tilde{f}<_y\tilde{e}$ by Def.~\ref{def:delta-accurate}.  By continuity,
$p_y\leq r_y$ for $r\in\overline{\tilde{e}}$ with $p_x=r_x$.  Since
$\overline{\tilde{e}}<_y\overline{e_{+\delta}}$, $r<_y\overline{e_{+\delta}}$
hence $p<_y\overline{e_{+\delta}}$.
\end{proof}

For the proof of Thm.~\ref{t-realize}, $R_\delta(e)$ serves the role
that $R(e)$ did for Thm.~\ref{t-iff}.

\begin{definition}[$R_\delta(e)$]\label{def:Rdelta}
For an approximate subdivision, define $R_\delta(e) = R(e)$ for horizontal and
vertical $e$.  For increasing $e$, $R_\delta(e)$ is $R(e)$ minus $f_{-\delta}^-$
for all increasing $f\preceq_ye$ and minus $g_{+\delta}^+$ for all increasing
$g\succeq_y e$ (Fig.~\ref{fig:Rdelta}).  Similarly for decreasing $e$.
\end{definition}
\begin{figure}[tbp]
\centering
\includegraphics[width=4in]{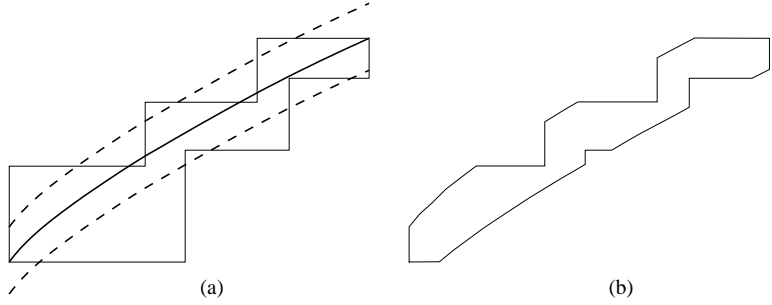}
\caption{a) $e$, $e_{-\delta}$, $e_{+\delta}$, and $R(e)$; (b)
  $R(e)-e_{-\delta}^-\cup e_{+\delta}^+$.}\label{fig:Rdelta} 
\end{figure}

\begin{lemma}\label{lem:Rdeltapath}
If an approximate subdivision is consistent and $\delta$-accurate, the sets
$R_\delta(e)$ are monotone.
\end{lemma}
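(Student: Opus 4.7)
The plan splits on the direction of $e$. For horizontal or vertical $e$ we have $R_\delta(e)=R(e)$, and Lem.~\ref{lem:Rmonotone} gives the conclusion; the decreasing case is a mirror of the increasing one, so I would focus on increasing $e[a,b]$. Here $R_\delta(e)$ is obtained from $R(e)$ by further deleting $f_{-\delta}^-$ for each increasing $f\preceq_y e$ and $g_{+\delta}^+$ for each increasing $g\succeq_y e$. I would describe its lower boundary inside $A(a,b)$ as the upper envelope of $l(e)$ (the monotone lower boundary of $R(e)$ supplied by Lem.~\ref{lem:Rmonotone}) with the arcs $f_{-\delta}$, and dually its upper boundary as the lower envelope of $u(e)$ with the $g_{+\delta}$. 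By Lem.~\ref{l-offsetmon} every $f_{-\delta}$ and $g_{+\delta}$ is monotone, and since all contributing curves are increasing these offsets are monotone increasing, so the upper and lower envelopes are themselves monotone increasing arcs.

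The second step is to check that the boundary arcs still run from $a$ to $b$, i.e., no offset cuts off an endpoint. If $a_x$ lies in the $x$-range where $f_{-\delta}$ is defined and $a_x$ is in $f$'s span, then $f\prec_y e$ and $a$ is an endpoint of $e$, so $a\in U(f)$, and Lem.~\ref{l-poffset} gives $a>_y\overline{f_{-\delta}}$, so $f_{-\delta}$ passes strictly below $a$. Otherwise $f_{-\delta}$ either does not reach $x=a_x$ at all, or reaches it only through the semicircular arc of radius $\delta$ about one of $f$'s endpoints, which sits at or below that endpoint; since that endpoint lies in $L(e)$, the arc is already contained in $L(e)^{+-}$ and was subtracted from $R(e)$ in the first place. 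The same analysis handles $b$ and the $g_{+\delta}$ arcs on the upper side.

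The last step is to show that the new lower and upper boundaries stay strictly separated on $(a_x,b_x)$, so that $R_\delta(e)$ is a nonempty open region. Where $l(e)$ and $u(e)$ remain active the strict inequality is inherited from Lem.~\ref{lem:Rmonotone}. Where an $f_{-\delta}$ and a $g_{+\delta}$ are both active on the main (non-semicircular) portions at some $x$, both $f$ and $g$ contain $x$ in their $x$-span, hence overlap each other; since $f\preceq_y e\preceq_y g$, Lem.~\ref{l-trans} gives $f\preceq_y g$, and Lem.~\ref{l-offset} yields $f_{-\delta}<_y g_{+\delta}$ at $x$. The mixed subcases (one side is still pinned by $l(e)$ or $u(e)$) reduce to applying Lem.~\ref{l-poffset} to the endpoint of $U(e)$ or $L(e)$ that pins $u(e)$ or $l(e)$ at $x$. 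I expect the main obstacle to be precisely this endpoint bookkeeping: each offset extends past its source curve's $x$-range by a semicircular arc of radius $\delta$, and one must argue that those extremal pieces either already lie in $L(e)^{+-}$ or $U(e)^{-+}$ (because their nearest source endpoint lies in $L(e)$ or $U(e)$) or else fall outside $A(a,b)$, so they contribute nothing new to the outline of $R_\delta(e)$ and cannot be played off against offsets from disjoint curves to fabricate a crossing.
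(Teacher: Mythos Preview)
Your overall architecture matches the paper's: treat horizontal/vertical trivially, focus on increasing $e$, describe the new lower boundary as an upper envelope of $l(e)$ with the $f_{-\delta}$, use Lem.~\ref{l-offset} and Lem.~\ref{l-trans} to separate $f_{-\delta}$ from $g_{+\delta}$, and argue that the ``extremal pieces'' of each $f_{-\delta}^-$ are already absorbed by $L(e)^{+-}$ or fall outside $A(a,b)$. That last idea is exactly what the paper does, phrased as ``$c,d\notin R(e)$ by Lem.~\ref{lem:Rordered}'', so only the monotone arc $f_{-\delta}$ itself can enter $R(e)$.

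Two points need correction. First, a geometric slip: by Def.~\ref{def:offset}, membership in $f_{-\delta}$ requires $a<_y f$, which forces $a_x$ into the open $x$-range of $f$. So $f_{-\delta}$ does \emph{not} extend past $f$'s $x$-range by a semicircular cap; the boundary of $f_{-\delta}^-$ that you must control consists of vertical rays at $x=c_x$ and $x=d_x$ below the limit points of $f_{-\delta}$. Your proposed fix still applies to these vertical sides (they lie in $c^{+-}$, $d^{+-}\subset L(e)^{+-}$ when $c,d\in L(e)$, and otherwise lie outside $A(a,b)$), so this is a mislabelled picture rather than a broken argument.

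Second, and more substantively, your ``mixed subcase'' does not close. You propose to show $f_{-\delta}(x)<u(e)(x)$ by applying Lem.~\ref{l-poffset} to the point $q\in U(e)$ that pins $u(e)$ at $x$. But Lem.~\ref{l-poffset} gives $q>_y\overline{f_{-\delta}}$ only when $q\in U(f)$, and the pinning $q$ for $u(e)$ can have $q_x>d_x$, in which case the curve carrying $q$ need not overlap $f$ in $x$ and $q\notin U(f)$. The paper sidesteps this by a two-step indirection: first use Lem.~\ref{l-poffset} on each $q\in U(f)$ (together with the fact that $f_{-\delta}$ is increasing) to get $f_{-\delta}<_y U(f)^{-+}$, hence $f_{-\delta}\subset R(f)^-$; then invoke Lem.~\ref{lem:Rordered}, which already established $R(f)<_y u(e)$ for increasing $f\prec_y e$, to conclude $f_{-\delta}<_y u(e)$ on the whole overlap. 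That routing through $R(f)$ is the missing idea in your sketch.
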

\begin{proof}
We prove the $e[a,b]$ increasing case.  Let $f[c,d]\preceq_ye$ be increasing.
For $q\in U(f)$, $\overline{f_{-\delta}}<_yq$ by Lem.~\ref{l-poffset}, so
$f_{-\delta}<_yq^{-+}$, since $f_{-\delta}$ is increasing.  Hence,
$f_{-\delta}<_yU(f)^{-+}$, which implies $f_{-\delta}\subset R(f)^-$, since both
span the $x$-interval $(c_x,d_x)$.  Since $R(f)<_yu(e)$, $R(f)^-<_yu(e)$ and
$f_{-\delta}<_yu(e)$.  Likewise, $g_{+\delta}>_yl(e)$ for increasing
$g\succeq_ye$.  If $f\preceq_y e$, $e\preceq_yg$, and $f$ and $g$ overlap in
$x$, $f\preceq_yg$ by Lem.~\ref{l-trans} and $f_{-\delta}<_yg_{+\delta}$ by
Lem.~\ref{l-offset}.  We conclude that the lower and upper sets that are removed
from $R(e)$ are disjoint.  Furthermore, $c,d\not\in R(e)$ by
Lem.~\ref{lem:Rordered} and $f_{-\delta}<_yf$.  Thus, the only part of the
boundary of $f_{-\delta}^-$ that enters $R(e)$ is part of $f_{-\delta}$, which
is monotonic.  Therefore, the lower boundary of $R_\delta(e)$ is monotonic and
similarly the upper boundary.
\end{proof}

\begin{lemma}\label{lem:Rdeltaord}
Let $u_\delta(e)$ be the upper boundary of $R_\delta(e)$.  If
$f[c,d]\prec_ye[a,b]$ are both increasing or both decreasing,
$R_\delta(f)<_yu_\delta(e)$.
\end{lemma}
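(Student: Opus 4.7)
My plan is to show pointwise that every point of $R_\delta(f)$ lies strictly below the upper boundary $u_\delta(e)$, via a case split on the local structure of $u_\delta(e)$. The first task is to characterize $u_\delta(e)$. Since
\[
R_\delta(e) \;=\; R(e) \;-\; \bigcup_{h \preceq_y e} h_{-\delta}^{-} \;-\; \bigcup_{g \succeq_y e} g_{+\delta}^{+},
\]
and Lemma~\ref{lem:Rdeltapath} has already established that $R_\delta(e)$ is connected and monotone, the upper boundary $u_\delta(e)$ must be a monotone curve that locally coincides either with $u(e)$ or with some $g_{+\delta}$ for an increasing $g\succeq_y e$. The lower-saturated removals $h_{-\delta}^{-}$ cannot shape the top.

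With this characterization in hand, I would fix an arbitrary point $p \in R_\delta(f)$ whose $x$-coordinate $p_x$ lies in the $x$-domain of $u_\delta(e)$, and write $u_\delta(e)(p_x)$ for its unique $y$-coordinate there (unique by monotonicity). The goal is then $p_y < u_\delta(e)(p_x)$.

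In Case~1, the arc of $u_\delta(e)$ above $p_x$ lies on $u(e)$. The proof of Lemma~\ref{lem:Rordered} already established, under the present hypotheses ($f \prec_y e$, both increasing), that $R(f) <_y u(e)$; since $R_\delta(f) \subset R(f)$, this gives $p_y < u(e)(p_x) = u_\delta(e)(p_x)$ at once. In Case~2, the arc of $u_\delta(e)$ above $p_x$ lies on $g_{+\delta}$ for some increasing $g \succeq_y e$ whose $x$-domain contains $p_x$. Then $f$, $e$, and $g$ all contain $p_x$ in their $x$-domains and so pairwise overlap in $x$; applying Lemma~\ref{l-trans} to $f \prec_y e \preceq_y g$ gives $f \prec_y g$, hence $g \succeq_y f$. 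By Definition~\ref{def:Rdelta}, the set $g_{+\delta}^{+}$ is therefore also removed from $R(f)$ when constructing $R_\delta(f)$, so $p \notin g_{+\delta}^{+}$, whence $p_y < g_{+\delta}(p_x) = u_\delta(e)(p_x)$.

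I expect the main obstacle to be the first step, the structural characterization of $u_\delta(e)$. One must verify that no other curves contribute to this boundary, in particular that no spurious vertical transitions arise at $x$-endpoints of the removed offset sets, and confirm that the lower-saturated removals $h_{-\delta}^{-}$ only affect the lower boundary. Both facts follow from Lemma~\ref{lem:Rdeltapath} (connectedness and monotonicity of $R_\delta(e)$) together with Lemma~\ref{l-offsetmon} (monotonicity of the offsets). Once this structural fact is in place, the case analysis itself is a mechanical application of the already-established Lemmas~\ref{lem:Rordered} and~\ref{l-trans}, and the decreasing case is symmetric.
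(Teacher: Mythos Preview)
Your proposal is correct and follows essentially the same approach as the paper: both use $R_\delta(f)\subset R(f)<_y u(e)$ from Lemma~\ref{lem:Rordered}, and both invoke Lemma~\ref{l-trans} to conclude that any increasing $g\succeq_y e$ overlapping $p$ in $x$ also satisfies $g\succ_y f$, so that $g_{+\delta}^+$ is removed from $R_\delta(f)$ as well. The only cosmetic difference is that you phrase this as a case split on which arc of $u_\delta(e)$ sits above $p_x$, whereas the paper simply checks that $p$ lies below $u(e)$ and below every relevant $g_{+\delta}^+$ simultaneously, avoiding the need for your preliminary structural characterization of $u_\delta(e)$.
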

\begin{proof}
Pick $p\in R_\delta(f)$ with $a_x\leq p_x\leq b_x$.  Since $R_\delta(f)\subset
R(f)<_y u(e)$, $p<_y u(e)$.  Pick $g\succeq_y e$ that overlaps $p$ in $x$.
Thus, $g$ overlaps $f$ in $x$ and $g\succ_y f$ by Lem.~\ref{l-trans}.
Therefore, $g_{+\delta}^+$ was removed from $R_\delta(f)$ hence
$p<_yg_{+\delta}^+$.  Since $p$ is below the upper sets removed from
$R_\delta(e)$ that overlap it in $x$, $p<_yu_\delta(e)$.  Hence,
$R_\delta(f)<_yu_\delta(e)$.
\end{proof}

\begin{theorem}\label{t-realize}
A consistent, $\delta$-accurate approximate subdivision is a $\delta$-refinement.
\end{theorem}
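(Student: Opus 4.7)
The plan is to follow the template of Thm.~\ref{t-iff}, but with the sharper monotone assignment $R_\delta$ in place of $R$. Specifically, I will apply Lem.~\ref{lem:ordered} with $M(e)=R_\delta(e)$ to obtain a realization $r$ satisfying $r(e)\subset R_\delta(e)$, and then read off the Hausdorff bound from the defining subtraction of $e_{-\delta}^-$ and $e_{+\delta}^+$ inside $R_\delta(e)$.

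First I would verify the three properties required of an ordered monotone assignment. Monotonicity of $R_\delta(e)$ for horizontal or vertical $e$ is trivial ($R_\delta(e)=e$), and for monotone $e$ it is exactly Lem.~\ref{lem:Rdeltapath}. The fact that no curve endpoint lies in $R_\delta(e)$ is inherited from $R(e)\supset R_\delta(e)$ together with the endpoint-avoidance part of the proof of Lem.~\ref{lem:Rordered}. For the order condition, there are two cases for $f\prec_y e$: if both are increasing or both are decreasing, Lem.~\ref{lem:Rdeltaord} gives $R_\delta(f)<_y u_\delta(e)$ directly, which is exactly the weaker clause in Def.~\ref{def:ordered}. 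Otherwise, Lem.~\ref{lem:Rordered} gives $R(f)<_y R(e)$, and since $R_\delta(f)\subset R(f)$ and $R_\delta(e)\subset R(e)$, the stronger $<_y$ relation survives the restriction.

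Once $R_\delta$ is an ordered monotone assignment, Lem.~\ref{lem:ordered} provides a realization $r$ with $r(e)\subset R_\delta(e)$ for all $e$. It remains to show that each $r(e)$ has Hausdorff distance at most $\delta$ from $e$. For horizontal or vertical $e$ this is immediate because $r(e)=e$. For increasing $e$, by construction $R_\delta(e)$ is disjoint from $e_{-\delta}^-$ and $e_{+\delta}^+$ (take $f=g=e$ in Def.~\ref{def:Rdelta}). Hence any point $p\in r(e)$ lies strictly between $e_{-\delta}$ and $e_{+\delta}$ on the vertical line through $p$, and Lem.~\ref{l-distmon} then yields $\dist(p,e)<\delta$. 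Conversely, for $q\in e$ the vertical line through $q$ meets $r(e)$ at a point $p$ in the same $\delta$-tube, giving $\dist(q,r(e))\le|p-q|<\delta$. Since $r(e)$ and $e$ share endpoints $a,b$, the boundary cases cause no trouble. This bounds the Hausdorff distance by $\delta$, making the consistent subdivision (which is its own refinement) a $\delta$-refinement in the sense of Def.~\ref{def:d-emb}.

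The only step that requires any real care is the Hausdorff bound: the monotone-set machinery naturally gives a pointwise vertical bound, and one has to check that this actually yields the two-sided Hausdorff estimate (including the direction $\sup_{q\in e}\dist(q,r(e))$) and that the endpoint coincidences do not spoil the strictness. The rest is a bookkeeping exercise in plugging $R_\delta$ into the proof skeleton of Thm.~\ref{t-iff}.
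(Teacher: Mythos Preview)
Your approach is essentially identical to the paper's: show that $R_\delta$ is an ordered monotone assignment (monotonicity from Lem.~\ref{lem:Rdeltapath}, endpoint-avoidance and the mixed-monotonicity order case inherited from $R_\delta(e)\subset R(e)$ and Lem.~\ref{lem:Rordered}, the same-monotonicity case from Lem.~\ref{lem:Rdeltaord}), then invoke Lem.~\ref{lem:ordered} to get $r(e)\subset R_\delta(e)$ and read off the $\delta$-bound from the subtraction of $e_{\pm\delta}^{\mp}$. The paper is actually terser than you on the Hausdorff step, asserting only the one-sided bound $\dist(p,e)<\delta$ for $p\in R_\delta(e)$ and declaring $r(e)$ a $\delta$-deformation; your extra care about the reverse direction is welcome, though note that your claimed inequality $|p-q|<\delta$ for the vertically aligned point does not follow directly from $\dist(p,e)<\delta$ when $e$ is steep.
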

\begin{proof}
The $R_\delta(e)$ are monotone by Lem.~\ref{lem:Rdeltapath}.  They are an
ordered assignment because $R_\delta(e)\subset R(e)$, so
$R_\delta(e)<_yR_\delta(f)$ for $e\prec_yf$ unless both are decreasing or both
are increasing, and that case is covered by Lem.~\ref{lem:Rdeltaord}.  By
Lem.~\ref{lem:ordered}, there is a realization, $r(e)\subset R_\delta(e)$.  For
$p\in R_\delta(e)$, $\dist(p,e)<\delta$ by construction, so $r(e)$ is a
$\delta$-deformation of $e$.
\end{proof}

We prove Thm.~\ref{t-refine} similarly, using Lem.~\ref{l-accemb} instead of
Lem.~\ref{l-poffset} and $R'_\delta(e')$ (Def.~\ref{R'delta}) instead of
$R_\delta(e)$.

\begin{lemma}\label{l-accemb}
In a $\delta$-splitting, $S'$, of a $\delta$-accurate approximate subdivision,
$S$, for every endpoint $p\in S$, curve $e\in S$, and curve $e'\in C(e)$, $p\in
L(e')$ implies $p<_y\overline{e_{+\delta}}$ and $p\in U(e')$ implies
$p>_y\overline{e_{-\delta}}$.
\end{lemma}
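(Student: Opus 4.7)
The plan is to mirror the proof of Lemma~\ref{l-poffset}, reducing to the original subdivision $S$ whenever possible and using the $\delta$-splitting distance bound to cover the case where the witness for $p\in L(e')$ is a refinement curve rather than a curve of $S$. I will argue only the $L(e')$ half; the $U(e')$ half is entirely symmetric, with upper and lower offsets and $L$, $U$ everywhere exchanged. First, as in Lemma~\ref{l-poffset}, I would observe that for any $\delta$-deformation $\tilde e$ of $e$, $\overline{\tilde e}<_y\overline{e_{+\delta}}$, so it suffices to exhibit a $\delta$-deformation of $e$ lying weakly above $p$ at $x=p_x$.

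Next I would unpack $p\in L(e')$: by Def.~\ref{def:LU} there is some $f''\preceq_y e'$ in $S'$ with $p$ an endpoint of $f''$. Writing $f''\in C(f)$ for the corresponding $f\in S$, the refinement condition of Def.~\ref{def:refappsub}, together with acyclicity of $\prec_y$ in $S$ (Def.~\ref{def:appsub}), forces $f\preceq_y e$ in $S$. The straightforward case is that $p$ is an original endpoint of some $h\in S$ with $h\preceq_y e$ (which includes the case where $p$ is itself an original endpoint of $f$): then $p\in L(e)$ in $S$, and Lemma~\ref{l-poffset} applied to the $\delta$-accurate $S$ concludes $p<_y\overline{e_{+\delta}}$ immediately.

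The main obstacle is the remaining subcase, where $p$ is a split point of $f$ and every original curve of $S$ having $p$ as an endpoint lies strictly $\succ_y e$. Here I would invoke the $\delta$-splitting bound $\dist(p,f)<\delta$ and use Lemma~\ref{l-distmon} to place $p$ strictly below $\overline{f_{+\delta}}$ at $p_x$, then transfer this to a bound against $\overline{e_{+\delta}}$ by taking the $\delta$-deformations $\tilde f<_y\tilde e$ supplied by $\delta$-accuracy of $S$ and arguing, via Lemma~\ref{l-offset}, that $\tilde e$ can be taken weakly above $p$ at $p_x$. The delicate point is exactly this transfer: $\overline{f_{+\delta}}$ is not pointwise below $\overline{e_{+\delta}}$ in general even when $f\prec_y e$, so a naive comparison of offsets is insufficient and one must route the bound through a concrete $\delta$-deformation $\tilde e$ rather than through $e$'s offset directly.
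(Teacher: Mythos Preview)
Your first case ($p\in L(e)$ in $S$) matches the paper, but the ``main obstacle'' case has a real gap, and the fix you sketch does not close it. From $\dist(p,f)<\delta$ you get $p<_y\overline{f_{+\delta}}$; from $\delta$-accuracy of $f\prec_y e$ you get $\tilde f<_y\tilde e$ with $\overline{\tilde e}<_y\overline{e_{+\delta}}$. But nothing places $p$ weakly below $\tilde e$ at $p_x$. Lemma~\ref{l-offset} only gives $f_{-\delta}<_y e_{+\delta}$, which is the wrong side; and $p$ can sit nearly $\delta$ \emph{above} $f$ while $\tilde f$ sits nearly $\delta$ \emph{below} $f$ at $p_x$, so $p$ may lie well above $\tilde f(p_x)$ and hence above $\tilde e(p_x)$. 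Concretely, let $f$ be a unit-slope segment, let $e$ be monotone with endpoints barely above those of $f$ but dipping far below $f$ in the middle (forcing $\delta$ moderately large), and split $f$ at a point $p$ sitting almost $\delta$ above $f$ near that dip: then $p$ can lie above $e_{+\delta}$. So the transfer from $f$ to $e$ cannot be carried out with $\delta$-accuracy and $\delta$-splitting alone.

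The paper takes a completely different route in this case. If no curve of $S$ with endpoint $p$ satisfies $h\preceq_y e$, then some such curve $g$ satisfies $g\succ_y e$, so $p\in U(e)$. The endpoint $p$ of $g$ is also an endpoint of some $g'\in C(g)$, and Def.~\ref{def:refappsub} forces $e'\prec_y g'$, hence $p\in U(e')$. Now $p\in L(e')\cap U(e')$, and Lemma~\ref{lem:p++} applied to $S'$ forces $p\in\{a',b'\}$. Then Def.~\ref{def:accemb} gives $\dist(p,e)<\delta$ \emph{directly for $e$}, whence $p<_y\overline{e_{+\delta}}$. The key step you are missing is this use of Lemma~\ref{lem:p++}, which requires $S'$ to be consistent; that hypothesis is not written into the lemma statement, but it is available in the only place the lemma is invoked (the proof of Theorem~\ref{t-refine}).
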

\begin{proof}
We prove the $p\in L(e')$ case.  Consider $e'[a',b']$.  If $p\in L(e)$,
$p<_y\overline{e_{+\delta}}$ by Lem.~\ref{l-poffset}.  If $p\in U(e)$, $p$ is an
endpoint of $f\in S$ with $e\prec_y f$.  Thus, $p$ is an endpoint of a curve,
$f'\in C(f)$, but by Def.~\ref{def:refappsub}, $e'\preceq_y f'$, so $p\in
U(e')$.  By Lem.~\ref{lem:p++}, $p\in\{a',b'\}$ hence $\dist(p,e)<\delta$ by
Def.~\ref{def:accemb} and so $p<_y\overline{e_{+\delta}}$.
\end{proof}

\begin{definition}[$e'_{-\delta}$, $e'_{+\delta}$, $R'_\delta(e')$]\label{R'delta}
For increasing $e'[a',b']\in C(e)$, define $e'_{-\delta}$ and $e'_{+\delta}$ to
be $e_{-\delta}$ and $e_{+\delta}$ restricted to $(a_x,b_x)$ if $e$ is
increasing; otherwise $e'_{-\delta}=e'_{+\delta}=\emptyset$.  Similarly for
decreasing $e'$.  Define $R'_\delta(e')=R(e')$ for horizontal or vertical $e'$.
For increasing $e'$, $R'_\delta(e')$ is $R(e')$ minus ${f'}_{-\delta}^-$ for
increasing $f'\preceq_ye'$ and minus ${g'}_{+\delta}^+$ for increasing
$g'\succeq_ye'$, and similarly for decreasing $e'$.
\end{definition}

\begin{theorem}\label{t-refine}
A consistent $\delta$-splitting of a $\delta$-accurate approximate subdivision
is a $\delta$-refinement.
\end{theorem}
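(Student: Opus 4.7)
The plan is to follow the proof of Theorem~\ref{t-realize} verbatim at the level of the refined subdivision $S'$, with the two substitutions suggested by the authors: $R'_\delta(e')$ from Definition~\ref{R'delta} replaces $R_\delta(e)$, and Lemma~\ref{l-accemb} replaces Lemma~\ref{l-poffset}. First I would prove an analogue of Lemma~\ref{lem:Rdeltapath}, namely that each $R'_\delta(e')$ is monotone with respect to $e'$. For increasing $e'[a',b']\in C(e)$ and increasing $f'\preceq_y e'$ with $f'\in C(f)$, Lemma~\ref{l-accemb} gives $q>_y\overline{f_{-\delta}}$ for $q\in U(f')$, hence ${f'}_{-\delta}<_y U(f')^{-+}$ on the $x$-interval of $f'$, which forces ${f'}_{-\delta}\subset R(f')^-$ and, since $R(f')<_y u(e')$, also ${f'}_{-\delta}<_y u(e')$. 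Disjointness of the removed lower and upper sets in $R(e')$ reduces to Lemma~\ref{l-offset} applied to the underlying curves $f,g$, using that ${f'}_{\pm\delta}\subset f_{\pm\delta}$ whenever the relevant monotonicity types agree, and Lemma~\ref{lem:Rordered} applied to $S'$ keeps the endpoints $a',b'$ from being absorbed into the interior of $R(e')$.

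Next I would prove the analogue of Lemma~\ref{lem:Rdeltaord}: if $f'\prec_y e'$ in $S'$ are both increasing (or both decreasing), then $R'_\delta(f')<_y u'_\delta(e')$, where $u'_\delta(e')$ is the upper boundary of $R'_\delta(e')$. The argument copies the proof of Lemma~\ref{lem:Rdeltaord}, invoking Lemma~\ref{l-trans} on $\prec_y$ in $S'$ to conclude that any $g'\succeq_y e'$ overlapping a point of $R'_\delta(f')$ in $x$ satisfies $g'\succ_y f'$, so ${g'}_{+\delta}^+$ has been subtracted from $R'_\delta(f')$. With monotonicity and orderedness of the assignment $e'\mapsto R'_\delta(e')$ in hand, Lemma~\ref{lem:ordered} produces a realization $r$ of $S'$ with $r(e')\subset R'_\delta(e')\subset R(e')$.

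For $\delta$-closeness of $r(e')$ to the original curve $e$, I would observe that when $e$ and $e'$ share monotonicity type, taking $f'=e'$ and $g'=e'$ in Definition~\ref{R'delta} removes the restrictions of $e_{-\delta}^-$ and $e_{+\delta}^+$ to the $x$-range of $e'$, so every point of $R'_\delta(e')$ has distance less than $\delta$ from $e$. The main obstacle will be the case where $e$ and $e'$ have different monotonicity types (for instance, $e$ is increasing but the split chord $e'\in C(e)$ is horizontal because two split points share a $y$-coordinate). In that case Definition~\ref{R'delta} sets $e'_{\pm\delta}=\emptyset$ and no offsets of $e$ are directly subtracted on account of $e'$ itself; $\delta$-closeness of $r(e')$ to $e$ must instead be derived from the $\delta$-splitting hypothesis on the endpoints $a',b'$ combined with the $y$-monotonicity of the $x$-monotone curve $r(e')$ and Lemma~\ref{l-distmon}, which says that $\dist(\cdot,e)$ is monotone in each coordinate on each side of $e$. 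Propagating this auxiliary argument through the four monotonicity-type combinations in the analogues of Lemmas~\ref{lem:Rdeltapath} and~\ref{lem:Rdeltaord} is the principal bookkeeping burden of the proof, parallel to the case analysis in Lemma~\ref{lem:p++}.
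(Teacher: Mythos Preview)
Your proposal is correct and follows essentially the same route as the paper: replace $R_\delta$ by $R'_\delta$, substitute Lemma~\ref{l-accemb} for Lemma~\ref{l-poffset} in the analogues of Lemmas~\ref{lem:Rdeltapath} and~\ref{lem:Rdeltaord}, obtain the realization via Lemma~\ref{lem:ordered}, and then treat the mismatched-monotonicity case separately using the $\delta$-splitting hypothesis on $a',b'$. The paper compresses your intermediate lemmas into the single phrase ``modify the proof of Thm.~\ref{t-realize},'' and for the mismatched case (say $e$ increasing, $e'$ decreasing) it argues via the quadrant sets ${a'}^{+-}$ and ${b'}^{-+}$---observing that $r(e')\subset {a'}^{+-}\cap{b'}^{-+}$ lies between $e_{-\delta}$ and $e_{+\delta}$ because those offsets are increasing---which is exactly your Lemma~\ref{l-distmon} argument recast through Lemma~\ref{l-offsetmon}. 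One small simplification: the monotonicity mismatch between $e$ and $e'$ does \emph{not} need to be threaded through the analogues of Lemmas~\ref{lem:Rdeltapath} and~\ref{lem:Rdeltaord}, since those lemmas live entirely in $S'$ and depend only on the types of $e',f',g'$; the mismatch enters only at the final $\delta$-closeness step, so the ``principal bookkeeping burden'' you anticipate is lighter than you suggest.
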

\begin{proof}
Modify the proof of Thm.~\ref{t-realize} by replacing Lem.~\ref{l-poffset} with
Lem.~\ref{l-accemb} to select a realization $r(e')\subset R'_\delta(e')$ for
each curve $e'[a',b']$.  If $e'$ and $e$ are both increasing or both decreasing,
$R'_\delta(e')$ lies between $e_{-\delta}$ and $e_{+\delta}$ by construction
hence $r(e')$ lies within $\delta$ of $e$.  If $e$ is increasing but $e'$
decreasing, we know $\dist(a',e)<\delta$ hence $a'<_ye_{+\delta}$ hence
${a'}^{+-}<_ye_{+\delta}$.  Similarly, ${b'}^{-+}>_ye_{-\delta}$.  Hence,
$r(e')\subset {a'}^{+-}\cap{b'}^{-+}$ lies within $\delta$ of $e$.
\end{proof}

\begin{lemma}\label{l-refine}
The refinement algorithm output is a $\delta$-splitting.
\end{lemma}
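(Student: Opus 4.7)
The plan is to verify, for each of the algorithm's four steps, that whenever the algorithm splits a refinement piece of an input curve $e$ at a point $p$, $\dist(p,e)<\delta$. Because distances are measured to input ancestors and $\prec_y$ is inherited by refinement pieces, the input's $\delta$-accuracy propagates to every intermediate approximate subdivision, so Lem~\ref{l-poffset} and Lem~\ref{l-offsetmon} apply throughout. The central tool is a sandwich principle: a point $p$ sandwiched between $\overline{c_{-\delta}}$ and $\overline{c_{+\delta}}$ at an x-coordinate inside $c$'s x-range satisfies $\dist(p,c)<\delta$; the task in each step is to produce witnesses from $L(c)$ and $U(c)$ that enforce this sandwich, using Lem~\ref{l-poffset} to relate witness positions to the offsets.

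In Step~1, the split of a passing curve $g$ at $p=\ymax(G_{m-1})$ occurs exactly when $q=\ymin(G_m)$ satisfies $q_y\leq p_y$ at the shared event x-coordinate, and group membership immediately gives $p\in L(g)$ and $q\in U(g)$, so the sandwich applies directly at $p_x$. Step~4 is immediate since the split point lies on the open vertical segment between the endpoints, yielding $\dist(p,e)=0$. In Step~3, when $e$ is split at $r=\val(e)$, say $r\in U(e)$, the constraint-3 violation being resolved supplies two witnesses $c,d\in L(e)$ straddling $r$ in x with $c_y,d_y\geq r_y$; combining Lem~\ref{l-poffset} at $c$ and $d$ with the monotonicity of $e_{+\delta}$ from Lem~\ref{l-offsetmon} yields the missing upper bound on $r_y$ at $r_x$, completing the sandwich.

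Step~2 is the delicate case. The split point $b=\m(f_0)$ is the endpoint of a curve $g_0\succeq_y f_0$ that is safe for $f_0$, meaning $\dist(b,F_0)\leq\dist(a_0,G_0)$ in terms of input ancestors. I plan to bound $\dist(a_0,G_0)<\delta$ using $\delta$-accuracy of $(F_0,G_0)$ at the common right-endpoint x-coordinate (the limit argument on any $\delta$-deformation witness together with the Hausdorff bound), so safety yields $\dist(b,F_0)<\delta$. To transfer the bound from $F_0$ to the input ancestor $E$ of $e$, I would combine the maximality of $b$ in $\max_{f\preceq_y e}\m(f)$ with Lem~\ref{l-offset}: the $\delta$-offsets of $\prec_y$-related curves do not interleave, so a point lying within the $\delta$-band of $F_0$ on the correct side of $e$ must also lie within the $\delta$-band of $E$ at the event x-coordinate.

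The principal obstacle is Step~2, because safety bounds the error against $F_0$ rather than against $E$, and closing this gap requires the offset-nesting inequality (Lem~\ref{l-offset}) combined with the coincident-endpoint geometry of incoming (or, symmetrically, outgoing) curves at a single event x-coordinate. The other three steps reduce to identifying the correct pair of witnesses from the algorithmic state and applying the sandwich principle.
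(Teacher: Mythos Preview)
Steps~1, 3, and~4 of your plan match the paper's argument closely (the paper uses a single witness plus monotonicity of one offset in Step~3 rather than two straddling witnesses, but this is cosmetic).

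Step~2 has a real gap. Safety gives $\dist(b,F_0)<\delta$, but your proposed transfer to $E$ via Lem.~\ref{l-offset} cannot succeed: that lemma yields only $F_{0,-\delta}<_y E_{+\delta}$, which says nothing about whether a point in the $\delta$-band of $F_0$ lies below $E_{+\delta}$. If $F_0$ and $E$ are vertically far apart (both incoming at the same $x$, no constraint forbids this), a point near $F_0$'s right endpoint is simply not near $E$, and no amount of ``coincident-endpoint geometry'' changes that. Maximality of the split point over $\m(f)$ for $f\preceq_y e$ does give $\m(f_0)\geq_y \m(e)$, and since $\m(e)$ is safe for $e$ this yields the lower bound $\m(f_0)>_y E_{-\delta}$; but nothing in your plan supplies the upper bound $\m(f_0)<_y E_{+\delta}$.

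The paper closes this with a dual construction: define $\M(e)$ as the \emph{maximum}, over $f\preceq_y e$, of right endpoints safe for $e$. One then shows $\m(f_0)\leq_y\M(e)$ by a dichotomy on the pair $e,f_0$: letting $b'$ be the right endpoint of $f_0$ and $a$ that of $e$, either $b'$ is safe for $e$ (then $\m(f_0)\leq_y b'\leq_y\M(e)$) or $a$ is safe for $f_0$ (then $a$ is a candidate for $\m(f_0)$, so $\m(f_0)\leq_y a\leq_y\M(e)$). Now the split point is sandwiched at a single $x$ between $\m(e)$ and $\M(e)$, both safe for $e$, both within $\delta$ of $E$ by Lem.~\ref{l-safedist}, and the sandwich finishes via Lem.~\ref{l-distmon}. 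The idea you are missing is this second safe bound on the \emph{upper} side, obtained directly for $e$ rather than by transfer from $F_0$.

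A minor side point: your unconditional claim $\dist(a_0,G_0)<\delta$ is too strong; $\delta$-accuracy only gives the disjunction $\dist(a_0,G_0)<\delta$ or $\dist(b,F_0)<\delta$ (this is Lem.~\ref{l-safedist}). Combined with safety it still yields $\dist(b,F_0)<\delta$, so your conclusion there survives.
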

\begin{proof}
Let the algorithm split a curve, $e$, at an endpoint, $p$.  We show that
$\dist(p,e)<\delta$.

Step~1: Necessarily, $p\in L(e)$, because $p$ belongs to a group below $e$, and
there is a violation of constraint~1.  Hence, there exists $q\in U(e)$ with
$p_x=q_x$ and $p_y\geq q_y$.  By Lem.~\ref{l-poffset}, $p<_ye_{+\delta}$ and
$q>_ye_{-\delta}$.  Since $p_y\geq q_y$, $p>_ye_{-\delta}$ as well.

Step~2: For an incoming curve, $e[v,a]$, define $\M(e)$ analogously to $\m(e)$
as the maximum safe endpoint over $f[w,b]\preceq_ye$.  Since $e\preceq_ye$ and
$\dist(e,a)\leq\dist(e,a)$, $\m(e)\leq_ya\leq_y\M(e)$.  Curve $e$ is split at
the maximum $\m(f)$ with $f\preceq_ye$.  Let $f[w,b]$ achieve this maximum.
Since $e\preceq_ye$, $\m(e)\leq_y\m(f)$.  If $b$ is safe for $e$
($\dist(b,e)\leq\dist(f,a)$), $\m(f)\leq_yb$ from above and $b\leq_y\M(e)$ by
definition of $\M(e)$; otherwise, if $a$ is safe for $f$, $\m(f)\leq_ya$ by
definition of $\m(e)$ and $a\leq_y\M(e)$ from above.  Hence $\m(f)\leq_yM(e)$.
Since both $\m(e)$ and $\M(e)$ are safe for $e$,
$\dist(\m(e),e),\dist(\M(e),e)<\delta$ by Lem.~\ref{l-safedist} below.  The
error bound neglects the distance computation error, $\gamma$, which is
typically negligible with respect to $\delta$.  Since we evaluate $\dist$
approximately, we might pick the endpoint that is unsafe by a little bit.
Strictly speaking, we should replace $\delta$ with $\delta+2\gamma$.

Step~3: We consider constraint~3 with $e$ increasing; the other cases are
similar.  Since $p\in L(e)$ and $d\in U(e)$, $p<_ye_{+\delta}$, and
$d>_ye_{-\delta}$ by Lem.~\ref{l-poffset}.  Also, $p>_ye_{-\delta}$ because
$p_x<d_x$, $p_y\geq d_y$, and $e_{-\delta}$ is increasing by
Lem.~\ref{l-offsetmon}.

Step~4: $\dist(p,e)=0$.
\end{proof}

\begin{lemma}\label{l-safedist}
If $e[v,a]\prec_yf[w,b]$ violate constraint~2 at $a_x=b_x$, $\dist(a,f)<\delta$
or $\dist(b,e)<\delta$.
\end{lemma}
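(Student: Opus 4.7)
The plan is to exploit the $\delta$-accuracy hypothesis. Since $e \prec_y f$, Definition~\ref{def:delta-accurate} furnishes $\delta$-deformations $\tilde{e}$ and $\tilde{f}$ with $\tilde{e} <_y \tilde{f}$, sharing the endpoints $v,a$ and $w,b$ respectively. I would analyze the structure these deformations must carry as they approach the common $x$-coordinate $a_x = b_x$, given that a violation of constraint 2 at the matching right endpoints is precisely the inequality $a_y > b_y$.

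First I introduce $\alpha = \lim_{x \to a_x^-} \tilde{e}(x)$ and $\beta = \lim_{x \to b_x^-} \tilde{f}(x)$; these limits exist by $x$-monotonicity and continuity. Whenever $\alpha \neq a_y$, the closure $\overline{\tilde{e}}$ contains the vertical segment at $x = a_x$ joining $(a_x,\alpha)$ to $a$, and by the Hausdorff bound every point of this segment lies within $\delta$ of $e$; the analogous observation applies to $\overline{\tilde{f}}$ with the segment from $(a_x,\beta)$ to $b$. Taking the limit of $\tilde{e}(x) < \tilde{f}(x)$ as $x$ approaches $a_x$ from the left forces $\alpha \leq \beta$.

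The argument then splits into three cases. If $\alpha \leq b_y$, then $b_y$ lies within the $y$-range $[\alpha,a_y]$ of $\overline{\tilde{e}}$'s vertical segment, so $b \in \overline{\tilde{e}}$ and hence $\dist(b,e) < \delta$. Symmetrically, if $\beta \geq a_y$, then $a \in \overline{\tilde{f}}$ and $\dist(a,f) < \delta$. The main obstacle is the remaining case $b_y < \alpha \leq \beta < a_y$, in which both deformations must carry nontrivial vertical pieces at $x = a_x$: $\tilde{e}$'s piece covers the $y$-interval $[\alpha, a_y)$ and $\tilde{f}$'s covers $(b_y, \beta]$. I would argue that the point $(a_x, \alpha)$ is an interior point of $\tilde{e}$ (the transition where $\tilde{e}$ enters its vertical piece) and also lies on $\tilde{f}$, since $\alpha \in (b_y, \beta]$. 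This common point contradicts the strict ordering $\tilde{e} <_y \tilde{f}$ at $x = a_x$, so the third case cannot occur, and one of the first two yields the conclusion.
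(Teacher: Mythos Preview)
Your proof is correct and follows essentially the same approach as the paper: invoke $\delta$-accuracy to obtain $\tilde e<_y\tilde f$, then argue from $a_y>b_y$ that one of the deformations must carry a vertical segment at $x=a_x$ containing the other curve's endpoint, whence the Hausdorff bound gives the distance estimate. The paper asserts the vertical-segment dichotomy in a single sentence; your three-case analysis with the limits $\alpha,\beta$ is precisely the justification it omits. One small technical caution: since $\tilde e$ is only $x$-monotone it need not be single-valued, so writing $\tilde e(x)$ is an abuse of notation---but defining $\alpha$ as the $y$-coordinate of the first point of the parametrized curve with $x=a_x$ (equivalently, $\gamma_y(t^*)$ for $t^*=\inf\{t:\gamma_x(t)=a_x\}$) makes your limit argument rigorous without further change.
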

\begin{proof}
Let $\tilde{e}$ and $\tilde{f}$ be curves that show $\delta$-accuracy.  Since
$\tilde{e}<_y\tilde{f}$ and $a_y>b_y$, either $\tilde{e}$ has a vertical segment
at $x=a_x$ that contains $b$ or $\tilde{f}$ has such a segment that contains
$a$.  The result follows because $\tilde{e}$ and $\tilde{f}$ are $\delta$-close
to $e$ and $f$.
\end{proof}

\begin{theorem}\label{t-delta}
The refinement algorithm computes a $\delta$-refinement.
\end{theorem}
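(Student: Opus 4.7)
The plan is to assemble Theorem~\ref{t-delta} directly from three results already in hand, with no new machinery required. First I would fix a $\delta$-accurate approximate subdivision $S$ as input (this is exactly the precondition on the refinement algorithm stated in the introduction, where $\delta$ comes from the accuracy of the underlying geometric primitives). Let $S'$ denote the algorithm's output.

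Next I would apply the two properties we have already proved about $S'$. By Thm.~\ref{t-refinealg}, $S'$ is a consistent refinement of $S$; in particular $S'$ is consistent and it satisfies Def.~\ref{def:refappsub}. By Lem.~\ref{l-refine}, $S'$ is a $\delta$-splitting of $S$ in the sense of Def.~\ref{def:accemb}. So $S'$ is simultaneously a refinement, a $\delta$-splitting, and consistent, and $S$ is $\delta$-accurate.

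Finally I would invoke Thm.~\ref{t-refine}, which is precisely the statement that a consistent $\delta$-splitting of a $\delta$-accurate approximate subdivision is a $\delta$-refinement. Applying it to $S$ and $S'$ concludes the proof.

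The only thing to be slightly careful about is bookkeeping: Thm.~\ref{t-refine} hypothesizes that the $\delta$-splitting is of the $\delta$-accurate approximate subdivision, so I would explicitly note that the $S'$ produced by the algorithm is a $\delta$-splitting \emph{of the input} $S$ (both properties are established relative to $S$ by Thm.~\ref{t-refinealg} and Lem.~\ref{l-refine}), so the hypotheses of Thm.~\ref{t-refine} match. There is no real obstacle here; the theorem is essentially a composition statement that packages the correctness result (Thm.~\ref{t-refinealg}), the splitting-accuracy result (Lem.~\ref{l-refine}), and the error-analysis bridge (Thm.~\ref{t-refine}) into the final user-facing guarantee.
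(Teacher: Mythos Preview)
Your proposal is correct and matches the paper's own proof exactly: the paper's proof of Thm.~\ref{t-delta} is the single line ``Follows from Thm.~\ref{t-refinealg}, Thm.~\ref{t-refine}, and Lem.~\ref{l-refine},'' which is precisely the composition you describe.
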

\begin{proof}
Follows from Thm.~\ref{t-refinealg}, Thm.~\ref{t-refine}, and Lem.~\ref{l-refine}.
\end{proof}

\section{Set Operations}\label{s-set}  

We now turn to set operations.  We compute the overlay approximate subdivision
with our sweep algorithm\cite{sacks-milenkovic06} then refine.  The sweep
algorithm computes the curve crossing points approximately and preserves the
curve endpoints, so the output endpoints are the union of the input endpoints
and the approximate crossing points.  The overlay $\prec_y$ determines the
membership of the overlay faces in the input faces via a standard
algorithm.\cite{deberg08}

We modify the sweep algorithm to use red/blue sweep list insertion.  When
inserting a curve from the first input (red), find the two nearest red curves,
using the red $\prec_y$, then compute the order of the new curve with respect to
the intervening blue curves.  Likewise for blue curve insertion.  We avoid
computation of red/red and blue/blue orders.  We also ensure that the overlay
$\prec_y$ agrees with the red and blue $\prec_y$, so overlay face membership is
computed correctly.

\section{Euclidean Transformations}\label{s-transform}

We transform an approximate subdivision by transforming each curve, computing
the transformed $\prec_y$, and invoking the refinement algorithm.  The
computational complexity is dominated by the refinement algorithm.

Translating by $t$ maps $e$ to $\{p+t|p\in e\}$ and drops each curve whose
endpoints translate to the same point due to rounding.  It does not change
$\prec_y$ or $\delta$-accuracy. Scaling ($y$) by $w$ maps $e$ to
$\{(p_x,wp_y)|p\in e\}$ and drops the curves with equal endpoints.  It does not
change $\prec_y$, but scales $\delta$-accuracy by $w$.  Skewing by $w$ maps $e$
to $\{(p_x,wp_x+p_y)|p\in e\}$ and splits the skewed curves at their $y$ turning
points.  It increases the distance between two points by at most a factor of
$z=\sqrt{w^2+1}$.  Skewing a deformation yields a deformation.  Hence, skewing
scales $\delta$-accuracy by at most $z$.

\paragraph{Rotation}

Rotating by $90^\circ$ maps $e$ to $\{(-p_y,p_x)|p\in e\}$.  The rotated curves
are monotone because the input curves are.  The rotated $\prec_y$ is the
horizontal order of the input curves.  Let the rotations of $e$ and $f$ be $e^r$
and $f^r$: $e^r\prec_y f^r$ iff $e$ is left of $f$.  It is $\delta$-accurate
because a deformation that places two curves in $\prec_y$ order also places them
in horizontal order.

Compute the horizontal order via a horizontal line sweep whose events are
insertion and removal of each curve at its lower and upper endpoints.  Removal is
standard.  A curve, $f$, is inserted by computing the order of its lower
endpoint, $v$, with respect to $\bigo(\log n)$ curves in the sweep list.  Let
$e[t,h]$ be such a curve.
\begin{enumerate}
\item If $v_x<t_x$, $v$ is left of $e$.
\item If $f\prec_y e$ and $e$ is increasing, $v$ is right of $e$;
      if $e$ is decreasing, $v$ is left of $e$.
\item If $e\prec_y f$ and $e$ is increasing, $v$ is left of $e$;
      if $e$ is decreasing, $v$ is right of $e$.
\item If $h_x<v_x$, $v$ is right of $e$.
\end{enumerate}
Fig.~\ref{f-hsweep} shows points $v_1,\ldots,v_4$ that satisfy the four rules.
			
\begin{figure}[tbp]
\centering
\begin{tabular}{cc}
\includegraphics{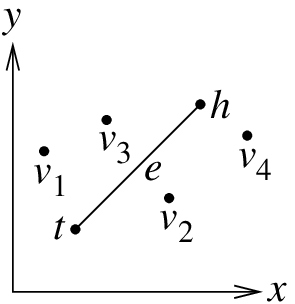} & \includegraphics{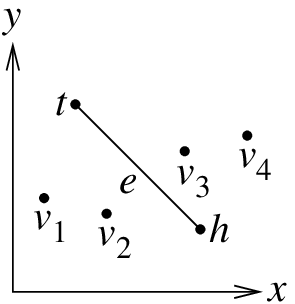}\\
(a) & (b)
\end{tabular}
\caption{Horizontal endpoint/$\prec_y$ for increasing (a) and decreasing (b)
curves.}
\label{f-hsweep}
\end{figure}

Rotating by $-90^\circ$ maps $e$ to $(e_y,-e_x)$.  The rotated $\prec_y$ is the
opposite of the horizontal order of the input curves.

Rotation by $\theta$ is performed in two steps.  Step~1 rotates the input by
$90^\circ$ and replaces $\theta$ with $\theta-90^\circ$ for
$45^\circ<\theta<135$, rotates the input by $-90^\circ$ and replaces $\theta$
with $\theta+90^\circ$ for $-135^\circ<\theta<-45^\circ$, and is omitted for
other $\theta$ values.  Let $s=\sin\theta$ and $c=\cos\theta$.  Step~2 scales by
$c$, skews by $s$, rotates by $90^\circ$, scales by $1/c$, skews by $s/c$, and
rotates by $-90^\circ$.  The rotation formula is verified by direct calculation.
The $\delta$-accuracy scales by
\begin{displaymath}
1\times c \times \sqrt{1+s^2}\times 1\times \frac{1}{c}\times
\sqrt{1+\frac{s^2}{c^2}}\times1=\sqrt{\frac{1+s^2}{c^2}}.
\end{displaymath}
This factor is bounded by $\sqrt{3}$ because $|\theta|\leq 45^\circ$ after
step~1.

\section{Validation}\label{s-validate}

We have implemented set operations and Euclidean transformations for approximate
subdivisions whose curves are approximate algebraic-curve segments.  A curve is
represented by its endpoints, $t$ and $h$ with $t_x\leq h_x$, and by a monotone
branch, $b$, of an algebraic curve.  The curve is the portion of $b$ in the box
with corners $t$ and $h$, plus two line segments that link $t$ and $h$ to it
(Fig.~\ref{f-curve}).  If $t_y<h_y$ and $b$ is decreasing or $h_y<t_y$ and $b$
is increasing, the curve is the line segment $th$.  When a curve is split, the
new curves inherit its branch.  Our numerical solver\cite{sacks-milenkovic06}
computes points on curves by solving univariate polynomials and computes curve
crossing points by solving pairs of bivariate polynomials.

\begin{figure}[tbp]
\centering
\begin{tabular}{ccc}
\includegraphics{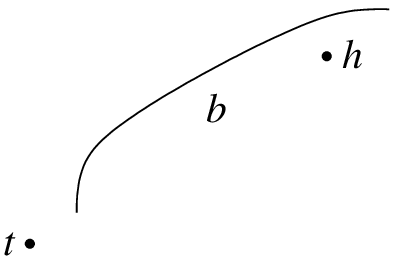} &
\includegraphics{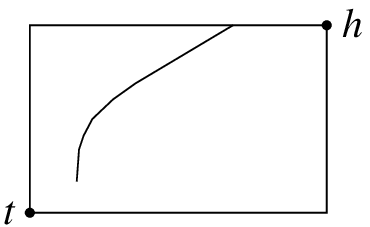} &
\includegraphics{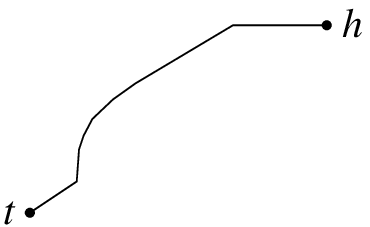}\\
(a) & (b) & (c)
\end{tabular}
\caption{Curve representation (a), portion of $b$ in box (b), and curve (c).}
\label{f-curve}
\end{figure}

We validate the algorithms on sequences of set operations alternating with
Euclidean transformations.  We test the four shapes shown in
Fig.~\ref{f-validate}.  The shapes are drawn in the box $[-1,1]\times[-1,1]$.
The curves are line segments in shape~1, line and circle segments in shape~2,
degree-3 curves in shape~3, and degree-6 curves in shape~4.  We initialize $s_0$
to a shape then set $s_{i+1}$ to the symmetric difference of $s_i$ with a
Euclidean transform of $s_i$ for six iterations (e--h).  The $s_0$ transform is
translation by $(0.0123, 0.0321)$ and rotation by $0.765$ radians.  These values
are halved after each iteration.

\begin{figure}[tbp]
\centering
\begin{tabular}{cccc}
\includegraphics[width=1in]{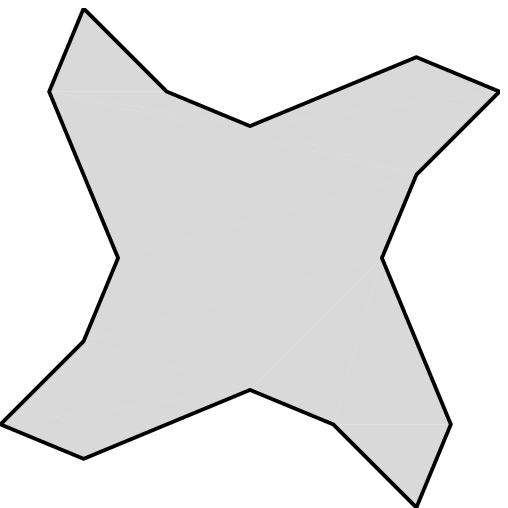} &
\includegraphics[width=1in]{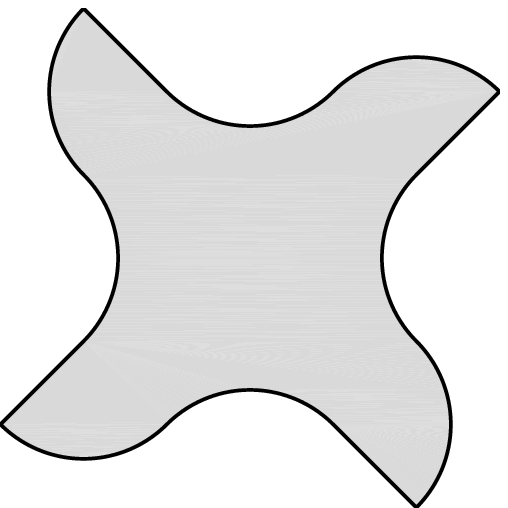} &
\includegraphics[width=1in]{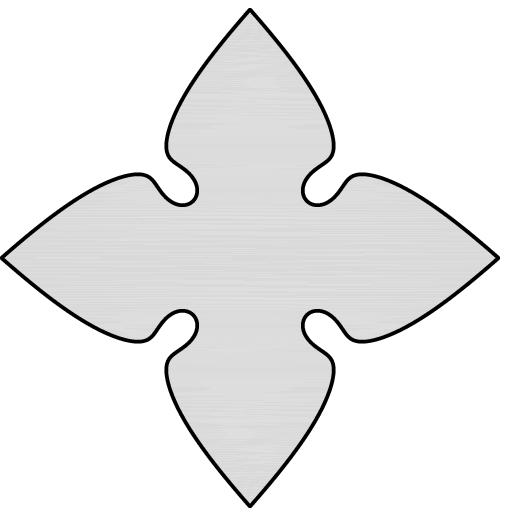} &
\includegraphics[width=1in]{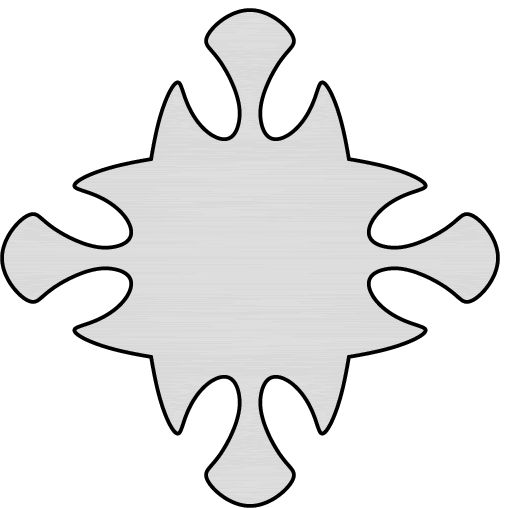}\\
(a) & (b) & (c) & (d)\\
\includegraphics[width=1in]{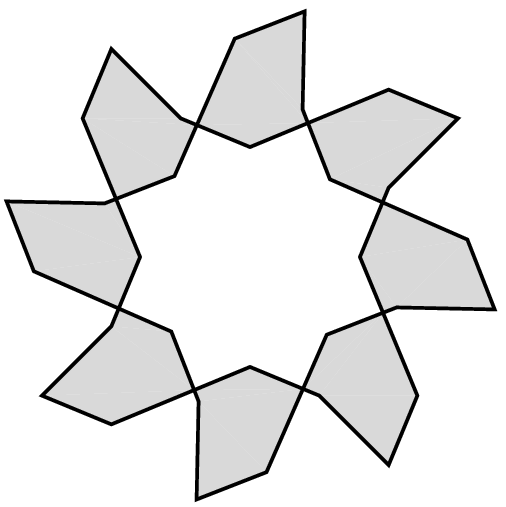} &
\includegraphics[width=1in]{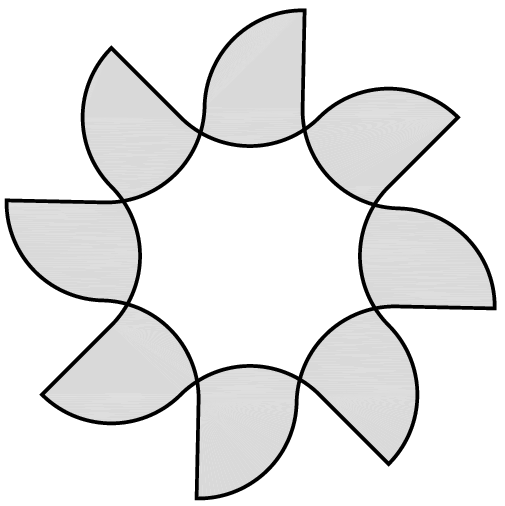} &
\includegraphics[width=1in]{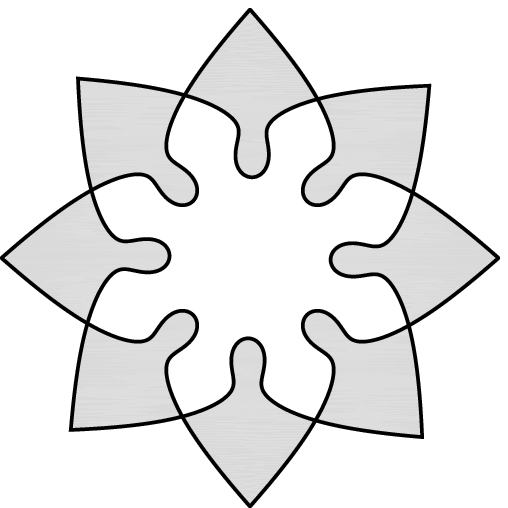} &
\includegraphics[width=1in]{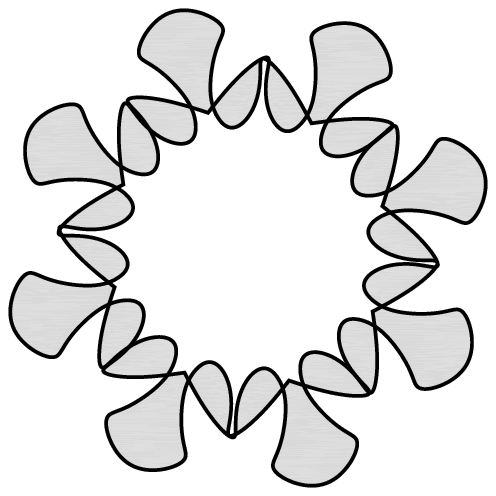}\\
(e) & (f) & (g) & h
\end{tabular}
\caption{Validation: shapes 1--4 (a--d) and their $s_1$ regions (e--h).}
\label{f-validate}
\end{figure}

The initial tests have high combinatorial complexity and no degeneracy.  The
program encounters no constraint violations, generates accurate outputs, and is
fast.  Table~\ref{t-res} summarizes the results.  The input size, $n$, is the
number of input curves.  The output size, $e$, is the number of curves in the
final, $s_6$, subdivision.  The output error, $\delta$, is the larger of two
error metrics both of which are the maximum over the six iterations.  The first
metric is the distance between the monotone branch of a curve and one of its
endpoints.  The second metric is the distance between a curve, $e$, and an
endpoint, $v$, with $v\in L(e)$ and $v>_ye$, or $v\in U(e)$ and $v<_ye$.  The
running time, $t$, is for one core of an Intel Core~2 Duo.  As the algebraic
degree of the input increases from one to six, the constraint enforcement
percentage, $c$, decreases rapidly and the numerical solver percentage, $s$,
increases rapidly.

\begin{table}[tbp]
\caption{Initial tests: $n$ is the input size, $e$ is the output size, $\delta$
  is the output error, $t$ is the running time in seconds, $c$ is the constraint
  enforcement percentage, and $s$ is the solver percentage.}\label{t-res}
\centering
\begin{tabular}{r|cccccc}
shape & $n$ & $e$    & $\delta$           & $t$  & $c$ & $s$\\ \hline
1     & 16  & 67000  & $3\times10^{-16}$  & 1.1  & 23  & 20\\
2     & 20  & 70000  & $6\times10^{-16}$  & 3.6  & 10  & 70\\
3     & 40  & 85000  & $8\times10^{-15}$  & 4.5  & 8   & 67\\
4     & 80  & 277000 & $4\times10^{-12}$  & 127  & 1   & 91
\end{tabular}
\end{table}

We repeated the tests with the $s_0$ Euclidean transformation divided by $2^j$
for $j=1,\ldots,50$.  As the transformation shrinks, the six iterates of the
input shape converge, so the symmetric differences approach degeneracy.
Fig.~\ref{f-res} plots the results for the initial, $j=0$, test and for the 50
subsequent tests.  The output error and the solver percentage are omitted
because they are essentially constant.

\begin{figure}[tbp]
\centering
\begin{tabular}{cc}
\includegraphics{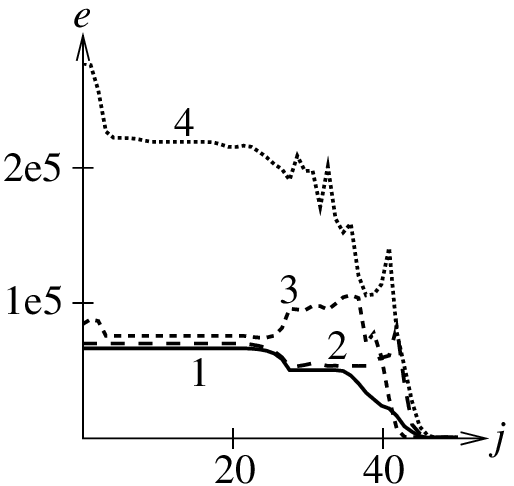} & \includegraphics{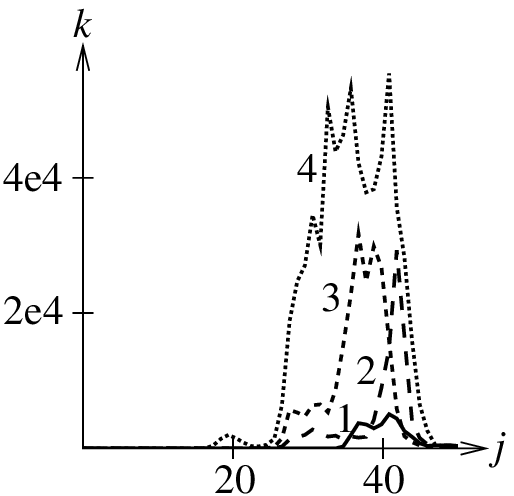}\\
\includegraphics{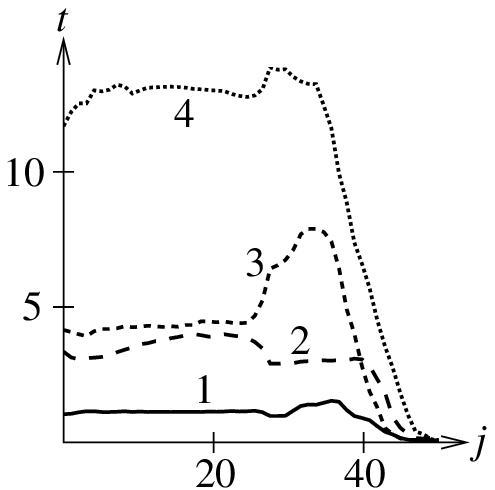} & \includegraphics{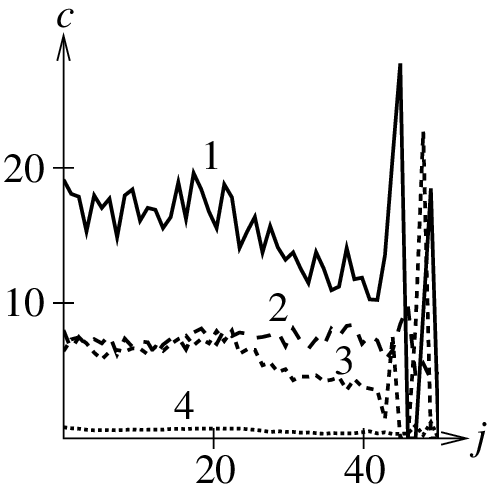}\\
\end{tabular}
\caption{All tests: $j$ is the iteration, $e$ is the output size, $k$ is the
  number of constraint violations, $t$ is the running time, and $c$ is the
  constraint enforcement percentage.}\label{f-res}
\end{figure}

The output size decreases because each iterate has fewer intersections with the
previous iterate.  The number of constraint violations is zero for $j<15$,
increases sharply from $j=25$ to $j=40$, then decreases to nearly zero.  We
expect that the curves become nearly identical rather than intersecting.  The
shape~3 running time increases from $j=24$, peaks at double the $j=0$ time at
$j=32$, and decreases to nearly zero.  The running time for the other shapes
increases slightly if at all then decreases to nearly zero.  The running time
for shape~4 is divided by ten, so it can be displayed with the others.  The
percentage spent on constraint enforcement decreases, except for a peak in
$40<j<50$ where the output is small relative to the number of violations.  The
percentage also decreases as the input degree increases: constraint enforcement,
which is combinatorial, is independent of degree, whereas curve intersection,
which is numerical, is polynomial in degree.

\section{Conclusions}\label{s-conclude}

This paper presents approximate planar shape-manipulation algorithms.  The
validation results show that the algorithms are accurate and fast on both
generic and degenerate inputs.  Constraint enforcement takes a small fraction of
the running time, whereas numerical computation takes most of the time.  The
results support our thesis that inconsistency sensitive algorithms are efficient
and accurate.

\section*{Acknowledgments}

Research supported by NSF grants IIS-0082339, CCF-0306214, and CCF-0304955.

\bibliographystyle{unsrt}
\bibliography{bib,bib2}
\end{document}